\documentclass[11pt,letterpaper]{article}

\usepackage[linesnumbered,ruled,vlined]{algorithm2e}

\usepackage{tikz}
\usetikzlibrary{positioning}

\usepackage{amsthm,amsfonts}

\usepackage{latexsym,nicefrac,bbm}
\usepackage{xspace}
\usepackage[top=1in, bottom=1in, left=1in, right=1in]{geometry}
\usepackage{hyperref} 
\usepackage{multicol}
\usepackage{sidecap}
\usepackage{tcolorbox}

\usepackage{rotating}

\usepackage{changepage}

\usepackage{siunitx,multirow, booktabs, longtable,rotating}

\usepackage{xcolor}

\usepackage{MnSymbol}

\usepackage{float}
\renewcommand{\epsilon}{\varepsilon}

\newcommand{\argmin}{\operatornamewithlimits{argmin}}

\newcommand{\coloneqq}{:=}

\newtheorem{theorem}{Theorem}[section]
\newtheorem{lemma}[theorem]{Lemma}
\newtheorem{definition}[theorem]{Definition}
\newtheorem{corollary}[theorem]{Corollary}
\newtheorem{proposition}[theorem]{Proposition}
\newtheorem{remark}[theorem]{Remark}
\newtheorem{example}[theorem]{Example}

\newcommand{\eps}{\varepsilon}

\usepackage{verbatim}

\usepackage{algpseudocode}

\SetKwInOut{KwHyperparameters}{Hyperparameters \hspace{-5.8mm}}

\usepackage{mhequ}
\def \be{\begin{equs}}
\def \ee{\end{equs}}

\newcommand{\ai}{i}
\newcommand{\aj}{j}
\newcommand{\ak}{k}
\newcommand{\aell}{\ell}

\usepackage{times}

\title{\bf Greedy Adversarial Equilibrium: An Efficient Alternative to Nonconvex-Nonconcave Min-Max Optimization}

\author{Oren Mangoubi \\ Worcester Polytechnic Institute \and Nisheeth K. Vishnoi \\ Yale University}

\date{}
\begin{document}

\maketitle

\begin{abstract}

Min-max optimization of an objective function $f: \mathbb{R}^d \times \mathbb{R}^d \rightarrow \mathbb{R}$ is an important model for robustness in an adversarial setting, 
 with applications to many areas including optimization, economics, and deep learning.
In many applications $f$ may be nonconvex-nonconcave, and finding a global min-max point may be computationally intractable. 
There is a long line of work that seeks computationally tractable algorithms for alternatives to the min-max optimization model.
However, many of the alternative models have solution points which are only guaranteed to exist under strong assumptions on $f$, such as convexity, monotonicity, or special properties of the starting point. 
We propose an optimization model, the $\varepsilon$-greedy adversarial equilibrium, and show that it can serve as a computationally tractable alternative to the min-max optimization model.
Roughly, we say that a point $(x^\star, y^\star)$ is an $\varepsilon$-greedy adversarial equilibrium if $y^\star$ is an $\varepsilon$-approximate local maximum for $f(x^\star,\cdot)$, and $x^\star$ is an $\varepsilon$-approximate local minimum for a ``greedy approximation" to the function $\max_z f(x, z)$ which can be efficiently estimated using second-order optimization algorithms.
We prove the existence of such a point for any smooth function which is bounded and has Lipschitz Hessian. To prove existence, we introduce an algorithm that converges from any starting point to an $\varepsilon$-greedy adversarial equilibrium in a number of evaluations of the function $f$, the max-player's gradient $\nabla_y f(x,y)$, and its Hessian  $\nabla^2_y f(x,y)$, that is polynomial in the dimension $d$, $1/\varepsilon$, and the bounds on $f$ and its Lipschitz constant.

\end{abstract}

\newpage
\tableofcontents
\newpage

\section{Introduction} \label{sec_introduction}

Min-max optimization of functions $f: \mathbb{R}^d \times \mathbb{R}^d \rightarrow \mathbb{R}$, where $f(x,y)$ may be nonconvex and nonconcave in both $x$ and $y$, is an important model for robustness which arises in  optimization and game theory \cite{sion1957game} with  recent applications in machine learning such as generative adversarial networks (GANs) \cite{goodfellow2014generative} 
 and robust training \cite{madry2017towards}. 
Specifically, in a min-max problem, one wishes to find a global min-max point $(x^\star, y^\star)$ that is a solution to the following optimization problem:
 $$\min_{x \in \mathbb{R}^d} \max_{y \in \mathbb{R}^d} f(x,y).$$
 \smallskip
In other words, $$f(x^\star,y^\star) = \max_{y\in \mathbb{R}^d} f(x^\star,y) \ \ \mbox{and} $$ $$ \max_{y\in \mathbb{R}^d} f(x^\star, y) = \min_{x \in \mathbb{R}^d} \max_{y\in \mathbb{R}^d} f(x,y).$$ 
 We consider the setting where $f$ is a $C^2$-smooth nonconvex-nonconcave function that is uniformly bounded by some $b>0$ with $L$-Lipschitz Hessian for some $L >0$, and we are given access to oracles for $f$, its gradient, and its Hessian.
  The setting where $f$ is a bounded function with unconstrained domain arises in many machine learning applications, including generative adversarial networks (GANs).\footnote{In GANs, the objective function value is bounded both above and below if one uses a mean-squared-error loss \cite{mao2017least}.  For the cross-entropy loss \cite{goodfellow2014generative}, the objective function is uniformly bounded above by 0 (but need not be bounded below).}

\smallskip
In the unconstrained setting, a global min-max point may not exist, even when $f$ is bounded above and below. 
However, in the setting where $f$ is bounded, the extreme value theorem guarantees that one can find a point where each player's objective is very close to its min-max optimal value. 
Namely, for every $\epsilon>0$ one can always find a point $(x^\star,y^\star)$ such that 
$$  f(x^\star, y^\star) \geq \max_{y \in \mathbb{R}^d} f(x^\star, y) - \epsilon \ \ \mbox{and}$$
$$
 \max_{y\in \mathbb{R}^d} f(x^\star, y) \leq \min_{x \in \mathbb{R}^d} \max_{y\in \mathbb{R}^d} f(x,y) + \epsilon 
$$

\noindent
 However, even in the special case of minimization, finding a point whose function value is within a fixed  $\epsilon > 0$ of the minimum value is hard.
For instance, this problem is NP-hard in settings when the objective function is given by a depth-2 neural network with %
  mean-squared error loss \cite{manurangsi2018computational}; see also  \cite{blum1989training}.
 The problem of minimizing a function or determining if it can achieve a minimum value of $0$ remains hard when $f$ is uniformly $b$-bounded with $L$-Lipschitz Hessian when one is given access to oracles for the gradient and Hessian of $f$, and requires a number of oracle queries which is exponential in  $d$ (see Section \ref{sec:hardness} for a detailed discussion).
Consequently, there has been  interest in finding computationally tractable alternatives to min-max optimization.

  One popular alternative to min-max optimization  is to consider a model where the min- and max- players are only allowed to make small ``local" updates, rather than requiring each player to solve a global optimization problem \cite{heusel2017gans,daskalakis2018limit,adolphs2018local,daskalakis2020complexity}.
 A stationary point for such a model, sometimes referred to as a local min-max point, is a point where the min-player is unable to decrease the loss, and the max-player is unable to increase the loss, if they are restricted to local updates inside a ball of some small radius. 
 More specifically, for $\epsilon, \delta>0$, an $(\epsilon, \delta)$-local min-max point $(x^\star, y^\star)$ is a point where $ \forall x,y \in \mathbb{R}^d$ such that  $\|y - y^\star\| \leq \delta$  and $\|x - x^\star\| \leq \delta$,
\be
  f(x^\star, y^\star) \geq f(x^\star, y) - \epsilon \ \ \  \mbox{and}  \ \ \ 
 \ f(x^\star, y^\star) \leq f(x, y^\star) + \epsilon.
\ee
One can also define a notion of $(\epsilon, \delta)$-local minimum and maximum point in a similar manner.
 In the ``local" regime where $\delta < O(\sqrt{\epsilon})$, if $f \in [-1,1]$ with $O(1)$-Lipschitz gradient,  any point which is an $(\epsilon,\delta)$-local minimum of the function $f$ with respect to the variable $(x,y)$, or an $(\epsilon, \delta)$-local maximum of $f$ with respect to $(x,y)$, will also be a $(\Omega(\epsilon), O(\delta))$-local min-max point of this function; thus, in this regime the problem of finding local min-max points is equivalent  to the problem of finding a local minimum or maximum point.
Unfortunately, outside of the regime $\delta < O(\sqrt{\epsilon})$, local min-max points may not exist even for functions $f$ where the value of $f \in [-1,1]$ and $f$ is $O(1)$-Lipschitz with $O(1)$-Lipschitz gradient and Hessian.\footnote{For instance, the function $f(x,y) \coloneqq \sin(x+y)$ has no $(\epsilon, \delta)$-local min-max points when $\frac{1}{100} > \delta > \sqrt{\epsilon}$.}
Thus, in the regime  $\delta > \Omega(\sqrt{\epsilon})$ local min-max points are not guaranteed to exist, and, when $\delta < O(\sqrt{\epsilon})$ 
 finding an $(\epsilon, \delta)$-local min-max point is not as interesting since it reduces to finding a minimum (or maximum) point of $f$.

\subsection{Our contributions} 
We depart from prior approaches and make a novel assumption on the adversary (the max-player), namely, that the adversary is computationally bounded.  
This is motivated from real-world applications  where the adversary itself may be an algorithm.
 Roughly, we show that when the adversary is restricted to computing a greedy approximation to the global maximum $\max_z f(x, z)$, a type of equilibrium --{\em greedy adversarial equilibrium}-- always exists and can be found efficiently for general $f$ in time polynomial in $d, b$, and  $L$.
This is in contrast to previous works which seek local min-max points and make strong assumptions on $f$, for instance assuming that $f(x,y)$ is concave in $y$ but possibly nonconvex in $x$ \cite{thekumparampil2019efficient, rafique2018non}, that $f$ is sufficiently bilinear \cite{abernethy2019last}, or that the gradient of $f$ satisfies a monotonicity property \cite{lin2018solving, gidel2018variational}. 

Our greedy adversarial equilibrium builds on the {second-order} notion of approximate local minimum  introduced by \cite{nesterov2006cubic}.\footnote{We often refer to this second-order approximate local minimum by  approximate local minimum.} 
Roughly, a second-order $(\epsilon, \theta)$-approximate local minimum of a function $\psi: \mathbb{R}^d \rightarrow \mathbb{R}$ is a point $x^\star$ which satisfies the following second-order conditions
$$\|\nabla \psi(x^\star)\| \leq \epsilon \ \  \mbox{and} 
 \ \ \lambda_{\mathrm{min}}( \nabla^2 \psi(x^\star)) \geq -\theta.$$
\cite{nesterov2006cubic} and other recent works \cite{agarwal2017finding, allen2018natasha, NonconvexOptimizationForML, carmon2018accelerated, fang2018near, arjevani2020second}, have shown that one can find an $(\epsilon, \theta)$-approximate local minimum in time roughly $\mathrm{poly}\left(\frac{1}{\epsilon},\log d, \frac{1}{\theta}, b, L\right)$ gradient evaluations.

In our model, the min-player is empowered to simulate updates of the max-player by computing a tractable second-order approximation to the global max function $\max_z f(x, z)$, which we refer to as the {\em greedy max function} $g_\epsilon(x,y)$. 
Here, the parameter $\eps>0$ is a measure of approximation.
Ideally, we would like a point $(x^\star, y^\star)$ to be a greedy adversarial equilibrium if $y^\star$ is a second-order approximate local maximum for $f(x^\star, \cdot)$, and $x^\star$ is a second-order approximate local  minimum for  $g_\epsilon(\cdot, y^\star)$.
 However, the function $g_\epsilon(x, y)$ that arises is hard to evaluate and also discontinuous.
  We overcome these issues in part by ``truncating" and ``smoothing'' $g_\epsilon$ by convolving it with a Gaussian $N(0, \sigma^2 I_d)$ for some $\sigma>0$ to obtain a smooth approximation $S_{\epsilon, \sigma}(\cdot)$ to $g_\epsilon(\cdot, y)$.  
  This allows us to apply the definition of approximate local minimum above to $g_\epsilon$, and to obtain our definition of $(\eps,\sigma)$-greedy adversarial equilibrium; see  Definition \ref{def:local_min-max_formal}.
  %

Our main technical result is an algorithm which finds an $(\epsilon, \sigma)$-greedy adversarial equilibrium in a number of gradient, Hessian, and function evaluations that is polynomial in $\frac{1}{\epsilon}, \frac{1}{\sigma}, b, L, d$; see Theorem \ref{thm:GreedyMin-max}.
In particular, providing such an algorithm proves the existence of an approximate greedy adversarial equilibrium.
Our algorithm requires access to a zeroth-order oracle for the value of $f$, and to oracles for the gradient $\nabla_y f(x,y)$ and Hessian $\nabla_y^2 f(x,y)$ for the max-player variable $y$, but {\em not} to oracles $\nabla_ x f(x,y)$ and $\nabla_x^2 f(x,y)$ for the min-player variable $x$.  
Note that the polynomial dependence on $d$ in our bounds comes from the fact that we do not assume that the $x$-player has access to a gradient oracle $\nabla_x f$ or Hessian oracle $\nabla^2_x f$.

\subsection{Discussion of our contributions}

\vspace{-2mm}
\paragraph{Computationally bounded adversaries.}  
 The main conceptual insight in this paper is that we can obtain a model which is an efficient alternative to min-max optimization by placing computational restrictions on the adversary. %
In comparison to models where each player is restricted to local updates, this can allow the model to be robust to a greater diversity of adversaries from a much larger set of parameters $y$ than just the current value of $y$ namely, all the values of $y$ reachable by the tractable approximation--while still allowing for efficient algorithms for modeling the adversary (in particular, this can allow for more stable training of machine  learning algorithms).
We note that analogous computationally bounded adversaries can lead to useful models in many other settings including coding theory  \cite{lipton1994new, micali2005optimal, 5671339} and cryptography \cite{goldwasser1984probabilistic, canetti2000security}.

\vspace{-3mm}
\paragraph{Results hold for any bounded and Lipschitz $f$.}  Aside from the bounded and Lipschitz assumptions, Theorem \ref{thm:GreedyMin-max} does not make any additional assumptions on $f$. 
As mentioned earlier, prior results which seek solutions to other alternative models to min-max optimization (such as local min-max points), assume that either $f(x,y)$ is concave  \cite{thekumparampil2019efficient, rafique2018non} in $y$, or monotone \cite{lin2018solving, gidel2018variational}, or sufficiently bilinear \cite{abernethy2019last}.
  Although there are other prior works which do not assume that $f$ is convex-concave or monotone, many of these works instead assume that there exists a stationary point for their algorithm on the function $f$, and that their algorithm is initialized somewhere in the region of attraction for this stationary point \cite{heusel2017gans, adolphs2018local, wang2020ridge, mazumdar2019finding}. 
  In contrast, Theorem \ref{thm:GreedyMin-max} guarantees that our algorithm  converges from {\em any} initial point $(x,y).$

\vspace{-3mm}
\paragraph{Extension of second-order local minimum definition to discontinuous functions.} 
 To handle minimization of the discontinuous greedy max function $g_\epsilon$, when defining our greedy adversarial equilibrium we introduce a second-order notion of approximate local minimum which applies to discontinuous functions. 
  This leads to an algorithm which, in the special case when the objective function depends only on $x$, reduces to a ``derivative-free" minimization algorithm, that is, it does not require access to derivatives of the objective function.
  The novel techniques and definitions we develop here for minimization of the greedy max function may be of interest to other problems in discontinuous or derivative-free minimization of non-convex objectives (for applications of derivative-free methods to adversarial bandit convex optimization, see for instance \cite{flaxman2005online}).

\vspace{-3mm}
\paragraph{Greedy adversarial equilibria corresponds to global min-max under strong convexity/concavity.}   If $f$ is $1$-strongly convex- strongly concave, then for  $\epsilon>0$ and small enough $\sigma$, we show that at any $(\epsilon, \sigma)$-greedy adversarial equilibrium $(x^\star, y^\star)$ the duality gap satisfies $$\max_{y \in \mathbb{R}^d} f(x^\star, y)-\min_{x \in \mathbb{R}^d} f(x, y^\star) \leq O(\epsilon^2);$$ see Theorem \ref{Thm_strongly_convex_concave} in Appendix \ref{sec:convex_concave} for a precise statement of this result and its proof.

\vspace{-3mm}
\paragraph{Applications to GANs.}  In a subsequent paper, \cite{OurAppliedPaper} use a  related first-order version of our greedy adversarial equilibrium to obtain an algorithm and show that it can enable more stable training of generative adversarial networks (GANs).
Roughly speaking, the first-order equilibrium in \cite{OurAppliedPaper} is a point $(x^\star,y^\star)$ such that 
$\|\nabla_y f(x^\star, y^\star)\| \leq \epsilon$ and $\|\nabla_x \hat{g}(x^\star, y^\star)\| \leq \epsilon,$
where $\hat{g}$ is a first-order approximation to the greedy-max function. 
This means that, unlike here, in \cite{OurAppliedPaper} min-min points (points where both players are at a local minimum) are included in the local equilibrium proposed.
Including second-order conditions for both the maximizing and minimizing players in our Definition \ref{def:local_min-max_formal}  allows us to ensure that our definition excludes points which may be (approximate) min-min points.
The second-order conditions also end up making the proofs in this paper significantly harder.

\vspace{-3mm}
\paragraph{Difference between constrained and unconstrained settings.}
Finally, we note that, in a subsequent work, \cite{daskalakis2020complexity} prove PPAD-hardness results for finding approximate local min-max points in the constrained setting.
Their result does not have any implication to our framework as we consider the unconstrained setting (domain is $\mathbb{R}^d \times \mathbb{R}^d$).

\section{Greedy adversarial equilibrium} \label{sec_Greedy_Definitions}
\paragraph{Preliminaries.}
 In the following,  we say that a function is $C^2$-smooth if its second derivatives are continuous on its domain.
  $\lambda_{\mathrm{max}}(A)$ denotes the largest eigenvalue of any square matrix $A$, and $\lambda_{\mathrm{min}}(A)$ is its smallest eigenvalue.  
   $\| \cdot \|$ denotes the Euclidean $\ell_2$ norm, and $\|A\|_{\mathrm{op}} = \sup_{v \neq 0} \frac{v^\top A v}{\|v\|^{2}} $ the operator norm of any square matrix $A$. 
We assume\footnote{We note that a uniform bound on a function and the Lipschitz constant of its Hessian also implies a bound on the Lipschitz constants of the function and its gradient.
Namely, if $f$ is $b$-bounded with $L_{}$-Lipschitz Hessian, it is also $L_1$-Lipschitz with  $L_1 \leq 4 b^{\nicefrac{2}{3}} L^{\nicefrac{1}{3}}$ and has $L_2$-Lipschitz gradient with $L_2 \leq 2 b^{\nicefrac{1}{3}} L_{}^{\nicefrac{2}{3}}$.
 We say $f: \mathbb{R}^d \times \mathbb{R}^d \rightarrow \mathbb{R}$ is $L_1$-Lipschitz if $|f(x,y) -  f(\tilde{x},\tilde{y})| \leq L_1 \sqrt{\| x- \tilde{x}\|^2 + \| y- \tilde{y}\|^2}$, and that $f$ has $L_2$-Lipschitz gradient if $\|\nabla f(x,y) - \nabla f(\tilde{x},\tilde{y})\| \leq L_2 \sqrt{\| x- \tilde{x}\|^2 + \| y- \tilde{y}\|^2}$.} that for some $b, L_{}>0$,  $f:\mathbb{R}^d \times \mathbb{R}^d \rightarrow \mathbb{R}$ is $b$-bounded, i.e.,  $|f(x,y)| \leq b$, and has $L_{}$-Lipschitz Hessian: $$\|\nabla^2 f(x,y) - \nabla^2 f(\tilde{x},\tilde{y})\|_{\mathrm{op}} \leq L_{} \sqrt{\| x- \tilde{x}\|^2 + \| y- \tilde{y}\|^2}.$$

\noindent
 We start by considering the special case of minimization.
 We say that a point $x^\star \in \mathbb{R}^d$  is an {\em exact} local minimum point of  a function  $\psi: \mathbb{R}^d \rightarrow \mathbb{R}$ if there exists $\delta>0$ such that 
\noindent
\begin{equation} \label{eq:ExactLocalMin_formal}
\psi(x^\star) \leq \psi(x), \ \  \forall x \in \mathbb{R}^d\textrm{ such that }\|x - x^\star \| \leq \delta.
\end{equation}
Unfortunately, even if the objective function $\psi:\mathbb{R}^d \rightarrow \mathbb{R}$ is bounded and Lipschitz, it is not always possible to find an exact local minimum for $\psi$ in $\mathrm{poly}(d)$ gradient evaluations (see Remark \ref{local_minimum_hardness} in Section \ref{sec:hardness} for a detailed discussion).

On the other hand, suppose we just wanted to minimize a function $\psi$, and we start from any point $x$ where $$\|\nabla \psi(x)\| > \epsilon \ \  \mbox{or} \ \   \lambda_{\mathrm{min}}( \nabla^2 \psi(x)) < -\theta$$ for some $\epsilon, \theta>0$.
Then we can always find a direction to travel in along which either $\psi$ decreases rapidly at a rate of at least $\epsilon$, or the second derivative of $\psi$ is less than $-\theta$ (see Remark \ref{rem_escaping_saddle_points}). 
 By searching in such a direction we can easily find a new point which has a smaller value of $\psi$ using only local information about the gradient and Hessian of $\psi$.
 This means that we can keep decreasing $\psi$ until we reach a point where $\|\nabla \psi(x)\| \leq \epsilon$ and $\lambda_{\mathrm{min}}( \nabla^2 \psi(x)) \geq -\theta$.
  If $\psi$ is Lipschitz smooth and bounded, we will reach such a point in polynomial time from any starting point \cite{nesterov2006cubic, ge2015escaping}.
This fact, together with the fact that any point which satisfies these conditions for $\eps = \theta = 0$ is also an exact local minimum, motivates the second-order notion of an approximate local minimum of  \cite{nesterov2006cubic}. %
    For any $\epsilon, \theta \geq 0$, say that a point $x^\star$  is an $(\epsilon, \theta)$-approximate local minimum for  a $C^2$-smooth function $\psi: \mathbb{R}^d \rightarrow \mathbb{R}$ if
    \noindent
\begin{equation} \label{eq:ApproximateLocalMin_formal}
 \|\nabla \psi(x^\star)\| \leq \epsilon \qquad \mbox{and} \quad \lambda_{\mathrm{min}}( \nabla^2 \psi(x^\star)) \geq -\theta.
\end{equation}
 We say that $x^\star$ is an $(\epsilon, \theta)$-approximate local {\em maximum} of $\psi$ if  $x^\star$ is an $(\epsilon, \theta)$-approximate local {\em minimum} of $-\psi$.
 We use two different values of $\theta$: when referring to an $(\epsilon, \theta)$-approximate local maximum on $f(x,\cdot)$, we  use  $\theta = \sqrt{L_{} \epsilon}$ and, roughly, when defining an $(\epsilon, \theta)$-approximate local minimum on $g_\epsilon$, we  use  $\theta = \sqrt{\epsilon}$.  We explain these choices of $\theta$ in Remark  \ref{rem_escaping_saddle_points} .

Importantly, one can view the definition given by Inequality \eqref{eq:ApproximateLocalMin_formal} as being motivated by a class of second-order optimization algorithms as,
 roughly speaking, a second-order optimization algorithm can rapidly decrease the value of $\psi$ when starting from any point which is {\em not} an approximate local minimum.

\subsection{Greedy path and greedy max}

When defining a greedy path, we restrict the max-player to updating $y$ by traveling along continuous paths which start at the current value of $y$ and along which either $f$ is increasing or the second derivative of $f$ is positive.  %

\begin{definition}[{Greedy path}] \label{def_greedy_path}
Let $x\in \mathbb{R}^d$, and suppose a continuous path $\varphi_t:[0,\tau] \rightarrow \mathbb{R}^d$ is differentiable except at a finite number of points, and at the points where it is differentiable $\left\|\frac{\mathrm{d}}{\mathrm{d} t} \varphi_t \right\| = 1$ (i.e., the path travels at unit speed).  Then for any $\epsilon \geq 0$, we say $\varphi$ is an $\epsilon$-greedy path for $f(x, \cdot)$ if at all points where $\varphi$ is differentiable $\frac{\mathrm{d}}{\mathrm{d}t} f(x, \varphi_t)\geq - \epsilon$ and
\begin{equation} \label{eq_greedy_path}
 \frac{\mathrm{d}}{\mathrm{d}t} f(x, \varphi_t) >\epsilon \qquad \textrm{ or } \qquad
 \frac{\mathrm{d}^2}{\mathrm{d}t^2} f(x, \varphi_t) > \sqrt{L_{} \epsilon}.
\end{equation}
\end{definition}

\noindent Roughly speaking, when restricted to updates obtained from $\epsilon$-greedy paths, the max-player will always be able to reach a point which is a second-order $(\epsilon, \sqrt{L \epsilon})$-approximate local maximum for $f(x,\cdot)$, although there may not be an $\epsilon$-greedy path which leads the max-player to a global maximum (Figure \ref{fig_greedy_paths}).
\begin{figure}
\includegraphics[scale=0.3]{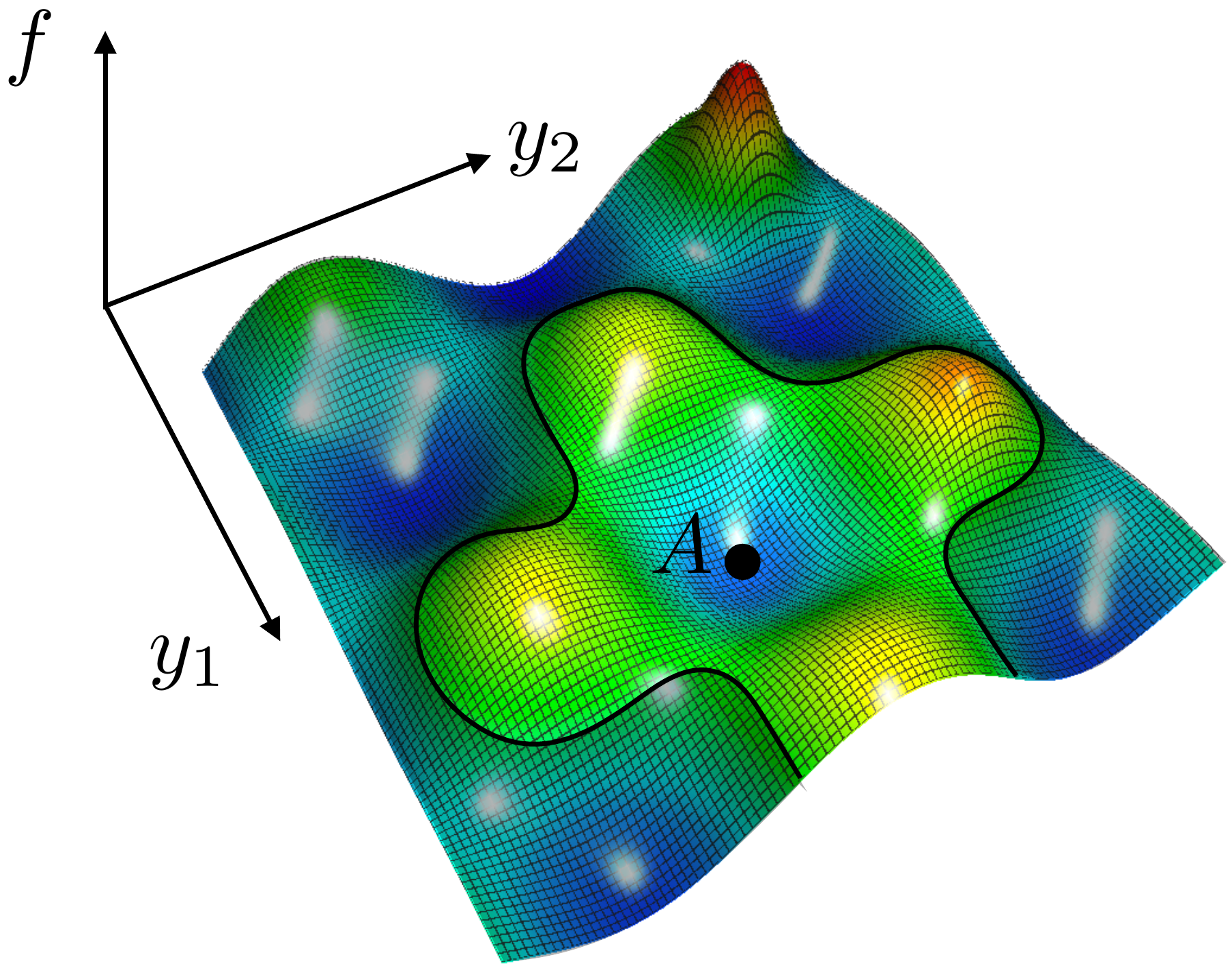}
\includegraphics[scale=0.3]{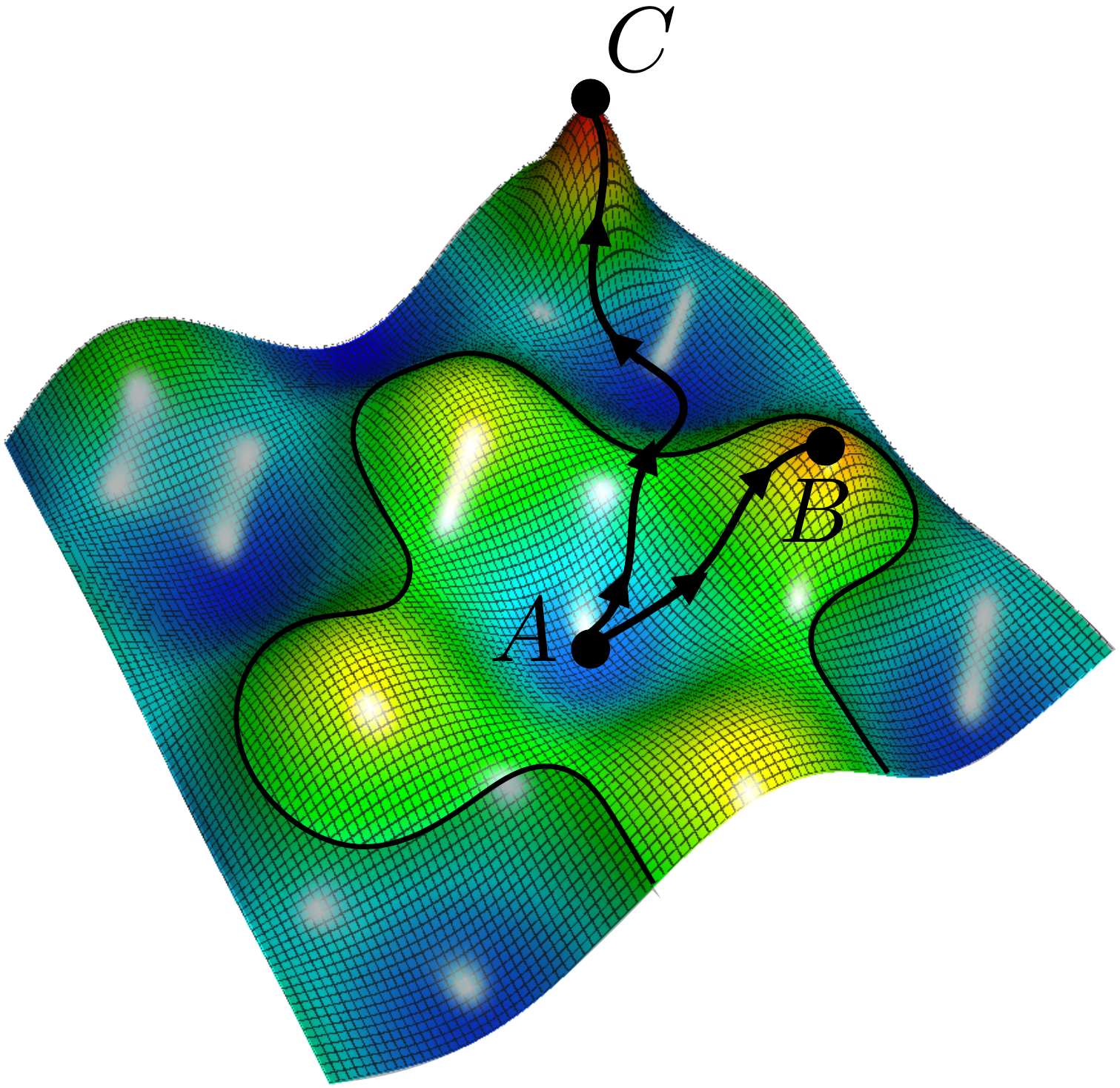}
\includegraphics[scale=0.3]{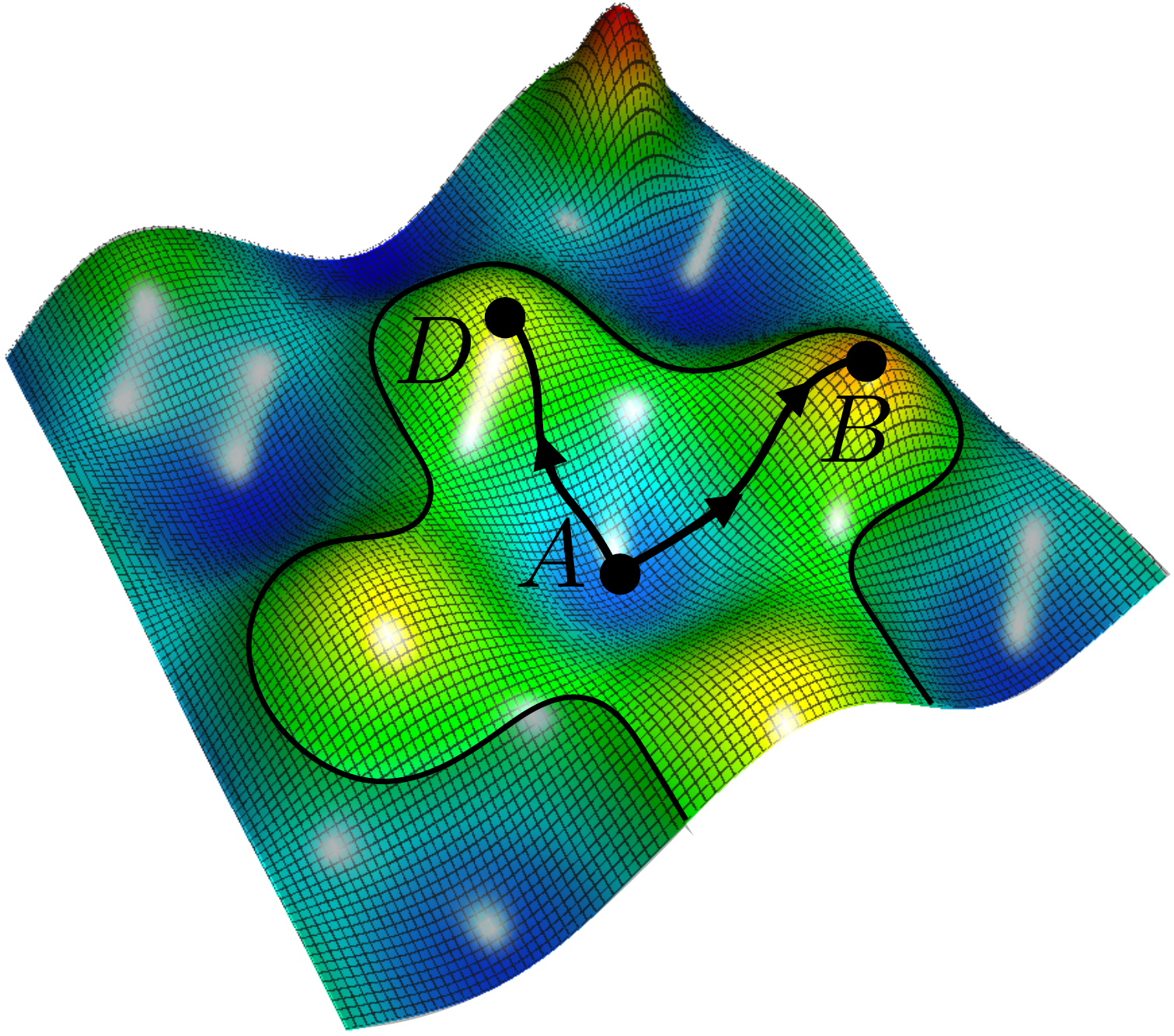}
\caption{\small Left figure: The left figure shows the region reachable by greedy paths starting from an initial point $A$ (the lighter, non-shaded region), for $\epsilon = 0$, on the function $f(x,y) =  - \sin(y_1)\cos(y_2) + 2 e^{-\frac{(y_1 - 2)^2 + y_2^2}{10}} - 1.5 e^{-\frac{(y_1 - 2)^2 + y_2^2}{4}} + 0.5 e^{-2((y_1 - 2)^2 + (y_2-3)^2)} + 2 e^{-2((y_1 +3.8)^2 + (y_2-3.3)^2)}  $  (for simplicity, in this figure we have chosen a function which has no dependence on $x$). 
Middle figure: The point $B$ with the largest value of $f$ that is reachable from a greedy path starting at $A$ (black curve from $A$ to $B$) is a local maximum of $f(x, \cdot)$.  However, in this example, to reach the global maximum point $C$ when starting form the point $A$, one must take a path which is not greedy and where the value of $f$ may decrease over long stretches (black curve from $A$ to $C$).  
 Right figure: There are many different greedy paths of maximal length that start at the same point $A$ but which end up at different local maxima.  Two of these paths are shown here, with one greedy path reaching the local maximum point $B$ and a different greedy path reaching the local maximum point $D$, which has a smaller value of $f$ than the point $B$. 
  The local maximum at $B$ is the maximum value attainable from  {\em any} greedy path starting at $A$, and the value of $f$ at the end of this path determines the value of the greedy max function $g_\epsilon(x,A) = f(x, B)$.
}\label{fig_greedy_paths}
\end{figure}

To define an alternative to $\max_z f(\cdot,z)$, we consider the local maximum point with the largest value of $f(x,\cdot)$ attainable from a given starting point $y$ by any $\epsilon$-greedy path.
Towards this end, we define the set $S_{\epsilon, x, y} \subseteq \mathbb{R}^d$ of endpoints of $\epsilon$-greedy paths, for any $x,y \in \mathbb{R}^d$ and $\epsilon>0$. 
   We say that a point $z \in S_{\epsilon, x, y}$ if there is a number $\tau \geq 0$ and a path $\varphi:[0,\tau] \rightarrow \mathbb{R}^d$ which is an $\epsilon$-greedy path for $f(x,\cdot)$, with initial point $\varphi_0 = y$ and endpoint $\varphi_\tau = z$.
The greedy max function $g_\epsilon(x, y)$ is the maximum value of $f(x, \cdot)$ attainable by {\em any} $\epsilon$-greedy path in the set  $S_{\epsilon, x, y}$:
%
\begin{equation} \label{eq_greedy_max}
 g_\epsilon(x,y)\coloneqq \sup \{f(x,z) : z\in  S_{\epsilon, x, y}\}.
\end{equation}

\begin{remark} [Greedy paths can escape saddle points and local minima] \label{rem_escaping_saddle_points}
Equations \eqref{eq_greedy_path} together ensure that for any  $y$ where either 
 (i) the gradient  $\nabla_y f(x,y)$ has magnitude greater than $\epsilon$ or 
 (ii) the eigenvalues of the Hessian $\nabla_y^2 f(x,y)$ are bounded below by $-\sqrt{L_{} \epsilon}$, there is always a unit-speed greedy path (with parameter $\epsilon$) starting at $y$ which can increase the value of $f$ at an average rate
 \footnote{By ``average rate" of at least $\frac{1}{2} \epsilon$ we mean that the increase in $f$ divided by the length of the path is $\geq \frac{1}{2} \epsilon$.} 
 of at least $\frac{1}{2} \epsilon$ by traveling a distance of at most $\frac{1}{2}\frac{\sqrt{\epsilon}}{\sqrt{L_{}}}$. 
 Moreover, since one such greedy path is always a straight line in the direction of either the gradient $\nabla_y f(x,y)$ or 
  the largest eigenvector of $\nabla_y^2 f(x,y)$, all one needs to compute such a path is access to the gradient and Hessian of $f(x,\cdot)$. 
 This fact can also be viewed as a motivation for the definition of approximate local maximum (Inequality \eqref{eq:ApproximateLocalMin_formal}): roughly, any point
   which does not satisfy both conditions (i) and (ii) (up to a constant factor) is an approximate local maximum. 
   Thus, starting from any point $y$ which is not an approximate local maximum of $f(x, \cdot)$ (with parameters $(\epsilon, \sqrt{L_{} \epsilon})$), 
    there is always an easy-to-compute greedy path (with parameter $\epsilon$) which allows one to increase the value of $f$.
\end{remark}

\subsection{Dealing with discontinuities and other difficulties of the greedy max function} \label{sec_discontinuities1}
    Unfortunately, even if $f$ is smooth, the greedy max function may not be differentiable with respect to $x$ and may even be discontinuous (see Example \ref{discontinuity_example} for a simple example of a smooth function $f$ whose greedy max function is discontinuous).  %
This lack of smoothness creates a problem, since the current definition of approximate local minimum (Inequality \eqref{eq:ApproximateLocalMin_formal}) only applies to $C^2$-smooth functions. 
 To solve this problem we would ideally like to smooth a discontinuous function by convolution with a Gaussian (see Section \ref{sec_discontinuous_localmin} for further discussion on why we use convolution for smoothing).
    
    Another difficulty is that the value of $g_\epsilon(x,y)$ may be intractable to compute at some points $(x,y)$, since one may need to compute a very large number\footnote{Even in the setting where $f$ is bounded with Lipschitz Hessian, the number of $\epsilon$-greedy paths that share a given starting point may be infinite.} of $\epsilon$-greedy paths (possibly an infinite number of paths), each with the same initial point $y$, before finding the $\epsilon$-greedy path with the largest value of $f$.  
    This is because, starting from a point near a local minimum or saddle point, there may be many directions to choose from which allow one to increase the value of $f$, and, depending on which direction one chooses, one may end up at a different local maximum. 
   Realistically, this means that in general we cannot hope to give our algorithm access to the exact value of $g_\epsilon$.
   Our algorithm overcomes this by instead computing a lower bound $h_\epsilon$ for $g_\epsilon$, and uses only access to this lower bound to minimize $g_\epsilon$ (In Sections \ref{sec_local_min_local_max}-\ref{sec_exact_local_min} of our technical overview we show how this can be done by using some additional properties of the greedy max function). 
      To allow us to handle this more difficult setting, %
       we would like our notion of approximate local minimum to satisfy the property that any point which is an {\em exact} local minimum, is also an {\em approximate} local minimum under our definition.
    
Unfortunately, convolution can cause the local minima of a function to ``shift"-- a point which is a local minimum for a function $\psi: \mathbb{R}^d \rightarrow \mathbb{R}$  may no longer be a local minimum for the convolved version of $\psi$ (for instance, in Example \ref{ex_shifted_min1}, we show that this happens if we convolve the function $\psi(x) = x - 3x \mathbbm{1}(x\leq 0) + \mathbbm{1}(x \leq 0)$ with a Gaussian $N(0,\sigma^2)$ for any $\sigma>0$).
To avoid this, we instead consider a ``truncated" version of $\psi$, and convolve this function with a Gaussian to obtain our smoothed version of $\psi$ (Definition \ref{def_discontinuos_local_min}).

  \begin{definition} [Approximate local minimum for discontinuous functions]  \label{def_discontinuos_local_min}
  For any $\epsilon, \sigma \geq 0$, we say that $x^\star$ is an $(\epsilon, \sigma)$-approximate local minimum for a uniformly bounded function $\psi$ if
  \begin{equation} \label{eq_local_lemma2}
 \| \nabla_{x}  \mathcal{S}(x^\star) \| \leq \epsilon \qquad \textrm{ and } \qquad
   \lambda_{\mathrm{min}}( \nabla^2_{x}   \mathcal{S}(x^\star) ) \geq -\sqrt{\epsilon},
  \end{equation}
  where $\mathcal{S}(x) \coloneqq \mathbb{E}_{\zeta \sim N(0,I_d)}\left [\min(\psi(x  + \sigma \zeta),  \psi(x^\star)) \right].$
  \end{definition}

\begin{example}[A simple example of a discontinuous greedy max function] \label{discontinuity_example}
As a simple example  (Figure \ref{fig:discontinuity}), consider the function
\begin{equation*}
    f(x,y) = \cos(x+y)\sin(2x+2y) - e^{-x^2}.
    \end{equation*}
 For any $0< \epsilon < 0.1$, the greedy max function $g_{\epsilon}(x,y)$ is discontinuous at the (parallel) lines $x+y = -2.52$ and $x+y = -0.62$, with $g_{\epsilon} (x,y) = -e^{-x^2}$ in the region enclosed between the two lines and $g_{\epsilon} (x,y) = -e^{-x^2}+0.77$ on each side of that region. 
  Such examples are easy to come by and extend to higher dimensions.
  \end{example}
  
  \begin{figure}
\includegraphics[scale=0.35]{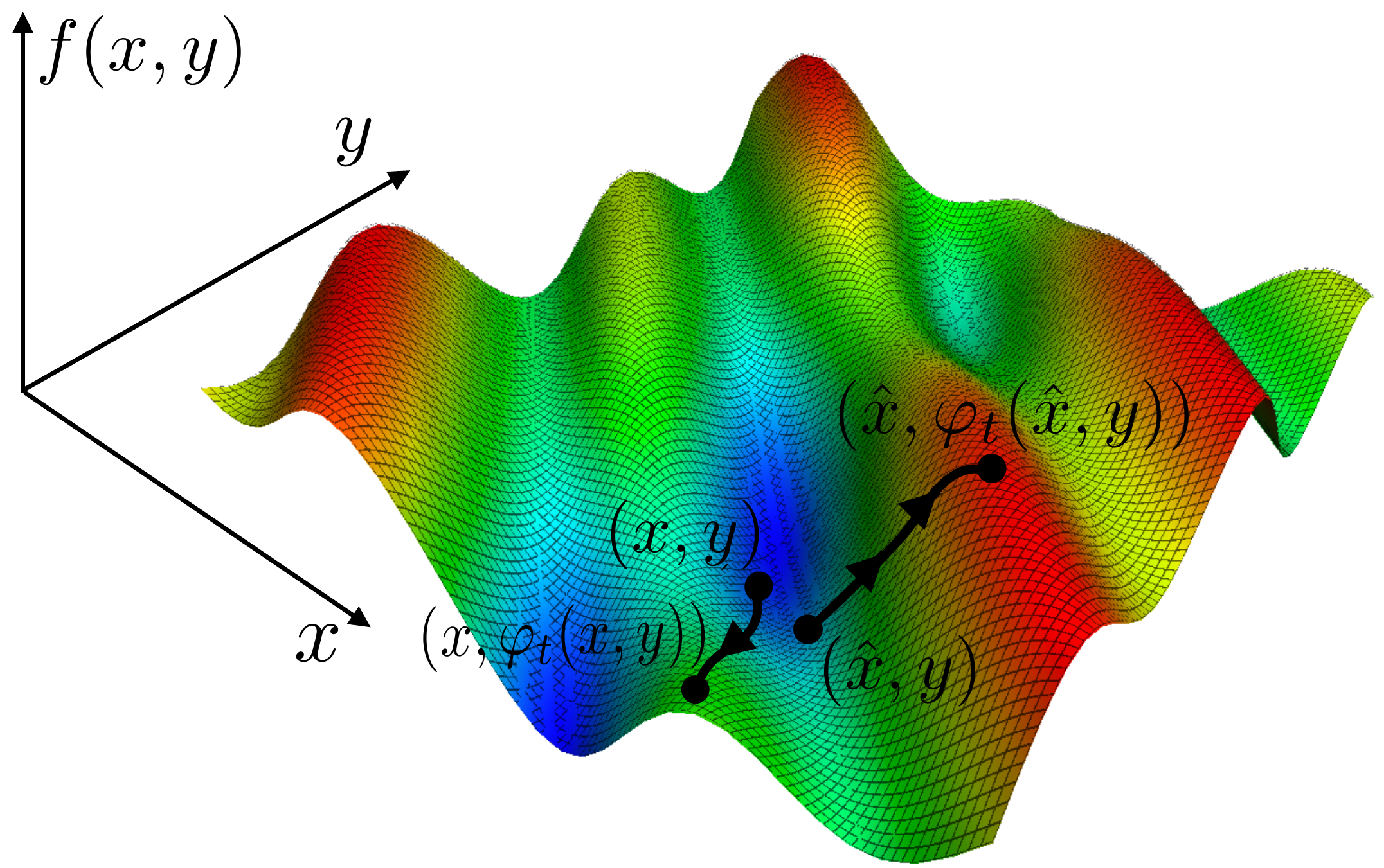}
\includegraphics[scale=0.35]{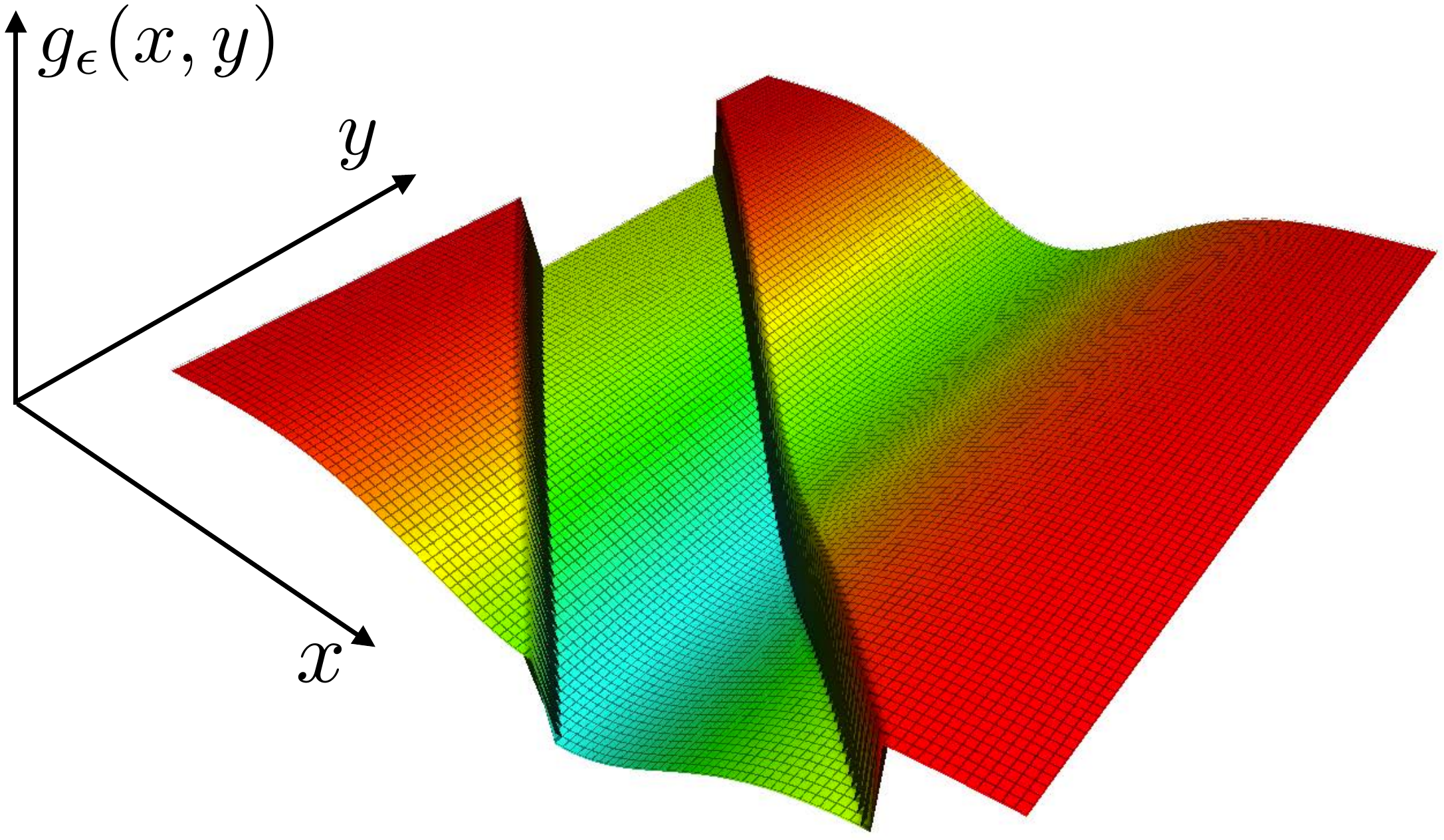}
\caption{\small In this example we have $f(x,y) = \cos(x+y) \sin(2x + 2y) - e^{-x^2}$ (left). 
 We see that if we change $x$ from one value $x$ to a very close value $\hat{x}$, the ``best" greedy path (i.e., the greedy path whose endpoint has the largest value of $f$) undergoes a very large, discontinuous change. 
  This implies that the ``greedy max" function $g_{\epsilon} (x,y)$ (right) is discontinuous in $x$ along the parallel lines $x+y = -2.52$ and $x+y = -0.62$.}\label{fig:discontinuity}
\end{figure}

\subsection{Greedy adversarial equilibrium} \label{sec_greedy_equilibirum}
  We say that $(x^\star, y^\star)$ is an $(\epsilon, \sigma)$-greedy adversarial equilibrium of a function $f: \mathbb{R}^d \times \mathbb{R}^d \rightarrow \mathbb{R}$ with $L$-Lipschitz Hessian if $y^\star$ is an $(\epsilon, \sqrt{L\epsilon})$-approximate local maximum of $f(x^\star, \cdot)$ (in the sense of Inequality \eqref{eq:ApproximateLocalMin_formal}), and if $x^\star$  is an $(\epsilon, \sigma)$-approximate local minimum of the (possibly) discontinuous function $g_\epsilon(\cdot, y^\star)$ (in the sense of Definition \ref{def_discontinuos_local_min}). 
  See Figure \ref{fig_GAE_example} for an example of greedy adversarial equilibria.

\begin{definition}[ Greedy adversarial equilibrium] \label{def:local_min-max_formal} 
For any $\epsilon, \sigma \geq 0$, we say that $(x^\star, y^\star)\in \mathbb{R}^d \times \mathbb{R}^d$ is an $(\epsilon, \sigma)$-greedy adversarial equilibrium of a $C^2$-smooth function $f: \mathbb{R}^d \times \mathbb{R}^d \rightarrow \mathbb{R}$ with $L$-Lipschitz Hessian, if we have
\begin{equation} \label{eq:first_order_y}
 \|\nabla_y f(x^\star, y^\star)\| \leq \epsilon
 \quad \textrm{and} \quad
 \lambda_{\mathrm{max}}( \nabla^2_{y} f(x^\star, y^\star)) \leq \sqrt{L \epsilon}, \textrm{ and }
\end{equation}
%
  \begin{equation} \label{eq:first_order_x}
  \| \nabla_{x} S(x^\star)\| \leq \epsilon  \quad \textrm{ and }  \quad
 \lambda_{\mathrm{min}}( \nabla^2_{x} S(x^\star)) \geq -\sqrt{ \epsilon},
  \end{equation} \label{eq_S}
     where $S(x) \coloneqq  \mathbb{E}_{\zeta \sim N(0,I_d)}\left [\min(g_\epsilon(x  + \sigma \zeta, y^\star),  g_\epsilon(x^\star, y^\star)) \right]$.
\end{definition}

\noindent
We can view the point $(x^\star, y^\star)$ in Definition \ref{def:local_min-max_formal} as a type of equilibrium.
 Namely, suppose that the max-player can only make updates in the set $S_{\epsilon, x, y^\star}$ of points attainable by an $\epsilon$-greedy path initialized at $y^\star$.  Then under this constraint, the max-player cannot make any update to $y^\star$ that will increase the value of $f(x^\star, \cdot)$. 
  Moreover, we have that $x^\star$ is an $(\epsilon, \sigma)$-approximate local minimum (in the sense of Definition \ref{def_discontinuos_local_min}) of the function $\max_z f(x,z)$ if the maximum is taken over the set  $S_{\epsilon, x, y^\star}$ of updates available to the max-player.

A key feature of greedy adversarial equilibrium is that it empowers the min-player to simulate the updates of the max-player via a class of second-order optimization algorithms which we model using greedy paths.  This is in contrast to previous models, such as the local min-max point considered in \cite{daskalakis2018limit, heusel2017gans, adolphs2018local} or  \cite{minmax_Jordan}, which restrict the min-player and max-player to making updates inside a small ball.

\begin{figure}
\begin{center}
\includegraphics[scale=0.38]{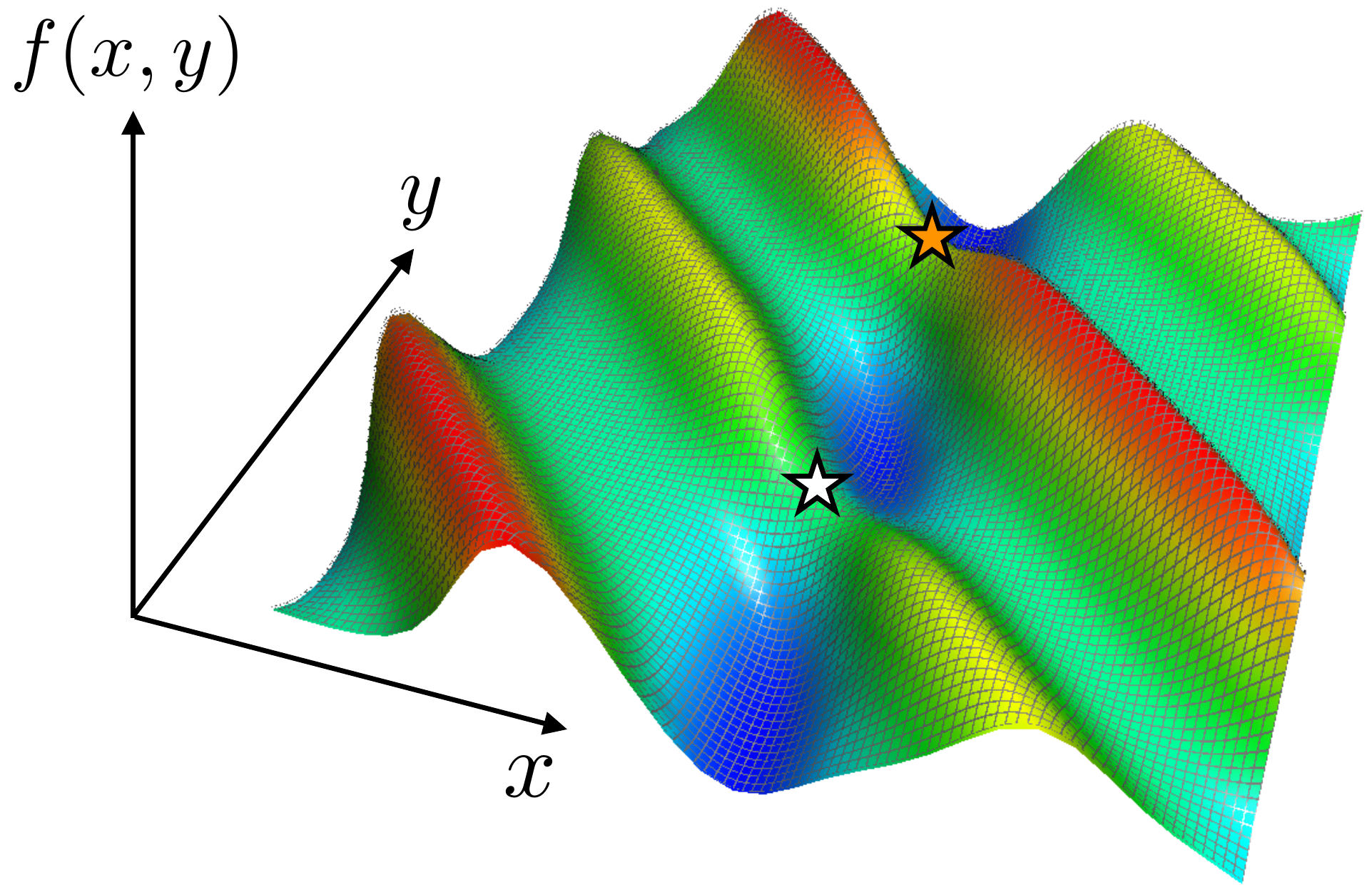}
\end{center}
\caption{\small In this example we have the periodic function $f:\mathbb{R}^1\times \mathbb{R}^1 \rightarrow \mathbb{R}$ where $f(x,y) = \sum_{k \in \mathbb{Z}} 1.2 e^{-(x+y+2 +9k)^2} + 2 e^{-(x+y-2 + 9k)^2}-e^{-(x +6k)^2}$.
This function has $(\eps,\sigma)$-greedy adversarial equilibria (for $\eps=0$ and any $0<\sigma \leq \frac{1}{100}$) at the points in the set $\bigcup_{k,\ell \in \mathbb{Z}} \{(6k, 2- 9 \ell)\} \cup \{(6k, 2+ 9 \ell) \}$; the two equilibria visible in the region shown in the figure are at the points $(0,-2)$ (white star) and $(0,2)$ (orange star).
In particular, the set of greedy adversarial equilibria at the points $\bigcup_{k,\ell \in \mathbb{Z}} \{(6k, 2+ 9 \ell) \}$ (including the orange star at $(0,2)$ in the figure) exactly coincides with the set of global min-max point of $f$.
On the other hand, $f$ has no $(\eps, \theta)$-local min-max points for any $\eps, \theta \leq 10^{-3}$: there does not exist a point $(x,y)$ where $y$ is an $(\eps, \theta)$-approximate local maximum of $f(x,\cdot)$ and $x$ is an $(\eps, \theta)$-approximate local minimum of $f(\cdot,y)$  (as defined in Equation \eqref{eq:ApproximateLocalMin_formal}).
}\label{fig_GAE_example}
\end{figure}

\section{Main result} \label{sec_main_result}

\begin{theorem}[Main result]\label{thm:GreedyMin-max}
Let $\epsilon, \sigma>0$, with $\sigma \leq \frac{1}{\sqrt{\epsilon d}}$, and consider any $C^2$-smooth  uniformly bounded function $f: \mathbb{R}^d \times \mathbb{R}^d \rightarrow \mathbb{R}$ with Lipschitz Hessian. 
 Then there exists a point $(x^\star, y^\star) \in \mathbb{R}^d \times \mathbb{R}^d$  which is an ($\epsilon^\star, \sigma)$-greedy adversarial equilibrium for $f$, for some  $\epsilon^\star \leq \epsilon$.
Moreover, there exists an algorithm which, given access to oracles for the value of a $C^2$-smooth function $f: \mathbb{R}^d \times \mathbb{R}^d \rightarrow [-b,b]$, and to oracles for $\nabla_y f$ and $\nabla^2_y f$, where $f$ has $L_{}$-Lipschitz Hessian for some $b, L_{} >0$, and numbers $\epsilon, \sigma \geq 0$, with probability at least $\frac{9}{10}$ generates a  point $(x^\star,y^\star) \in \mathbb{R}^{d} \times \mathbb{R}^{d}$ which is an $(\epsilon^\star, \sigma)$-greedy adversarial equilibrium for $f$, for some  $\epsilon^\star \leq \epsilon$. 
 Moreover, this algorithm takes a number of gradient, Hessian, and function evaluations which is polynomial in $\frac{1}{\epsilon}, d, b, L_{}, \frac{1}{\sigma}$.
\end{theorem}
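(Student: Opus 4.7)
The plan is to build the algorithm from two nested routines; the existence statement then follows as a corollary of correctness. The inner routine \textsc{AscendToLocalMax}$(x,y)$ repeatedly follows either $\nabla_y f/\|\nabla_y f\|$ or the top eigenvector of $\nabla_y^2 f$ as a segment of an $\epsilon$-greedy path for $f(x,\cdot)$ until the $(\epsilon,\sqrt{L\epsilon})$-approximate local maximum conditions of \eqref{eq:ApproximateLocalMin_formal} are satisfied. Remark~\ref{rem_escaping_saddle_points} guarantees that each segment of length $O(\sqrt{\epsilon/L})$ raises $f$ by $\Omega(\epsilon)$, so by $b$-boundedness the routine halts after $O(b/\epsilon)$ queries of $\nabla_y f$ and $\nabla_y^2 f$. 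The returned endpoint value $h_\epsilon(x,y)$ is then a deterministic lower bound on $g_\epsilon(x,y)$.

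The outer routine maintains $(x^\star, y^\star)$ and alternates \emph{$y$-moves}, which replace $y^\star$ by \textsc{AscendToLocalMax}$(x^\star, y^\star)$ and so enforce \eqref{eq:first_order_y}, with \emph{$x$-moves}, which perform one cubic-regularised Newton step on the smoothed-and-truncated function $S$ of Definition~\ref{def:local_min-max_formal}. Since only $\nabla_y f$ and $\nabla_y^2 f$ are accessible, I would estimate $\nabla S(x^\star)$ and $\nabla^2 S(x^\star)$ by Monte-Carlo averages of the Stein-type identities $\nabla S(x)=\sigma^{-1}\mathbb{E}_\zeta[\zeta\,\phi(x{+}\sigma\zeta)]$ and $\nabla^2 S(x)=\sigma^{-2}\mathbb{E}_\zeta[(\zeta\zeta^\top{-}I)\phi(x{+}\sigma\zeta)]$ with integrand $\phi(u):=\min(h_\epsilon(u,y^\star),\,h_\epsilon(x^\star,y^\star))$ and $\zeta\sim N(0,I_d)$. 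The hypothesis $\sigma\leq 1/\sqrt{\epsilon d}$ keeps the Stein weights polynomially bounded, so $\mathrm{poly}(d,b,1/\sigma,1/\epsilon)$ samples suffice by Chebyshev's inequality to certify the bounds $\|\widehat{\nabla S}\|\leq\epsilon$ and $\lambda_{\min}(\widehat{\nabla^2 S})\geq -\sqrt{\epsilon}$; if either fails, a cubic-regularised Newton step decreases $S$ by $\Omega(\mathrm{poly}(\epsilon))$. A potential-function argument based on the observation that $g_\epsilon(x^\star,\cdot)$ is non-increasing under $y$-moves (every $\epsilon$-greedy path out of the new $y^\star$ extends to one out of the old, so the reachable set $S_{\epsilon,x^\star,y^\star}$ only shrinks) then bounds the total number of outer iterations by a polynomial in $1/\epsilon,b,L$.

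The main obstacle, which I expect to dominate the technical work, is controlling the one-sided bias introduced by substituting $h_\epsilon$ for $g_\epsilon$ inside the Stein estimators: since $h_\epsilon\leq g_\epsilon$ the empirical $\phi$ is biased downward, and a small empirical gradient does not immediately certify a small true $\nabla S$. Closing this gap hinges on the truncation baked into the definition of $S$: on samples $\zeta$ where $g_\epsilon(x^\star{+}\sigma\zeta,y^\star)\geq g_\epsilon(x^\star,y^\star)$ the $\min$ caps the true integrand at $g_\epsilon(x^\star,y^\star)$, and the same cap is achieved by capping at the computable $h_\epsilon(x^\star,y^\star)$, since after the most recent $y$-move the two coincide by construction. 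For the remaining samples I would prove that whenever $x^\star{+}\sigma\zeta$ lies in the same connected ``greedy basin'' as $x^\star$, the path launched by \textsc{AscendToLocalMax} at $y^\star$ already attains the supremum defining $g_\epsilon(x^\star{+}\sigma\zeta,y^\star)$ up to $\epsilon$; this is where the $C^2$-smoothness and Lipschitz-Hessian assumptions should be invoked via a continuity-of-greedy-paths lemma. Combining this structural fact with a high-probability union bound over the polynomially many outer iterations and over all path branchings encountered inside \textsc{AscendToLocalMax} yields the success probability $9/10$, producing an $(\epsilon^\star,\sigma)$-greedy adversarial equilibrium with $\epsilon^\star\leq\epsilon$.
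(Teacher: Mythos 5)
Your skeleton matches the paper's: an inner greedy-ascent routine whose endpoint value $h_\epsilon(x,y)$ lower-bounds $g_\epsilon(x,y)$, the observation that $h_\epsilon(x^\star,y^\star)=g_\epsilon(x^\star,y^\star)$ right after a $y$-move, zeroth-order (Stein-type) gradient estimation of the smoothed truncated function, and alternating updates. But there are two genuine gaps.

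First, the ``continuity-of-greedy-paths'' lemma you lean on to control the downward bias of $h_\epsilon$ is false in general, and the paper is built around precisely this failure. At a point $x^\star+\sigma\zeta$ arbitrarily close to $x^\star$, the single path computed by \textsc{AscendToLocalMax} can land at a local maximum whose value is far below $\sup\{f(x^\star+\sigma\zeta,z):z\in S_{\epsilon,x^\star+\sigma\zeta,y^\star}\}$ (Example \ref{discontinuity_example}; there may be infinitely many greedy branches out of $y^\star$), so no basin-restricted continuity argument certifies $h_\epsilon\approx g_\epsilon$ at the sampled points. The paper never needs such a certificate: the argument in Lemma \ref{Lemma_SharedLM} is entirely one-sided. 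Because the integrand is truncated at the level $g_\epsilon(x^\star,y^\star)=h_\epsilon(x^\star,y^\star)$, the quantity $\min(g_\epsilon(x^\star+\sigma\zeta,y^\star),g_\epsilon(x^\star,y^\star))-g_\epsilon(x^\star,y^\star)$ is nonpositive and is dominated in absolute value by the corresponding $h_\epsilon$-quantity (via $h_\epsilon\le g_\epsilon$), so a bound on the \emph{expected magnitude} of the $h_\epsilon$-based stochastic gradient transfers to the $g_\epsilon$-based one. That expected-magnitude bound is not obtained by Monte-Carlo estimation but is \emph{certified by the algorithm's termination condition}: the hill-climbing loop (Lines \ref{RebootStart_hill}--\ref{RebootEnd_hill} of Algorithm \ref{alg:LocalMin-max}) only halts after $\mathcal{I}_3$ consecutive random perturbations fail to decrease $h_\epsilon$ by $\gamma_1$, which with high probability forces $\mathbb{E}\|\mathcal{H}\|$ to be small (Proposition \ref{prop_SGVariance}). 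Your proposal has no analogue of this accept/reject certification, and for the second-order condition the bias problem is worse: there is no one-sided inequality for Hessians, which is why the paper needs the separate contradiction argument in Lemma \ref{Lemma_SharedLM} (stepping a distance $t=\sigma^7\sqrt{\epsilon}/(20b)$ along the putative negative-curvature direction and deriving $\mathfrak{g}<\mathfrak{h}$, contradicting $\mathfrak{h}\le\mathfrak{g}$). A Stein estimator of $\nabla^2 S$ built from $h_\epsilon$ cannot certify $\lambda_{\min}(\nabla^2 S)\ge-\sqrt{\epsilon}$ without this.

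Second, the termination argument is incomplete. Monotonicity of $g_\epsilon(x^\star,\cdot)$ under $y$-moves is true (greedy paths concatenate), but it does not by itself bound the iteration count, and a cubic-Newton step that decreases $S$ does not decrease any quantity you can track across iterations: $S$ is redefined after every $y$-move (its truncation level and reference point change), and you cannot evaluate $g_\epsilon$ anyway. The paper's potential is $f(x_i,y_i)=h_\epsilon(x_i,y_i)$: an $x$-move is \emph{accepted only if} it decreases $h_\epsilon$ by $\gamma_1$, and Equation \eqref{h_unchanged} shows a $y$-move leaves $h_\epsilon$ unchanged at the new $x$, so boundedness of $f$ gives $O(b/\gamma_1)$ iterations. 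You need either this accept/reject structure or some other explicitly monotone, computable potential.
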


\noindent
As noted earlier, our result does not require additional assumptions on $f$ such as convexity or monotonicity \cite{thekumparampil2019efficient, rafique2018non, nouiehed2019solving} or sufficient bilinearity \cite{abernethy2019last}.
Our algorithm also converges from any initial point. 
 This is in contrast to many previous works \cite{heusel2017gans, minmax_Jordan, adolphs2018local, wang2020ridge}, which assume that there exists a stationary point for their algorithm on the function $f$, and that their algorithm is initialized somewhere in the region of attraction for this stationary point.
 To the best of our knowledge,  greedy adversarial equilibrium is the first model for optimization in the presence of an adversary which is both guaranteed to exist and can be found efficiently in a general setting where $f$ is $C^2$-smooth, bounded, and has Lipschitz Hessian. 
 We expect it to find further use in learning in the presence of adversarial agents.
We note that we have not tried to optimize the order of the polynomial running time bound in Theorem \ref{thm:GreedyMin-max}. 
Finally, the greedy adversarial equilibrium our algorithm finds depends on the initial point $(x_0, y_0)$.  
 To  find other greedy adversarial equilibria, one can start from different initial points.

\section{Technical overview} \label{ses_alg_overview}
In this section we give an overview of our algorithm and of the proof of Theorem \ref{thm:GreedyMin-max}. In Section \ref{sec_algorithm_overview_full} we give a full description of the algorithm. 
In Section \ref{sec_proof_of_main_theorem} we give the full proof of Theorem \ref{thm:GreedyMin-max}.
 To simplify the exposition, we set $L = b = 1$,  $0<\epsilon< 1$, and $\sigma = \frac{1}{\sqrt{d}}$.
 In particular, if $f$ is $1$-bounded with $1$-Lipschitz Hessian, it is also $4$-Lipschitz with $2$-Lipschitz gradient.
 
 \subsection{An efficiently computable second-order local approximation to $\max_{z} f(x,z)$} \label{sec_local_approx}
 Ideally, we would like our algorithm to be able to compute the global maximum $\max_{z} f(x,z)$ at any point $x$.
However, since $f$ may be nonconvex-nonconcave, finding the global maximum may be intractable in our setting.
 
 Instead, starting from some initial point $z \leftarrow y$, we use a second-order maximization algorithm (Algorithm \ref{alg:InnerMaxLoop}) to find an $(\epsilon, \sqrt{\epsilon})$-approximate local maximum of $f(x,\cdot)$.
At each step, we would like our maximization algorithm to be able to rapidly decrease the value of $f(x,z)$ from any point $z$ that is not an $(\epsilon, \sqrt{\epsilon})$-approximate local maximum of $f(x,\cdot)$, that is, if $z$ is such that either $\|\nabla_y f(x,z)\| > \epsilon$ or $\lambda_{\mathrm{max}}(\nabla_y^2 f(x,z)) > \sqrt{\epsilon}$.
Towards this end, if $\|\nabla_y f(x,z)\| > \epsilon$, we have the max-player make an update $$z \leftarrow z+\mu_1 \nabla_y f(x,z)$$ for some step size $\mu_1>0$. %
  If we set
 \begin{equation}  \label{eq_mu1} 
 \mu_1 \leq \frac{1}{4},
 \end{equation}
  we have that, since $f$ has $2$-Lipschitz gradient, $$\nabla_y f(x,p)^\top \frac{\nabla_y f(x,z)}{\|\nabla_y f(x,z)\|} \geq \frac{1}{2} \|\nabla_y f(x,z)\|  > \frac{1}{2} \epsilon$$ at every point $p$ on the line segment $[z, z+\mu_1 \nabla_y f(x,z)]$.  
  This means that $f$ increases by at least $\frac{1}{2} \epsilon \times \mu_1  \|\nabla_y f(x,z)\|  > \frac{1}{2} \mu_1 \epsilon^2$  if the max-player makes this update.
On the other hand, if $\|\nabla_y f(x,z)\| \leq \epsilon$ but $\lambda_{\mathrm{max}}(\nabla_y^2 f(x,z)) > \sqrt{\epsilon}$, we have the max-player make an update $ z \leftarrow z+ \mu_3 \mathsf{a} v$ in the direction of the largest eigenvector $v$ of $\nabla_y^2 f(x,z)$ (with $\|v\|=1$), for some step size $\mu_3>0$, where the sign $\mathsf{a} \in \{-,1,1\}$ is chosen such that $(\mathsf{a} v)^\top \nabla_y f \geq 0$. %
If we set
 \begin{equation}  \label{eq_mu3}  
\mu_3 \leq \frac{1}{2} \sqrt{\epsilon}
 \end{equation}
we have that, since $f$ has $1$-Lipschitz Hessian, $v^\top \nabla_y^2 f(x,p) v > \frac{1}{2} \sqrt{\epsilon}$ for every point $p$ on the line segment $[z, z+ \mu_3 \mathsf{a} v]$.
 This means that $f$ increases by at least $\frac{1}{2} (\mu_3)^2 \sqrt{\epsilon}$ if the max-player makes this update.
Finally, if $\|\nabla_y f(x,z)\| \leq \epsilon$ and $\lambda_{\mathrm{max}}(\nabla_y^2 f(x,z)) \leq \sqrt{\epsilon}$, our maximization algorithm (Algorithm \ref{alg:InnerMaxLoop}) has reached an $(\epsilon, \sqrt{\epsilon})$-approximate local maximum $y'$, and it returns this point $y'$.

The above discussion implies that the value of $f$ increases by at least  $\Delta \coloneqq \min(\frac{1}{2} \mu_1 \epsilon^2, \frac{1}{2} (\mu_3)^2 \sqrt{\epsilon})$ at each iteration.
Since $f$ is also $1$-bounded, and each step of our method requires $O(1)$ oracle calls to the gradient and Hessian of $f$, and our method only stops once it reaches an  $(\epsilon, \sqrt{\epsilon})$-approximate local maximum, it uses at most $O(\frac{1}{\Delta})$ oracle calls to find an $(\epsilon, \sqrt{\epsilon})$-approximate local maximum $y'$ (Lemmas \ref{lemma:RunningTime}, \ref{lemma:Player2}).
 Thus, the value of $f$ at this  $(\epsilon, \sqrt{\epsilon})$-approximate local maximum $y'$, which we denote by the function $h_\epsilon$,
 \begin{equation} \label{eq_h_eps}  
 h_\epsilon(x,y) = f(x, y'),
 \end{equation}
 gives us a local approximation for $\max_{z} f(x,z)$, which can be computed in $O(\frac{1}{\Delta})$ oracle calls for the gradient, Hessian, and value of $f$. %
Moreover, as we explain in Remark \ref{rem_greedy_path}, the steps of our optimization method form an $\epsilon$-greedy path from $y$ to $y'$.%

\begin{remark}[{Our second-order maximization algorithm computes an $\epsilon$-greedy path}] \label{rem_greedy_path}
We have shown that at each point $p$ on a line segment $[z, z']$ connecting two consecutive steps $z$ and $z'$ of our maximization algorithm, we have that either  $\nabla_y f(x,p)^\top u > \frac{1}{2} \epsilon$  or $u^\top \nabla_y^2 f(x,p) u > \frac{1}{2} \sqrt{\epsilon}$, where $u$ is the unit vector $u = \frac{z'-z}{\|z'-z\|}$.
 Therefore, at any point on the unit-speed path $\varphi_t$ made up of the line segments connecting the consecutive steps of our algorithm, we have that either $\frac{\mathrm{d}}{\mathrm{d}t} f(x, \varphi_t) > \frac{1}{2} \epsilon$, or $\frac{\mathrm{d}^2}{\mathrm{d}t^2} f(x, \varphi_t)  > \frac{1}{2} \sqrt{\epsilon}$.
This implies that the path traced by our algorithm is a $\frac{1}{2} \epsilon$-greedy path. 

We show that, for a smaller choice of step sizes than required by \eqref{eq_mu1} and \eqref{eq_mu3}, namely for  $\mu_1, \mu_3 = \mathrm{poly}\left(\frac{1}{d}, \epsilon\right) $, this path is $\epsilon'$-greedy where $|\epsilon- \epsilon'| = \mathrm{poly}\left(\frac{1}{d}, \epsilon\right)$.
 The fact that $\epsilon' \neq \epsilon$ causes some technical issues which we deal with in the full algorithm\footnote{In the full algorithm we deal with this issue by initially computing an $\epsilon'$-greedy path for $\epsilon' = \frac{\epsilon}{2}$ at the first iteration, and then slowly increasing $\epsilon'$ at each iteration.} in Section \ref{sec:algorithm} and in our proof (Section 
 \ref{sec_proof_of_main_theorem}).  To improve readability, we ignore these issues in this technical overview and assume that the path is $\epsilon$-greedy.
 \end{remark}

\subsection{Finding a point $(x^\star, y^\star)$ which is a local min for $h_\epsilon(\cdot, y^\star)$ and a local max for $f(x^\star, \cdot)$} \label{sec_local_min_local_max}

Now that we have a subroutine for computing $h_\epsilon$, our next goal is to find an $(\epsilon, \sigma)$-approximate local minimum in the $x$ variable for $h_\epsilon$.
If we were able to compute the global maximum $\max_{z} f(x,z)$, it would be enough for our algorithm to find a global minimizer $x_{\mathrm{global}}$ for $\max_{z} f(x,z)$, and to then find a global maximizer $y_{\mathrm{global}}$ for $f(x_{\mathrm{global}}, \cdot)$.
Since $\max_{z} f(x,z)$ is only a function of $x$, the point $(x_{\mathrm{global}}, y_{\mathrm{global}})$ would still be a global min-max point regardless of which global maximizer $y_{\mathrm{global}}$ we find.
Here we encounter a difficulty:
\paragraph{Obstacle 1:}
Since the function $h_\epsilon(x,y)$ is a {\em local} approximation for $\max_z f(x,z)$, it depends both on $x$ and on the initial point $y$.
 If our algorithm were to first find an $(\epsilon, \sigma)$-approximate local minimum $x^\star$ for $h_\epsilon(\cdot, y)$, and then search for an  $(\epsilon, \sqrt{ \epsilon})$-approximate local maximum $y^\star$ for $f(x^\star, \cdot)$, the point  $x^\star$ may not be an $(\epsilon, \sigma)$-approximate local minimum for  $h_\epsilon(\cdot, y^\star)$ even though it is an $(\epsilon, \sigma)$-approximate local minimum for $h_\epsilon(\cdot, y)$.
To get around this problem, we use a different update rule for the min-player and max-player:
 \paragraph{Idea 1:} Alternate between a step where the min-player makes an update $w$ to $x$ which decreases the value of $h_\epsilon(\cdot, y)$ by some amount $\gamma_1$, and a step where the max-player uses the maximization subroutine discussed in Section \ref{sec_local_approx}  (Algorithm \ref{alg:InnerMaxLoop}), with initial point $y$, to find a $(\epsilon, \sqrt{ \epsilon})$-approximate local maximum $y'$ for $f(x, \cdot)$.

Since $h_\epsilon$ is the value of $f$ at the $(\epsilon, \sqrt{ \epsilon})$-approximate local maximum $y'$, we therefore have that $h_\epsilon(w, y) = f(w, y')$.
Moreover, since $y'$ is an $(\epsilon, \sqrt{ \epsilon})$-approximate local maximum for $f$, and Algorithm \ref{alg:InnerMaxLoop} stops whenever it reaches an $(\epsilon, \sqrt{ \epsilon})$-approximate local maximum, $y'$ is a stationary point for Algorithm \ref{alg:InnerMaxLoop}, which means that $f(w, y') = h_\epsilon(w, y')$.
Thus, 
\begin{equation} \label{h_unchanged} 
h_\epsilon(w, y) = f(w, y') = h_\epsilon(w, y').
\end{equation}
Moreover, since $f$ is $b$-bounded, and $h_\epsilon(x,y)$ is the value of $f(x,\cdot)$ at the approximate local maximum obtained by the maximization subroutine in Section \ref{sec_local_approx}, $h_\epsilon$ must also be $b$-bounded.
Thus, since the max-player's update does not change the value of $h_\epsilon$  \eqref{h_unchanged}, if we can show that whenever $x$ is not an $(\epsilon, \sigma)$-approximate local minimum of $f(\cdot, y)$ the min player can find an update to $x$ which decreases the value of $h_\epsilon(x,y)$ by at least some fixed amount $\gamma_1>0$, then we would have that the value of $h_\epsilon$ decreases monotonically at each iteration by at least $\gamma_1$. In that case, our algorithm would have to converge after  $\frac{2b}{\gamma_1}$ iterations to a point $(x,y')$ where $x$ is an $(\epsilon, \sigma)$-approximate local minimum for $h_\epsilon(\cdot, y')$ and $y'$ is an $(\epsilon, \sqrt{\epsilon})$-approximate local maximum for $f(x,\cdot)$.

\subsection{Escaping ``saddle points" of the discontinuous function $h_\epsilon$} \label{sec_escape_saddle}
 Before we can apply Idea 1, we would like to find a way for the min-player to find updates for $x$ which decrease the value of $h_\epsilon(x,y)$ by some amount at least $\gamma_1$  whenever the current value of $x$ is not an  $(\epsilon, \sigma)$-approximate local minimum (in the sense of Definition \ref{def_discontinuos_local_min}). 
  However, since $h_\epsilon$ is discontinuous we encounter a second obstacle:
 \paragraph{Obstacle 2:}
Finding an $(\epsilon, \sigma)$-approximate local minimum of $h_\epsilon$ requires our algorithm to escape saddle points of the truncated and smoothed function\footnote{We use a lowercase $s$ for the truncated and smoothed version of $h_\epsilon$ to distinguish it from the truncated and smoothed version of $g_\epsilon$ used in Definition \ref{def:local_min-max_formal}.} $s(w) = \mathbb{E}_{\zeta \sim N(0,I_d)}[\min(h_\epsilon(w + \sigma \zeta, y), h_\epsilon(x ,y))]$. 
 Ideally, we would like to run a ``noisy" version of gradient descent, which can allow us to escape saddle points (see e.g., \cite{ge2015escaping}), but we do not have access to the gradient of $s$.

 To get around this problem, we compute a stochastic gradient for $s(w)$ which can be computed without access to a gradient:
\paragraph{Idea 2:}
We use a stochastic gradient $\Gamma(w)$ for $s(w)$ which can be computed with access only to the value of $h_\epsilon$; roughly
\begin{equation} \label{eq_gradient_free_SG}  
\Gamma(w) = \frac{\zeta}{\sigma} \min\left(h_\epsilon(w + \sigma \zeta, y), h_\epsilon(x ,y)\right)
\end{equation}
where $\zeta \sim N(0, I_d)$, and $\mathbb{E}[ \Gamma(w)] = s(w)$ (see e.g. \cite{flaxman2005online}). 
This allows us to use a noisy version of stochastic gradient descent (SGD) to escape saddle points of $s(w)$.%

More specifically, starting at the initial point $w \leftarrow x$, each step of this ``noisy" SGD is given by
\begin{equation}\label{eq_noisySGD}
w \leftarrow w -\eta \Gamma(w) + \alpha \xi
\end{equation}
where $\xi \sim N(0,I_d)$ and $\eta, \alpha$ are hyperparameters .
We then apply concentration bounds for our stochastic gradient (Proposition %
 \ref{prop_SGVariance}) to results about noisy SGD \cite{NonconvexOptimizationForML} to show that, whenever $x$ is not an $(\epsilon, \sigma)$-approximate local minimum for $h_\epsilon(\cdot,y)$, with high probability, this noisy SGD, with hyperparameters $\eta, \alpha = \mathrm{poly}\left(\frac{1}{d}, \epsilon\right)$, can find an update for $x$ which decreases the value of $h_\epsilon$ by at least $\gamma_1=\mathrm{poly}\left(\frac{1}{d},  \epsilon\right)$ after $\mathcal{I} =  \mathrm{poly}\left(d, \frac{1}{\epsilon}\right)$ iterations of Equation \eqref{eq_noisySGD}    (Proposition \ref{Prop_NoisySGD}).

\subsection{Using $h_\epsilon$ to minimize the greedy max function} \label{sec_exact_local_min}
Although we have shown how to find an $(\epsilon, \sigma)$-approximate local minimum of $h_\epsilon$ (In the sense of Definition \ref{def_discontinuos_local_min}), our goal is to find an $(\epsilon, \sigma)$-approximate local minimum of the greedy max function $g_\epsilon$.   
For simplicity, let us start by supposing that we were able to find an {\em exact} local minimum for $h_\epsilon$. %
Since $g_\epsilon(x, y)$ is the maximum value of $f(x, z)$ that is attainable at the endpoint $z$ of {\em any} $\epsilon$-greedy path that starts at $y$, and, as we have shown in Remark \ref{rem_greedy_path}, the steps of the second-order optimization algorithm we use to compute $h_\epsilon(x,y)$ form one such greedy path whose endpoint $y'$ determines the value of $h_{\epsilon}$ (Equation \eqref{eq_h_eps}), we have that (Proposition \ref{Prop_lower_bound})
\begin{equation} \label{eq_Tech1}
h_{\epsilon} (x,y) \leq g_{\epsilon}(x,y) \qquad \forall (x,y) \in \mathbb{R}^d \times \mathbb{R}^d.
\end{equation}
However, it is still not clear how $h_{\epsilon}$ can help us find a local minimizer for $g_{\epsilon}$:

\paragraph{Obstacle 3:}
We want to minimize the greedy max function $g_\epsilon$, but computing the value of $g_\epsilon$ may be intractable, and we only have access to a lower bound $h_\epsilon \leq g_\epsilon$.
 We would like to somehow use our ability to compute $h_\epsilon$ to find a local minimum of $g_\epsilon$.
Towards this end, we observe that if any point $y^\star$ is an $(\epsilon, \sqrt{\epsilon})$-approximate local maximum, then 
 the conditions from the definition of approximate local maximum, $\|\nabla_y f(x,y^\star)\| \leq \epsilon$ and $\lambda_{\mathrm{min}}( \nabla^2 f(x, y^\star)) \leq \sqrt{\epsilon}$, imply that there is no unit speed path $\varphi: [0,\tau] \rightarrow \mathbb{R}^d$ starting at the point $\varphi_0 = y^\star$ for which $\frac{\mathrm{d}}{\mathrm{d}t} f(x, \varphi_0) >\epsilon$ or $\frac{\mathrm{d}^2}{\mathrm{d}t^2} f(x, \varphi_t) > \sqrt{\epsilon}$ on all $t \in (0, \tau]$.
This means that, if $y^\star$ is an $(\epsilon, \sqrt{\epsilon})$-approximate local maximum of $f(x, \cdot)$, then it is the {\em only} $(\epsilon, \sqrt{\epsilon})$-approximate local maximum reachable by any $\epsilon$-greedy path starting at $y^\star$.
In other words, we have that,
\begin{equation} \label{eq_Tech2}
h_\epsilon(x, y^\star) = g_\epsilon(x,y^\star)
\end{equation}
whenever $y^\star$ is an $(\epsilon, \sqrt{\epsilon})$-approximate local maximum for $f(x,\cdot)$ (Proposition \ref{Prop_fixed_point}).
Together, \eqref{eq_Tech1} and \eqref{eq_Tech2} imply the following:
\paragraph{Idea 3:}
 For any pair of points $(x^\star, y^\star)$ where $y^\star$ is an $(\epsilon, \sqrt{\epsilon})$-approximate local maximum for $f(x^\star, \cdot)$, we have that if $x^\star$ is an exact local minimum for $h_\epsilon(\cdot, y^\star)$ it must also be an exact local minimum for $g_\epsilon(\cdot, y^\star)$.

 This is because, if $x^\star$ is an exact local minimum, then there is an open ball $B$ containing $x^\star$ where $h_\epsilon(x^\star, y^\star) = \min_{w \in B} h_\epsilon(w, y^\star)$.
This implies that
\begin{equation}
g_\epsilon(x^\star, y^\star) \stackrel{\textrm{Eq.}\eqref{eq_Tech2}}{=} h_\epsilon(x^\star, y^\star) = \min_{w \in B} h_\epsilon(w, y^\star)
\stackrel{\textrm{Eq.}\eqref{eq_Tech1}}{\leq} \min_{w \in B} g_\epsilon(w, y^\star) 
\end{equation}
and hence that $x^\star$ minimizes $g_\epsilon(\cdot, y^\star)$ on the ball $B$.

Finally, we extend the result in Idea 3, which holds for exact local minima, to a similar result (Lemma \ref{Lemma_SharedLM}) that holds for {\em approximate} local minima.
 We show that if, roughly, the variance of our stochastic gradient $\Gamma$ (Equation \eqref{eq_gradient_free_SG}) satisfies a $\mathrm{poly}(\frac{1}{d}, \epsilon)$ upper bound (Proposition \ref{prop_SGVariance}), then for any pair of points $(x^\star, y^\star)$ where $y^\star$ is an $(\epsilon, \sqrt{\epsilon})$-approximate local maximum of $f(x^\star, \cdot)$ (in the sense of Equation \eqref{eq:ApproximateLocalMin_formal}), if $x^\star$ is an $(\epsilon, \sigma)$-approximate local minimum of $h_\epsilon(\cdot, y^\star)$ then it is also an $(\epsilon, \sigma)$-approximate local minimum for $g_\epsilon(\cdot, y^\star)$ (in the sense of Definition \ref{def_discontinuos_local_min}).
 The proofs of Lemma \ref{Lemma_SharedLM} and Proposition \ref{prop_SGVariance} are technical and summarized in Section \ref{sec_proof_of_main_theorem}.

\subsection{Showing convergence to a greedy adversarial equilibrium in $\mathrm{poly}\left(d, \frac{1}{\epsilon}\right)$ oracle calls} \label{sec_runtime}
From Idea 1 we have that our algorithm terminates after $O\left(\frac{1}{\gamma_1}\right)$ iterations consisting of an update for the min-player and max-player, if one can bound by some number $\gamma_1>0$ the amount by which each update for the min-player decreases the value of $h_\epsilon(x,y)$.

  From Idea 2 (Section \ref{sec_escape_saddle}) we have that, with high probability, noisy SGD can allow the min-player to find an update which decreases the value of $h_\epsilon$ by an amount  $\gamma_1=\mathrm{poly}\left(\frac{1}{d},  \epsilon\right)$, and that this can be accomplished in $\mathcal{I} =  \mathrm{poly}\left(d, \frac{1}{\epsilon}\right)$  computations of $h_\epsilon$.
 
 In Section \ref{sec_local_approx}, we show that our maximization subroutine can compute the value of $h_\epsilon$ in at most $O\left(\frac{1}{\Delta}\right)$ oracle calls, where $\Delta = \min\left\{\frac{1}{2} \mu_1 \epsilon^2, \frac{1}{2} (\mu_3)^2 \sqrt{\epsilon}\right\}$.
From Equations \eqref{eq_mu1} and \eqref{eq_mu3}, roughly speaking, we may set $\mu_1 = \frac{1}{4}$ and $\mu_3 =  \frac{1}{2} \sqrt{\epsilon}$. \footnote{As mentioned in Remark \ref{rem_greedy_path}, in the full algorithm we use somewhat smaller hyperparameter values $\mu_1, \mu_3 = \mathrm{poly}(\frac{1}{d}, \epsilon)$ to ensure that the path computed by Algorithm \ref{alg:InnerMaxLoop} is $\epsilon$-greedy instead of $\frac{1}{2} \epsilon$-greedy.}  %

This means that, with high probability, the number of oracle calls until our algorithm terminates is 
$
O\left(\frac{1}{\Delta} \times \mathcal{I} \times \frac{1}{\gamma_1}\right) = \mathrm{poly}\left(d, \frac{1}{\epsilon}\right).
$
Finally, from Idea 1 we also have that, if our algorithm terminates, it returns a pair of points $(x^\star, y^\star)$ where $x^\star$ is an $(\epsilon, \sigma)$-approximate local minimum for $h_\epsilon(\cdot, y^\star)$ (in the sense of Definition \ref{def_discontinuos_local_min}) and $y^\star$ is an $(\epsilon, \sqrt{ \epsilon})$-approximate local maximum for $f(x^\star,\cdot)$ (in the sense of Equation \eqref{eq:ApproximateLocalMin_formal}).

Applying Idea 3 (or rather its extension to {\em approximate} local minima), we have that  $x^\star$ is an $(\epsilon, \sigma)$-approximate local minimum for $h_\epsilon(\cdot, y^\star)$, which implies that $(x^\star, y^\star)$ is a $(\epsilon, \sigma)$-greedy adversarial equilibrium.
In other words, our algorithm returns an $(\epsilon, \sigma)$-greedy adversarial equilibrium after at most $ \mathrm{poly}\left(d, \frac{1}{\epsilon}\right)$ oracle calls for the gradient, Hessian, and value of $f$.

\subsection{Summary of algorithm}
The discussion in Sections \ref{sec_local_approx}-\ref{sec_runtime} leads us to the following algorithm (see Algorithm \ref{alg:LocalMin-max} for the full description). 
In addition to oracles for $f$, $\nabla_y f$ and $\nabla^2_y f$, our algorithm also takes as input an initial point $(x_0, y_0)$ in $\mathbb{R}^d \times \mathbb{R}^d$, and parameters $\epsilon, \sigma >0$  (recall we have set $\sigma = \frac{1}{\sqrt{d}}$ in this section). \footnote{In the full description of the algorithm we set  $(x_0, y_0) = (0, 0)$ for simplicity.}

\begin{enumerate}
\item  Starting at the initial point $(x_0, y_0)$, our algorithm first uses the second-order optimization method described in Section \ref{sec_local_approx} (Algorithm \ref{alg:InnerMaxLoop}) to find a point $y_1$ which is an  $(\epsilon, \sqrt{\epsilon})$-approximate local maximum for $f(x_0, \cdot)$.

\item Next, starting from iteration $i=1$, and setting $x_1 \leftarrow x_0$,  our algorithm uses noisy SGD (Equation \eqref{eq_noisySGD})\footnote{In the full algorithm we combine noisy SGD with a a random hill-climbing method (Lines \ref{RebootStart_hill}-\ref{RebootEnd_hill}  of Algorithm \ref{alg:LocalMin-max}).}
 to search for a point $x_{i+1}$ for which,
\be \label{eq_decrease}
 h_\epsilon(x_{i+1}, y_{i}) \leq h_\epsilon(x_i, y_{i})  - \gamma_1,
\ee
   where $\gamma_1 = \mathrm{poly}(\frac{1}{d},  \epsilon)$ (Lines \ref{RebootStart}-\ref{RebootEnd} of Algorithm \ref{alg:LocalMin-max}).
   When running noisy SGD, roughly speaking, our algorithm uses the stochastic gradient $\Gamma$ (the same stochastic gradient as in Equation \eqref{eq_gradient_free_SG}),
   \begin{equation}
 \Gamma(w) = \frac{\zeta}{\sigma} \min \left(h_\epsilon(w + \sigma \zeta, y_{i}), h_\epsilon(x_i,y_{i})\right),
\end{equation}
where $\zeta \sim N(0,I_d)$ and $h_\epsilon$ is computed using the second-order optimization method of Section \ref{sec_local_approx} (Algorithm \ref{alg:InnerMaxLoop}).

  \item
If our algorithm is able to find an update $x_{i+1}$ which satisfies Inequality \eqref{eq_decrease}, it uses Algorithm \ref{alg:InnerMaxLoop} to compute a point $y_{i+1}$, which is an $(\epsilon, \sqrt{\epsilon})$-approximate local maximum for $f(x_{i+1},\cdot)$,
 sets $i\leftarrow i+1$,  and goes back to Step 2.
  Otherwise, if it cannot find such an update, it concludes that $x_i$ is an $(\epsilon, \sigma)$-approximate local minimum for $h_\epsilon(\cdot ,y_{i}$), and, hence, that $(x_i,y_{i})$ is a  $(\epsilon, \sigma)$-greedy adversarial equilibrium.
\end{enumerate}

\section{Discussions and limitations} \label{Sec_Discussions_and_Limitations}

\subsection{How does our greedy adversarial equilibrium compare to previous models?} \label{sec_comparison_to_previous_notions}

In previous papers different models which can be seen as alternatives to min-max optimization have been considered in the nonconvex setting. 
 A number of papers \cite{daskalakis2018limit, heusel2017gans, adolphs2018local} consider the local min-max point model (sometimes called a ``local Nash" point or ``local saddle" point). %
Any point which is a local min-max point is also a greedy adversarial equilibrium for small enough $\sigma>0$ (Corollary \ref{corollary_exact_greedy_minmax}).

  To prove Corollary \ref{corollary_exact_greedy_minmax}, we use the following lemma, which states that any exact local minimum $x^\star$ of a possibly discontinuous function is also an approximate local minimum for the function $\psi$ for small enough $\epsilon, \sigma>0$ (in the sense of Definition \ref{def_discontinuos_local_min}).  We then use this Lemma to show that any local min-max point is also a greedy adversarial equilibrium for small enough $\epsilon, \sigma>0$ (Corollary \ref{corollary_exact_greedy_minmax}).

\begin{lemma} \label{lemma_local_nash}
Suppose that $x^\star$ is an exact local minimum for $\psi: \mathbb{R}^d \rightarrow \mathbb{R}$, and that there is a number $b>0$ such that $|\psi(x)| \leq b$ for all $x \in \mathbb{R}^d$.  Then for any $\epsilon>0$ there exists $\sigma^\star>0$ such that for any $0< \sigma \leq \sigma^\star$,  $x^\star$ is an approximate local minimum for the function $\psi$ with smoothing $\sigma$.
\end{lemma}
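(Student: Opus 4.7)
The plan is to exploit the fact that the Gaussian smoothing of a bounded function that is locally constant near $x^\star$ becomes, in the limit $\sigma \to 0$, flat at $x^\star$ to all orders. Since $x^\star$ is an exact local minimum of $\psi$, pick $\delta > 0$ with $\psi(u) \geq \psi(x^\star)$ whenever $\|u - x^\star\| \leq \delta$. Define $\phi(u) \coloneqq \min(\psi(u), \psi(x^\star))$ and $\tilde\phi(u) \coloneqq \phi(u) - \psi(x^\star)$. Then $\tilde\phi \leq 0$ everywhere, $|\tilde\phi| \leq 2b$, and, crucially, $\tilde\phi \equiv 0$ on the ball $B(x^\star, \delta)$.

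Writing $\mathcal{S}(x) = \int \phi(u)\, p_\sigma(u-x)\, du$ for $p_\sigma$ the density of $N(0, \sigma^2 I_d)$, and differentiating under the integral sign (justified by dominated convergence since the integrand is uniformly bounded and the derivatives of $p_\sigma$ are polynomials in $u-x$ times $p_\sigma$), the change of variables $u = x + \sigma \zeta$ yields the Stein-type identities
\begin{align*}
\nabla_x \mathcal{S}(x) &= \mathbb{E}_{\zeta \sim N(0, I_d)}\!\left[\phi(x + \sigma \zeta)\, \tfrac{\zeta}{\sigma}\right],\\
\nabla_x^2 \mathcal{S}(x) &= \mathbb{E}_{\zeta \sim N(0, I_d)}\!\left[\phi(x + \sigma \zeta)\, \tfrac{\zeta \zeta^\top - I_d}{\sigma^2}\right].
\end{align*}
Since $\mathbb{E}[\zeta] = 0$ and $\mathbb{E}[\zeta\zeta^\top - I_d] = 0$, I can subtract the constant $\psi(x^\star)$ inside each expectation and replace $\phi$ by $\tilde\phi$. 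Because $\tilde\phi$ vanishes on $B(x^\star, \delta)$, the two expectations evaluated at $x = x^\star$ are supported on the event $\{\|\zeta\| > \delta/\sigma\}$.

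Using $|\tilde\phi| \leq 2b$ and $|(v^\top \zeta)^2 - 1| \leq \|\zeta\|^2 + 1$ for any unit $v$, I then obtain
\begin{align*}
\|\nabla_x \mathcal{S}(x^\star)\| &\leq \tfrac{2b}{\sigma}\, \mathbb{E}\!\left[\|\zeta\|\, \mathbbm{1}\{\|\zeta\| > \delta/\sigma\}\right],\\
|v^\top \nabla_x^2 \mathcal{S}(x^\star) v| &\leq \tfrac{2b}{\sigma^2}\, \mathbb{E}\!\left[(\|\zeta\|^2 + 1)\, \mathbbm{1}\{\|\zeta\| > \delta/\sigma\}\right].
\end{align*}
Standard Gaussian tail and truncated-moment estimates show that each right-hand side decays like $e^{-c\delta^2/\sigma^2}$ times a polynomial in $\delta/\sigma$ as $\sigma \to 0$, and this super-polynomial decay dominates the $1/\sigma$ and $1/\sigma^2$ prefactors. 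Choosing $\sigma^\star$ small enough that the first bound is $\leq \epsilon$ and the second is $\leq \sqrt\epsilon$ then verifies both conditions of Definition \ref{def_discontinuos_local_min}; the Hessian bound in particular yields $\lambda_{\min}(\nabla_x^2 \mathcal{S}(x^\star)) \geq -\sqrt\epsilon$.

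The main potential obstacle is justifying the differentiation-under-the-integral step when $\psi$ is merely bounded: this is routine via dominated convergence, but it must be handled with some care because the $k$-th spatial derivative of $p_\sigma$ contains a $1/\sigma^k$ factor, so the dominating integrand grows like $b \cdot \mathrm{poly}(\|\zeta\|)/\sigma^k$. Once the Stein identities are in place the rest of the argument is essentially a single Gaussian tail estimate together with the subtraction trick that turns $\phi$ into $\tilde\phi$; the geometric content of the lemma is entirely concentrated in the observation that $\tilde\phi$ vanishes on a fixed ball around $x^\star$ while the Gaussian with scale $\sigma$ concentrates on a ball of radius $O(\sigma\sqrt d)$.
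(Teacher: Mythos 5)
Your proposal is correct and follows essentially the same route as the paper's proof: both use the fact that the truncated function $\min(\psi(\cdot),\psi(x^\star))$ is constant on a $\delta$-ball around $x^\star$, differentiate the convolution under the integral sign via dominated convergence, subtract the constant so that only the Gaussian tail region $\{\|\zeta\|>\delta/\sigma\}$ contributes, and then let the super-polynomial Gaussian tail decay beat the $\sigma^{-1}$ and $\sigma^{-2}$ prefactors. Your Stein-identity formulation is just the change-of-variables form of the paper's direct differentiation of the Gaussian kernel, so no substantive difference.
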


\noindent
We defer the proof of Lemma \ref{lemma_local_nash} to Appendix \ref{Sec_Auxilliary}.

  \begin{corollary} \label{corollary_exact_greedy_minmax}
  Suppose that for any $\delta>0$, $(x^\star, y^\star)$ is a $(0, \delta)$-local min-max point for a $C^2$-smooth function $f: \mathbb{R}^d \times \mathbb{R}^d \rightarrow \mathbb{R}$, and that there is a number $b>0$ such that $|f(x,y)| \leq b$ for all $x,y \in \mathbb{R}^d$.  Then for any $\epsilon>0$ there exists  $\sigma^\star>0$ such that for every $0< \sigma \leq  \sigma^\star$ we have that $(x^\star, y^\star)$ is a  $(\epsilon, \sigma)$-greedy adversarial equilibrium for $f$.
\end{corollary}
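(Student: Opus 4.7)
The plan is to verify the four conditions of Definition~\ref{def:local_min-max_formal} directly at $(x^\star, y^\star)$, exploiting the fact that the hypothesis forces $y^\star$ to be an exact local maximum of $f(x^\star,\cdot)$ and $x^\star$ an exact local minimum of $f(\cdot,y^\star)$. The max-player conditions \eqref{eq:first_order_y} are immediate: smoothness together with the local-max property gives $\nabla_y f(x^\star,y^\star)=0$ and $\nabla_y^2 f(x^\star,y^\star)\preceq 0$, so $\|\nabla_y f(x^\star,y^\star)\|=0\leq \epsilon$ and $\lambda_{\max}(\nabla_y^2 f(x^\star,y^\star))\leq 0\leq\sqrt{L\epsilon}$ for any $\epsilon>0$. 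No tuning of $\sigma$ is needed for this half, so the remaining work is entirely in establishing \eqref{eq:first_order_x}.

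For the min-player conditions, I would reduce to Lemma~\ref{lemma_local_nash} applied to the bounded (but possibly discontinuous) function $\psi(x) \coloneqq g_\epsilon(x, y^\star)$: since the function $\mathcal{S}$ of Definition~\ref{def_discontinuos_local_min} built from $\psi$ is precisely the $S$ appearing in Definition~\ref{def:local_min-max_formal}, it suffices to show that $x^\star$ is an \emph{exact} local minimum of $\psi$, and the lemma then furnishes the promised threshold $\sigma^\star>0$ beyond which $x^\star$ satisfies \eqref{eq:first_order_x} for every $0<\sigma\leq\sigma^\star$. To produce the exact local-minimum property, I would establish two facts in a neighborhood $U$ of $x^\star$: (i) for every $x\in U$, $g_\epsilon(x,y^\star)\geq f(x,y^\star)$, because the length-zero path staying at $y^\star$ is trivially an $\epsilon$-greedy path for $f(x,\cdot)$, so $y^\star\in S_{\epsilon,x,y^\star}$ and contributes $f(x,y^\star)$ to the supremum in \eqref{eq_greedy_max}; and (ii) $g_\epsilon(x^\star,y^\star)=f(x^\star,y^\star)$. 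Granting (i) and (ii), the chain $g_\epsilon(x,y^\star)\geq f(x,y^\star)\geq f(x^\star,y^\star)= g_\epsilon(x^\star,y^\star)$ holds for every $x\in U$, where the middle inequality uses the exact local-minimum property of $f(\cdot,y^\star)$ at $x^\star$; thus $x^\star$ is an exact local minimum of $g_\epsilon(\cdot,y^\star)$.

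The main obstacle is claim (ii), which asserts that no non-trivial $\epsilon$-greedy path for $f(x^\star,\cdot)$ can leave $y^\star$. Because $y^\star$ is an exact local maximum, it is in particular an $(\epsilon,\sqrt{L\epsilon})$-approximate local maximum of $f(x^\star,\cdot)$ for every $\epsilon>0$, and one can then invoke Proposition~\ref{Prop_fixed_point} to obtain $g_\epsilon(x^\star,y^\star)=h_\epsilon(x^\star,y^\star)$; the maximization subroutine of Section~\ref{sec_local_approx} (Algorithm~\ref{alg:InnerMaxLoop}) returns its input immediately at any point already satisfying the approximate-local-max conditions, so $h_\epsilon(x^\star,y^\star)=f(x^\star,y^\star)$, yielding (ii). The delicate geometric content is thus encapsulated in Proposition~\ref{Prop_fixed_point}, whose proof must rule out curved or piecewise paths whose second directional derivative could potentially exceed $\sqrt{L\epsilon}$ even though $y^\star$ is a local maximum; once that is accepted, everything else in the corollary is monotonicity of suprema plus a direct appeal to Lemma~\ref{lemma_local_nash}.
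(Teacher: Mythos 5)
Your proposal is correct and follows essentially the same route as the paper's proof: verify the max-player conditions directly from the exact local-max property, show $g_\epsilon(x^\star,y^\star)=f(x^\star,y^\star)$ because no nontrivial $\epsilon$-greedy path can leave an exact local maximum, combine this with $g_\epsilon \geq f$ and the exact local-min property of $f(\cdot,y^\star)$ to conclude $x^\star$ is an exact local minimum of $g_\epsilon(\cdot,y^\star)$, and finish with Lemma~\ref{lemma_local_nash}. The only cosmetic difference is that you route claim (ii) through Proposition~\ref{Prop_fixed_point} (which is formally stated only for the algorithm's iterates) whereas the paper makes the same one-line argument directly; the underlying reasoning is identical.
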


  \begin{proof}
  Any point $(x^\star, y^\star)$ which is a $(0, \delta)$-local min-max point has the property that $x^\star$ is an exact local minimum of $f(\cdot, y^\star)$ and that $y^\star$ is an exact local maximum of $f(x^\star, \cdot)$. 
    This implies that $y^\star$ is an approximate local maximum of $f(x^\star, \cdot)$ for parameter $\epsilon$ and any parameter $\theta$ (in the sense of Inequality \eqref{eq:ApproximateLocalMin_formal}).  Thus, the only greedy path starting at $y^\star$ consists only of the point $\{y^\star\}$ itself, and hence we have that $g_\epsilon(x^\star, y^\star) = f(x^\star, y^\star)$. 
    Therefore, since $g_\epsilon(x,y) \geq f(x,y)$ for all $x,y$, the fact that $x^\star$ is  an exact local minimum of  $f(\cdot, y^\star)$ implies that it is also an exact local minimum of $g_\epsilon(x^\star, y^\star)$. 
    By Lemma \ref{lemma_local_nash} we therefore have that there exists $\sigma^\star > 0$ such that $x^\star$ is an $(\eps, \sigma)$-approximate local minimum of $g_\eps(\cdot, y^\star)$ for any $0<\sigma \leq \sigma^\star$  (in the sense of Definition \ref{def_discontinuos_local_min}).

    Since $y^\star$ is an approximate local maximum of $f(x^\star, \cdot)$ for parameter $\epsilon$ and any parameter $\theta$  (in the sense of Inequality \eqref{eq:ApproximateLocalMin_formal}) and  $x^\star$ is an $(\epsilon, \sigma)$-approximate local minimum of $g_\eps(\cdot, y^\star)$ (in the sense of Definition \ref{def_discontinuos_local_min}), we must have that $(x^\star, y^\star)$ is an $(\epsilon, \sigma)$-greedy adversarial equilibrium.
  \end{proof}

\noindent
Since local min-max points are not guaranteed to exist in general, previous algorithms which seek local min-max points oftentimes make strong assumptions on the function $f$.
For instance, \cite{abernethy2019last} show that if $f$ has $1$-Lipschitz gradient and satisfies a ``sufficient bilinearity'' condition-- that is, roughly speaking, if the cross derivative $\nabla^2_{xy} f(x,y)$ has all its singular values greater than some $\gamma>2$ at every $x,y \in \mathbb{R}^d$--then their algorithm reaches a point $(x^\star, y^\star)$ where $\|\nabla_x f(x^\star, y^\star)\| \leq \epsilon$ and  $\|\nabla_y f(x^\star, y^\star)\| \leq \epsilon$ 
 in $O\left(\frac{1}{\gamma^2} \log\frac{M}{\epsilon}\right)$ evaluations of a Hessian-vector product of $f$, where $M$ is the magnitude of  $\nabla f$ at the point where their algorithm is initialized.\footnote{
Note that, since such algorithms' dynamics should ideally not be attracted to minima or maxima of $f$, additional assumptions on $f$ are typically needed even to show convergence to a first-order stationary point--that is, a point $(x^\star, y^\star)$ where $\|\nabla_x f(x^\star, y^\star)\| \leq \epsilon$ and  $\|\nabla_y f(x^\star, y^\star)\| \leq \epsilon$.}

 In \cite{minmax_Jordan} the authors consider an alternative model to min-max optimization which incorporates the fact that in min-max optimization the min-player reveals her strategy before the max-player.  In their notion, both players are restricted to making updates in vanishingly small neighborhoods of the optimum point (although the size of the neighborhood for the min-player is allowed to vanish at a much faster rate than the neighborhood for the max-player). 
 One difference between our greedy adversarial equilibrium model and the model considered in \cite{minmax_Jordan} is that in \cite{minmax_Jordan} the max-player is able to compute a  global maximum (albeit when restricted to a ball of vanishingly small radius), while in our greedy adversarial equilibrium model the max-player is constrained to points reachable by a greedy path of any length.  That being said, our main result still holds if we restrict the greedy path of the max-player to be proportional to the updates made by the minimizing player (see Remark \ref{remark_neighborhood}). Another difference is that, while the solution point for the model in \cite{minmax_Jordan} is not guaranteed to exist in a general nonconvex-nonconcave setting (see Remark \ref{example_nonexistance}), our main result (Theorem \ref{thm:GreedyMin-max}) guarantees that any uniformly bounded function with Lipschitz Hessian has a greedy adversarial equilibrium.

  \begin{remark} \label{remark_neighborhood}
Theorem \ref{thm:GreedyMin-max} still holds if we restrict the greedy path to a ball whose radius is proportional to $\epsilon$, $L_1$ and the distance $\|x-x^\star \|$ between $x^\star$ and the minimizing player's update $x$. 
This is because, roughly speaking, any greedy path that leaves this ball would reach a point $y$ for which the value of $f$ at $(x,y)$ is greater than the value of $f$ at $(x^\star, y^\star)$. 
This implies that the truncated greedy max function $\min(g_\epsilon(x,y), g_\epsilon(x^\star,y^\star))$ would have the exact same value regardless of whether we restrict the max-player to such a ball, and the point $(x^\star, y^\star)$ guaranteed by Theorem \ref{thm:GreedyMin-max} would therefore still satisfy Definition \ref{def:local_min-max_formal}.
\end{remark}

\begin{remark} \label{example_nonexistance}
 The authors of \cite{minmax_Jordan} note that their model does not have any solution point on the function $f(x,y) = y^2 - 2xy$ on $[-1,1] \times [-1,1]$. 
  We can extend this function to a continuous function on all of $\mathbb{R}^2$ as follows. 
  Defining $\tilde{z} := | ((z+3) \mod 4) -2|-1$ for $z \in \mathbb{R}^2$, we have $f(x,y) = \tilde{y}^2 -2\tilde{x}\tilde{y}$  (in other words, we imagine that we put mirrors at each of the edges of the square $[-1,1] \times [-1,1]$, which reflect the value of the function across all of $\mathbb{R}^2$). 
    This function is uniformly bounded on all of $\mathbb{R}^2$ but does not have any point which is a solution point for the model in \cite{minmax_Jordan}.  
  We also note that one can smooth this function by convolution with a small-radius bump function of radius $r<\frac{1}{100}$, and these properties still hold for the smoothed function $\tilde{f}$.
  We can extend this example to many dimensions, for example by considering the objective function $\hat{f}(x,y) = \sum_{i=1}^d \tilde{f}(x_i,y_i)$.

\end{remark}

\subsection{Applicability and limitations of our definition}

 The class of algorithms that our definition allows the players to use includes a range of algorithms, e.g.,  gradient descent and negative curvature descent \cite{liu2018adaptive, reddi2018generic}, which only take steps in directions where the gradient or second derivative is above some threshold value.
  
Moreover, one can expand our definition to allow the max-player to also use randomized algorithms such as noisy gradient descent \cite{NonconvexOptimizationForML}, as long as the algorithm stops once an approximate local maximum is reached. 
For this class of algorithms, any point $(x^\star, y^\star)$ which satisfies our original Definition \ref{def:local_min-max_formal}  also is a greedy adversarial equilibrium under this expanded definition.  
Roughly speaking, this is because as long as the max-player is at a local maximum for the function $f(x^\star,\cdot)$, expanding the choice of algorithms available to the max-player may increase the value of the greedy max function at points other than $x^\star$ but will not increase the value of the greedy max function at the current point $x^\star$. 
In other words, the minimizing player will not have an incentive to deviate from $x^\star$ if more algorithms are made available to the max-player.

On the other hand, if we allow the max-player to use algorithms which do not stop at local maxima, for instance algorithms such as simulated annealing, a solution  $(x^\star, y^\star)$ which satisfies our current definition may no longer be a solution in this expanded sense.  
This is because, giving the max-player the option to use algorithms which do not stop once a local maximum is reached may cause the greedy max function to increase at $x^\star$ more than at neighboring points, incentivizing the minimizing player to deviate from $x^\star$.

  \subsection{The necessity of dealing with discontinuities in the greedy max function} \label{sec_discontinuity}

At first glance, it may seem that we can simply restrict ourselves to considering functions $f(x,y)$ for which the greedy max function $g_{\epsilon} (x,y)$ is continuous. 
 This would greatly simplify our proof, since we could exclude ``unstable" situations where the min-player proposes a small change in $x$ which would then cause the max-player to respond by making a large change in her strategy. 
  A second difficulty involving discontinuous greedy max functions is that, since we allow our algorithm to start at any point, even greedy max functions with discontinuities far from the greedy adversarial equilibrium point(s) are challenging to analyze. 
   Unfortunately, even very simple functions $f(x,y)$ oftentimes have discontinuous greedy max functions $g_{\epsilon} (x,y)$ (see Example \ref{discontinuity_example}). 
    Excluding functions where such discontinuities arise would greatly restrict the applicability of our results, and a large part of our proof is devoted to dealing with the possibility of discontinuities in the greedy max function.

\subsection{Additional discussion of approximate local minimum for discontinuous functions} \label{sec_discontinuous_localmin}

When choosing our definition for approximate local minimum of discontinuous functions, we would like this definition to be as close as possible to the notion of approximate local minimum for $C^2$-smooth functions (Inequality \eqref{eq:ApproximateLocalMin_formal}).
 This  allows us to more easily relate our results to past work in the optimization literature. 
  For instance, in our proof, we would like to adapt results from \cite{NonconvexOptimizationForML} about escaping saddle points in polynomial time to the setting of discontinuous functions.  
  However, we cannot expect our algorithm to have direct access to the discontinuous function $g_\epsilon(\cdot, y)$ we wish to minimize. 
   To allow us to handle this more difficult setting, we would like our notion of approximate local minimum to satisfy the property that any point which is an exact local minimum is also an approximate local minimum under our definition.

To obtain a definition which applies to discontinuous functions yet is as close as possible to the definition in \eqref{eq:ApproximateLocalMin_formal}, we would like to approximate any discontinuous function $\psi$ with a $C^2$-smooth function.%
 When choosing which $C^2$-smooth approximation to use, we would like it to satisfy the following three properties.
\begin{enumerate}
\item {\bf $C^2$-smooth with Lipschitz Hessian.} We would like each function in our family of approximation functions to be $C^2$-smooth with Lipschitz Hessian.  This would allow us to apply the definition of approximate local minimum for $C^2$-smooth functions (Inequality \eqref{eq:ApproximateLocalMin_formal}) to any function in this family.
\item {\bf Shared local minima.} We would like our family of $C^2$-smooth approximation functions to have the property that for any $x^\star$ which is an (exact) local minimum of the objective function $\psi$, and any $\epsilon>0$, there is a function in this family such that $x^\star$ is also an $(\epsilon, \sqrt{\eps})$-approximate local minimum of this $C^2$-smooth function (in the sense of Inequality \eqref{eq:ApproximateLocalMin_formal}).
\item {\bf Easy to compute.} We would like each function in our family of approximation functions to be easily computed within some error $\epsilon$ at any point $x$ in $\mathrm{poly}(d,\nicefrac{1}{\epsilon}, b)$  evaluations of $\psi$.
\end{enumerate}
 
 \noindent
Towards this end, we consider the family of functions $\mathcal{F}$ where we convolve $\psi$ with a Gaussian density $N(0, \sigma^2 I_d)$ of some variance $\sigma^2$ and zero mean. 
 That is, we consider functions of the form $$\psi_\sigma(x) := \mathbb{E}_{\zeta \sim N(0,I_d)}\left [\psi(x  + \sigma \zeta) \right]$$ for some $\sigma>0$. 
  This family of functions is $C^2$-smooth and has Lipschitz Hessian, which satisfies our first property (1). 
  This is because a Gaussian density is $C^2$-smooth, and any function convolved with a $C^2$-smooth function is also $C^2$-smooth. 
    Moreover, if $\psi$ is $b$-bounded, then convolving $\psi$ with a Gaussian  gives a $b$-bounded function with the magnitude of its $k$'th-derivatives bounded by $2b$ times an upper bound on the $k$'th derivative of the standard Gaussian density, that is, $2b \times \frac{1}{\sigma^{2k+1} \sqrt{2 \pi}}$ for every $k>0$. 
     In particular, this means our smoothed function $\psi_\sigma(x)$ is also $b$-bounded, with $b \times \frac{1}{\sigma^{7} \sqrt{2 \pi}}$-Lipschitz Hessian (see Remark \ref{Rem_Lipschitz_convolution}).

The family of functions $\mathcal{F}$ also has the advantage that, if $\psi$ is $b$-bounded, it can be computed within error $\epsilon$ in $\mathrm{poly}(d,\nicefrac{1}{\epsilon}, b)$ evaluations of $\psi$ with high probability if one uses a Monte-Carlo computation of the expectation $\mathbb{E}_{\zeta \sim N(0,I_d)}\left [\psi(x  + \sigma \zeta)\right]$, which satisfies our third property (3).

To satisfy our second property (2), we would ideally like to ensure that, for every exact local minimum $x^\star$ of $\psi$, and every $\epsilon>0$, there is a small enough $\sigma>0$  such that $x^\star$ is an  $(\epsilon, \sqrt{\eps})$-approximate local minimum (in the sense of Inequality \eqref{eq:ApproximateLocalMin_formal}) of the smoothed function $\psi_{\sigma} = \mathbb{E}_{\zeta \sim N(0,I_d)}\left [\psi(x  + \sigma \zeta) \right]$. 
 Unfortunately, smoothing $\psi$ by convolution alone does not directly allow us to satisfy property (2). The following example illustrates this problem.
\begin{example}[Convolution can shift local minima] \label{ex_shifted_min1}
Consider the function $\psi: \mathbb{R} \rightarrow \mathbb{R}$, where
 $   \psi(x)    = x - 3x \mathbbm{1}(x\leq 0) + \mathbbm{1}(x \leq 0).$
   This function is discontinuous at $x=0$, and has an exact local minimum at the point $x=0$ (which also happens to be its global minimum point). 
    If we smooth $\psi$ by convolving it with a Gaussian distribution $N(0,\sigma^2)$ for any $\sigma>0$,  we get the smooth function 
    \begin{equation*}
        \psi_\sigma(x) = 3 \sigma \frac{1}{\sqrt{2\pi}} e^{-\frac{x^2}{2 \sigma^2}} - x + x \Phi\left(\frac{1}{\sigma} x\right) + \Phi\left(-\frac{1}{\sigma} x\right),
         \end{equation*}
        where $\Phi(\cdot)$ is the standard Gaussian cumulative distribution function. 
    This function is $C^{2}$-smooth since $\Phi(\cdot)$ is $C^{2}$-smooth. 
    However, for any $\sigma >0$, the gradient at $x=0$ of the smoothed function is $-1.5 - \frac{1}{\sigma \sqrt{2 \pi}}$. 
     Thus, for any $\sigma>0$, $x=0$ is not an approximate local minimum of the smoothed function for any parameter $\epsilon \leq 1.5$.
\end{example}

  \noindent
  In Example \ref{ex_shifted_min1} the gradient of the smoothed function $\psi_\sigma$ at $x^\star=0$ has magnitude at least $1.5$ for any $\sigma>0$ even though $x^\star=0$ is a local minimum of $\psi$. 
   To understand how this is possible, we consider the following stochastic gradient (see e.g. \cite{flaxman2005online}) for the smoothed function $\psi_\sigma$:
\begin{equation} \label{eq_stochastic_gradient_formula}
\nabla \psi_\sigma(x) =   \frac{1}{\sigma} \mathbb{E}_{\zeta \sim N(0,I_d)}[(\psi(x + \sigma \zeta) - \psi(x)) \zeta].
\end{equation}
 One can obtain a non-zero gradient $\nabla \psi_\sigma(x)$ even if all of the sampled points $x + \sigma \zeta$ in \eqref{eq_stochastic_gradient_formula} give values $\psi(x + \sigma \zeta)$ greater than $\psi(x)$.

 If $\psi$ were smooth, finding a small step $\sigma \zeta$ which increases the value of $\psi$ (by at least some amount proportional to the step size) would imply that $\psi$ decreases in the direction $-\sigma \zeta$.
For smooth objective functions one can therefore find a descent direction (a direction in which $\psi$ decreases) simply by first finding an ascent direction $\sigma \zeta$ and then moving in the opposite direction $-\sigma \zeta$.
 Unfortunately, this is not true for discontinuous functions, since if $\psi$ is discontinuous, it may be that $\psi(x^\star + \sigma \zeta) > \psi(x^\star)$ does not imply that $\psi(x^\star - \sigma \zeta) < \psi(x^\star)$ no matter how small a step $\sigma \zeta$ we take. 
  In other words, for discontinuous objective functions the presence of an ``ascent direction" along which the objective function increases does not imply the existence of a ``descent direction" along which the objective function decreases.
  The only thing that matters when determining whether a discontinuous function has a local minimum at some point $x^\star$ is whether, in every ball containing $x^\star$, there are points $x^\star + \sigma \zeta$ for which  $\psi(x^\star + \sigma \zeta) < \psi(x^\star)$.
  
  To enable our definition of approximate local minimum to only consider those directions which decrease the value of $\psi$, when determining whether a point $x^\star$ is an (approximate) local minimum we instead consider the truncated function $\min(\psi(x),  \psi(x^\star))$.
 We then smooth this truncated function by convolving it with a Gaussian, to obtain the following smoothed function of $x$:
  \be
\mathbb{E}_{\zeta \sim N(0,I_d)}\left [\min(\psi(x  + \sigma \zeta),  \psi(x^\star)) \right].
  \ee
  This function has the property that it is both $C^2$-smooth and has $\frac{b}{\sigma^{7}}$-Lipschitz Hessian, since it is the convolution of a $b$-bounded function $\psi$ with a Gaussian of variance $\sigma^2$. 
  This leads us to Definition \ref{def_discontinuos_local_min}, which says that $x^\star$ is an approximate local minimum ``with smoothing $\sigma$" for a discontinuous function $\psi$, if $x^\star$ is an approximate local minimum of the smooth function $\mathbb{E}_{\zeta \sim N(0,I_d)}\left [\min(\psi(x  + \sigma \zeta),  \psi(x^\star)) \right]$.

\begin{remark}[Lipschitz and smoothness properties of convolution] \label{Rem_Lipschitz_convolution}
If $\psi$ is a function and $\rho$ a probability distribution, then the convolution $\psi \ast \rho (x)$ of $\psi$ with $\rho$ is  defined as $\psi \ast \rho (x) := \mathbb{E}_{\zeta \sim \rho}\left [\psi(x  + \zeta)\right]$.  In one dimension, we can write this as the integral $\psi \ast \rho (x) = \int_{-\infty}^\infty \psi(x-t) \rho(t) \mathrm{d}t =  \int_{-\infty}^\infty \psi(t) \rho(x-t) \mathrm{d}t$. 
 Hence, if $\psi$ and $\rho$ are uniformly bounded then $\frac{\mathrm{d}^k}{\mathrm{d} x^k} \psi \ast \rho (x) =  \int_{-\infty}^\infty \psi(t) \frac{\mathrm{d}^k}{\mathrm{d} x^k} \rho(x-t) \mathrm{d}t$.
  Thus, if the distribution $\rho$ is $C^k$-smooth, then the convolution $\psi \ast \rho$ is also $C^k$-smooth. 
   Moreover, if $|\psi(x)| \leq b$  for all $x \in \mathbb{R}$, and for any $k>0$ there is a number $c_k$ such that $\frac{\mathrm{d}^k}{\mathrm{d} x^k} \rho(x)| \leq c_k$ for all $x \in \mathbb{R}$, then we must have that $|\frac{\mathrm{d}^k}{\mathrm{d} x^k} \psi \ast \rho (x)| \leq b \times c_k$ for all $x \in \mathbb{R}$. 
    And, since $\psi$ is a probability distribution, we also have that $|\psi \ast \rho (x)| \leq b$ for all $x \in \mathbb{R}$ whenever $|\psi(x)| \leq b$  for all $x \in \mathbb{R}$. 
     In particular, if $\rho$ is the Gaussian distribution with variance $\sigma^2$, then its $k$'th derivative is bounded by $\frac{1}{\sigma^{2k+1} \sqrt{2 \pi}}$ for every $k$. 
     This implies that $\psi \ast \rho$ is $b$-bounded,  $2b \times \frac{1}{\sigma^{3} \sqrt{2 \pi}}$-Lipschitz, with $2b \times \frac{1}{\sigma^{5} \sqrt{2 \pi}}$-Lipschitz gradient and $2b \times \frac{1}{\sigma^{7} \sqrt{2 \pi}}$-Lipschitz Hessian. 
      The same argument can be extended to functions $\psi: \mathbb{R}^d \rightarrow \mathbb{R}$ of dimensions $d>1$.
\end{remark}

\section{The full algorithm} \label{sec_algorithm_overview_full} \label{sec:algorithm}
In this section we present the full algorithm for computing a greedy adversarial equilibrium (Algorithm \ref{alg:LocalMin-max}), as well as an algorithm for computing a greedy path (Algorithm \ref{alg:InnerMaxLoop}) which Algorithm \ref{alg:LocalMin-max} uses as a subroutine.

\begin{algorithm}
\caption{Computing a greedy path \label{alg:InnerMaxLoop}}
\KwIn{Oracles for the value of a function $f: \mathbb{R}^d \times \mathbb{R}^d \rightarrow \mathbb{R}$, the gradient $\nabla_yf$ and the Hessian $\nabla_y^2f$}
\KwIn{$\mathsf{x}, \mathsf{y}^0, \epsilon'$} 
\KwHyperparameters{$\delta, \mu_1, \mu_3, \mu_4$}
  
Set $\aell \leftarrow 0$,  $\mathsf{Stopy}  \leftarrow \textrm{False}$

\While{$\mathsf{Stopy} = \mathsf{False}$}{ \label{InnerWhileStart}

\uIf{$\| \nabla_yf(\mathsf{x}, \mathsf{y}^{\aell})\| > \epsilon'$}{ \label{NoislessSGD_Start}

 Set $\mathsf{y}^{\aell+1} \leftarrow \mathsf{y}^{\aell} + \mu_1 \nabla_yf(\mathsf{x},\mathsf{y}^{\aell})$ \label{SGD_NoNoise} 
 
   Set $\aell \leftarrow \aell+1$} \label{NoislessSGD_End}

\Else{
Compute an eigenvalue-eigenvector pair $(\lambda, v)$ of $\nabla^2_{{y}}f(\mathsf{x}, \mathsf{y}^{\aell})$, s.t. $\lambda \geq \lambda_{\mathrm{max}}(\nabla^2_{{y}}f(\mathsf{x}, \mathsf{y}^{\aell})) - \mu_4$ \label{Compute_eigenvalue} 

\uIf{$\lambda > \sqrt{L_{} \epsilon'}$}{  \label{GreedyEnpoint_Start}

Set $\mathsf{a} = \mathrm{sign}(\nabla_yf(\mathsf{x},\mathsf{y}^{\aell})^\top v)$

Set $\mathsf{y}^{\aell+1} \leftarrow \mathsf{y}^\ell + \mu_3  \mathsf{a} v$  \label{saddle_escape}

 Set $\aell \leftarrow \aell+1$

\Else{Set $\mathsf{Stopy} = \mathsf{True}$}}
\label{GreedyEnpoint_End}
}
\label{InnerWhileEnd}}

\Return{$\mathsf{y}_{\mathrm{LocalMax}} \leftarrow \mathsf{y}^{\aell}$}

\end{algorithm}

  \begin{algorithm}
\caption{Computing a greedy adversarial equilibrium \label{alg:LocalMin-max}}
\KwIn{Oracle for a function $f: \mathbb{R}^d \times \mathbb{R}^d \rightarrow \mathbb{R}$, and oracles for the gradient $\nabla_yf$ and Hessian $\nabla_y^2f$. $\sigma,\epsilon >0$}
\KwHyperparameters{\, \, $\eta, \gamma_1, \mathcal{I}_2, \delta, \mathcal{I}_3, \mathcal{I}_4, \alpha,  \epsilon_0$}

 Initialize $(x_0, y_0) \leftarrow (0, 0)$
 
 Set $x_1 \leftarrow x_0$. 
 
Run Algorithm \ref{alg:InnerMaxLoop} with inputs $\mathsf{x} \leftarrow x_1$,  $\mathsf{y}^0  \leftarrow y_0$,  $\epsilon' \leftarrow \epsilon_{0}(1+\delta)$. \label{InitializationStart}

Set $y_1 \leftarrow \mathsf{y}_{\mathrm{LocalMax}}$ to be the output $\mathsf{y}_{\mathrm{LocalMax}}$ of Algorithm \ref{alg:InnerMaxLoop}.  \label{InitializationEnd}   
                    
Set $h^0 \leftarrow f(x_1, y_1)$

  Set $\mathsf{Stop}\leftarrow \mathsf{False}$, \, $\ai \leftarrow 0$\\
\While{$\mathsf{Stop}=\mathsf{False}$}{ \label{OuterWhileStart}
Set $\ai \leftarrow \ai+1$, \, \, \,
   $\mathsf{NoProgress} \leftarrow \mathsf{True}$, \, \, \,
 
 Set $\epsilon_{\ai} \leftarrow \epsilon_{\ai-1}(1+\delta)^2$
 
 Set $X_0 \leftarrow x_{\ai}$

\For{$\aj=1$ to $\mathcal{I}_3$\label{RebootStart_hill}}{
\If{$\mathsf{NoProgress} = \mathsf{True}$}{
Set $\zeta_{\ai \aj} \sim N(0,I_d)$

Run Algorithm \ref{alg:InnerMaxLoop} with inputs $\mathsf{x} \leftarrow x_{\ai} + \sigma \zeta_{\ai \aj}$,  $\mathsf{y}^0 \leftarrow y_{\ai}$, and $\epsilon' \leftarrow \epsilon_{\ai}(1+\delta)$.

Set $\mathcal{Y} \leftarrow \mathsf{y}_{\mathrm{LocalMax}}$ to be the output $\mathsf{y}_{\mathrm{LocalMax}}$ of Algorithm \ref{alg:InnerMaxLoop}.     

\If{$f(x_{\ai} + \sigma \zeta_{\ai \aj}, \mathcal{Y}) \leq f(x_{\ai}, y_{\ai}) - \gamma_1$}{ \label{HillClimbing}
Set $x_{\ai+1} \leftarrow x_{\ai} + \sigma \zeta_{\ai \aj}$

Set $y_{\ai+1} \leftarrow \mathcal{Y}$,  

Set $h^0 \leftarrow f(x_{\ai}, y_{\ai})$

Set  $\ai \leftarrow \ai+1$, \, \, \, $\mathsf{NoProgress} \leftarrow \mathsf{False}$,

}
}
}  \label{RebootEnd_hill}
\For{$\aj=1$ to $\mathcal{I}_4$ \label{RebootStart}}{

\If{$\mathsf{NoProgress} = \mathsf{True}$}{

\For{$\ak = 1$ to $\mathcal{I}_2$}{ \label{k_for_Start}

Set $u \sim N(0,I_d)$

Run Algorithm \ref{alg:InnerMaxLoop} with inputs $\mathsf{x} \leftarrow X_{\ak-1} + \sigma u$,  $\mathsf{y}^0 \leftarrow y_{\ai}$,  $\epsilon' \leftarrow \epsilon_{\ai}(1+\delta)$.

Set $\mathcal{Y} \leftarrow \mathsf{y}_{\mathrm{LocalMax}}$ to be the output $\mathsf{y}_{\mathrm{LocalMax}}$ of Algorithm \ref{alg:InnerMaxLoop}.     

Set $h^{\ak} = \min(f(X_{\ak-1} + \sigma u, \mathcal{Y}), f(x_{\ai}, y_{\ai}))$ \label{PSGD_start}
 
Set $\Gamma_{\ak} = (h^{\ak} - h^{\ak-1}) \frac{1}{\sigma} u$ \label{PSGD_end}

Set $\xi \sim N(0,I_d)$, \, \, \,

Set $X_{\ak} \leftarrow X_{\ak-1} - \eta \Gamma_{\ak} + \alpha \xi$   
}  \label{k_for_End}

Run Algorithm \ref{alg:InnerMaxLoop} with inputs $\mathsf{x} \leftarrow X_{\ak}$, $\mathsf{y}^0 \leftarrow y_{\ai}$, and $\epsilon' \leftarrow \epsilon_{\ai}(1+\delta)$.

Set $\mathcal{Y} \leftarrow \mathsf{y}_{\mathrm{LocalMax}}$ to be the output $\mathsf{y}_{\mathrm{LocalMax}}$ of Algorithm \ref{alg:InnerMaxLoop}.  

\If{$f(X_{\ak}, \mathcal{Y}) \leq f(X_0, y_{\ai}) - \gamma_1 $}{ \label{SuccessStart}

Set $x_{\ai+1} \leftarrow X_{\ak}$

Set $y_{\ai+1} \leftarrow \mathcal{Y}$,  

Set $h^0 \leftarrow f(x_{\ai}, y_{\ai})$, \, \, \,

Set  $\ai \leftarrow \ai+1$, \,  \, \,
$\mathsf{NoProgress} \leftarrow \mathsf{False}$
} \label{SuccessEnd}
}
} \label{RebootEnd}

\If{$\mathsf{NoProgress} = \mathsf{True}$}{
Set $\mathsf{Stop} = \mathsf{True}$ 

}

}
\Return{$\ai^\star \leftarrow \ai$,  $\epsilon^\star \leftarrow \epsilon_{\ai^\star}$, and $(x^\star, y^\star) \leftarrow (x_{\ai^\star}, y_{\ai^\star})$} \label{Output}
\end{algorithm}

\vspace{-3mm}

\section{Proof of Theorem \ref{thm:GreedyMin-max}} \label{sec_proof_of_main_theorem}

\subsection{Setting constants and notation} \label{section:constants}
Since $f$ is a $b$-bounded $C^2$-smooth function with $L$-Lipschitz Hessian, it is also $L_1$-Lipschitz with  $L_1 \leq 4 b^{\nicefrac{2}{3}} L^{\nicefrac{1}{3}}$ and has $L_2$-Lipschitz gradient with $L_2 \leq 2 b^{\nicefrac{1}{3}} L^{\nicefrac{2}{3}}$.
From now on, we set $L_1 = 4 b^{\nicefrac{2}{3}} L^{\nicefrac{1}{3}}$ and $L_2 = 2 b^{\nicefrac{1}{3}} L^{\nicefrac{2}{3}}$.
Without loss of generality, we may assume that $b \geq 1$ (since our goal is to prove that the number of gradient evaluations is polynomial in $\frac{1}{\epsilon}, d, b, L_{}, \frac{1}{\sigma}$).
In our proof, we set the following hyperparameters and constants.
\begin{enumerate}
    \item $\omega \coloneqq 10^{-3}$, 
\item $\gamma_1 \coloneqq \frac{\epsilon^{2.1} \sigma^{16.6}}{10^4 (1+b^{3.1}) d^{0.6} \log(b d \sigma \epsilon)}$, 
\item $\delta \coloneqq \frac{\gamma_1^2}{8 b^2}$, 
\item  $\mu_1 \coloneqq \delta \frac{1}{L_2(L_1+1)}$,
\item $\mu_3 \coloneqq \frac{1}{7}  \min\left( \frac{\delta \sqrt{\epsilon}}{\sqrt{L_{}}}, \frac{\epsilon}{\sqrt{L_{}}} \right)$, 
 \item $\mu_4 \coloneqq \frac{1}{7} \sqrt{\delta L_{} \epsilon}$, 
\item $\eta \coloneqq \frac{\sigma^9 }{  b^6 d^2 \left(1+ 10\frac{b d}{\sigma^{12}\epsilon^2}\right) \mathsf{c} \log^9( bd\sqrt{\sigma \epsilon})}$,  
\item $\mathcal{I}_2 \coloneqq \frac{\mathsf{c} \log(bd\sqrt{\sigma \epsilon})}{\eta \sqrt{\epsilon}}$, 
\item $\mathcal{I}_3 \coloneqq \frac{30 b}{\gamma_1}$,  
\item $\mathcal{I}_4 \coloneqq 6\log\left(\frac{2 b}{\gamma_1 \omega}\right),$ 
 \item$\alpha \coloneqq \eta \mathsf{c}\log( bd\sqrt{\sigma \epsilon}) \sqrt{1+ b^2 d^2 \sigma^{-2}},$
 \end{enumerate}
 where $\mathsf{c}$ is a large enough universal constant.

In particular, we have set $\delta = \frac{1}{4 i_{\mathrm{max}}^2}$, where $i_{\max} \coloneqq \frac{2b}{\gamma_1}$ is an upper bound on the number of iterations of the While loop in Algorithm \ref{alg:LocalMin-max}.  
This ensures that $(1+\delta)^i \leq 2$ for all $i \in [i_{\mathrm{max}}]$.

In the following sections we let $(x_i, y_i)$ denote the points  $(x_i, y_i)$  at each iteration $i$ of the While loop in Algorithm \ref{alg:LocalMin-max}, and we set $\epsilon_i \coloneqq \epsilon_0 (1+\delta)^{2i}$ for all $i \in \mathbb{N}$.

For any $\epsilon^{\circ}>0$, and any $(x,y) \in \mathbb{R}^d \times \mathbb{R}^d$, we define 
\be \label{eq:z1}
h_{\epsilon^{\circ}} (x,y) \coloneqq f(x,\mathcal{Y}),
\ee
 where $\mathcal{Y} \leftarrow \mathsf{y}_{\mathrm{LocalMax}}$ is the output of Algorithm \ref{alg:InnerMaxLoop} with inputs $\mathsf{x} \leftarrow x$, $\mathsf{y}^0 \leftarrow y$, and
 $\epsilon' \leftarrow (1+\delta)\epsilon^{\circ}$. 

\subsection{Proof outline}
The proof of Theorem \ref{thm:GreedyMin-max} has three main components:

\begin{enumerate}
    \item We start by showing that Algorithm \ref{alg:LocalMin-max} halts after $i^\star = O\left(\frac{b}{\gamma_1}\right)$ iterations, which allows us to bound the number of oracle calls until Algorithm \ref{alg:LocalMin-max} halts (Lemma \ref{lemma:RunningTime}).

\item We then show that the point $(x^\star, y^\star)$ returned by Algorithm \ref{alg:LocalMin-max} is an $(\epsilon^\star, \sigma)$-greedy adversarial equilibrium for $f$, for some 
\begin{equation*}
\epsilon^\star = \epsilon_{i^\star} = \epsilon_0 \leq  (1+\delta)^{O(i^\star)} \leq \epsilon.
\end{equation*}
Towards this end we first show that $y^\star$ is an $(\epsilon^\star, \sqrt{L \epsilon^\star})$-approximate local maximum of $f(x^\star, \cdot)$ (in the sense of the definition in \eqref{eq:ApproximateLocalMin_formal}) and therefore satisfies Inequalities \eqref{eq:first_order_y} of Definition \ref{def:local_min-max_formal}.

\item Next, we show that $x^\star$ is an $(\eps^\star, \sigma)$-approximate local minimum of  $g_\epsilon(\cdot, y^\star))$ (in the sense of Definition \ref{def_discontinuos_local_min}), and therefore satisfies Inequalities \eqref{eq:first_order_x} of Definition \ref{def:local_min-max_formal} (see the proof in Section \ref{concluding_the_proof}).
Towards this end, we prove (in Lemma \ref{Lemma_SharedLM}) that to guarantee  $x^\star$ is an $(\eps^\star, \sigma)$-approximate local minimum of  $g_\epsilon(\cdot, y^\star)$, it is sufficient to show that $x^\star$ is an $(\eps^\star, \sigma)$-approximate local minimum of  $h_\epsilon(\cdot, y^\star)$ (Propositions \ref{Prop_first_order_x} and \ref{Prop_NoisySGD}), provided that we can also show that $h_\eps$ and $g_\eps$ satisfy a number of conditions.
Namely, these conditions are:
\begin{enumerate}
    \item $h_\epsilon(x,y) \leq g_\epsilon(x,y)$ for all $(x,y) \in \mathbb{R}^d \times \mathbb{R}^d$ (Proposition \ref{Prop_lower_bound}).
    \item $h_\epsilon(x^\star,y^\star) = g_\epsilon(x^\star,y^\star)$ (Proposition \ref{Prop_fixed_point}).
    \item A stochastic gradient for a smoothed version of $h_\eps$  (defined in Equation \eqref{eq_SG_H}) has a very small expected magnitude (Proposition \ref{prop_SGVariance}).

    \end{enumerate}
    \end{enumerate}
\noindent    
    In the remainder of this section, we give proofs of the main lemmas and propositions we use to prove Theorem \ref{thm:GreedyMin-max}.  We conclude the proof of Theorem \ref{thm:GreedyMin-max} in Section \ref{concluding_the_proof}.
    
     A diagram of the proof structure is given in Figure \ref{fig:proof_diagram}.  

\begin{figure}
\begin{tikzpicture}[
roundnode/.style={circle, draw=green!60, fill=green!5, very thick, minimum size=7mm},
squarednode/.style={rectangle, draw=black!60, rounded corners=.2cm, fill=black!0, very thick, minimum size=5mm},
]
\node[squarednode, text width=10cm,align=center]      (Main_Theorem)                              {\textbf{Theorem \ref{thm:GreedyMin-max}:} Algorithm \ref{alg:LocalMin-max} converges in polynomial time to a point $(x^\star, y^\star)$ which is a greedy adversarial equilibrium.};
\node[squarednode,text width=8.4cm,align=center]      (Lemma_3)       [below=of Main_Theorem] {\textbf{Lemma \ref{Lemma_SharedLM}:}  If Props. \ref{Prop_lower_bound}-\ref{prop_SGVariance} hold for an approximate local minimizer $x^\star$ of $\mathfrak{h}_{\eps, \sigma}(\cdot, y^\star)$, the $x^\star$ is also an approximate local minimizer of $\mathfrak{g}_{\eps, \sigma}(\cdot, y^\star)$.};
\node[squarednode, text width=3cm,align=center]      (Lemma_2)       [right=of Lemma_3, xshift=-0.2cm] {\textbf{Lemma \ref{lemma:Player2}:} $y^\star$ is an approximate local minimizer of $f(x^\star, \cdot)$};
\node[squarednode, text width=3cm,align=center]      (Lemma_1)       [left=of Lemma_3, xshift=0cm] {\textbf{Lemma \ref{lemma:RunningTime}:} Algorithm \ref{alg:LocalMin-max} converges in polynomial time.};
\node[squarednode, text width=3cm,align=center]      (Prop_8)       [below=of Lemma_3, xshift=-3cm, yshift=-1.5cm] {\textbf{Prop. \ref{prop_SGVariance}:} Stochastic gradient has low variance at $(x^\star, y^\star)$.};
\node[squarednode,text width=3cm,align=center]      (Prop_9)       [right=of Prop_8,xshift=1cm, yshift=1.5cm] {\textbf{Prop. \ref{Prop_first_order_x}:} $x^\star$ is a 1st-order stationary point of $\mathfrak{h}_{\eps, \sigma}^{x^\star}(\cdot, y^\star)$.};
\node[squarednode, text width= 3cm,align=center]      (Prop_10)       [right=of Prop_9,xshift=-0.6cm, yshift= -1.5cm] {\textbf{Prop. \ref{Prop_NoisySGD}:} $x^\star$ satisfies 2nd-order condition to be an approximate local minimizer of $\mathfrak{h}_{\eps, \sigma}^{x^\star}(\cdot, y^\star)$};
\node[squarednode,text width=4cm,align=center]      (Prop_4)       [below=of Prop_8,  yshift=-3.5cm] {\textbf{Prop. \ref{Lemma_Greedypath}:} Algorithm \ref{alg:InnerMaxLoop} computes a greedy path.};
\node[squarednode, text width=6cm,align=center]      (Prop_6)       [right=of Prop_4, xshift=-2cm,  yshift=3.5cm] {\textbf{Prop. \ref{Prop_fixed_point}:} fixed point property: $h(x^\star, y^\star) = g_\eps(x^\star, y^\star) = f(x^\star, y^\star)$};
\node[squarednode, text width=3cm,align=center]      (Prop_5)       [left=of Prop_4,  xshift=1.8cm, yshift=3.5cm] {\textbf{Prop. \ref{Prop_lower_bound}:} $h_\eps$ is a lower bound for greedy max function.};
\node[squarednode, text width=5cm,align=center]      (Prop_7)       [right=of Prop_4, xshift=2cm,  yshift=1cm] {\textbf{Prop. \ref{lemma_SG}:} $\Gamma_k$ is a stochastic gradient for $\mathfrak{h}_{\eps, \sigma}^{x^\star}(\cdot, y^\star)$};

\draw[<-, thick] (Main_Theorem.south) +(right:40mm) -- (Lemma_2.north);
\draw[<-, thick] (Main_Theorem.south) +(left:40mm) -- (Lemma_1.north);
\draw[<-, thick] (Main_Theorem.south) -- (Lemma_3.north);
\draw[<-,blue, thick] (Lemma_3.south) +(right:30mm) -- (Prop_10.north);
\draw[<-,blue, thick] (Lemma_3.south) +(left:30mm) -- (Prop_8.north);
\draw[<-,blue, thick] (Lemma_3.south) +(right:13mm)-- (Prop_9.north);
\draw[->, thick] (Prop_8.east)-- (Prop_9.west);
\draw[->, thick] (Prop_8.east) +(down:5mm)--(Prop_10.west);
\draw[<-,blue, thick] (Lemma_3.south) +(left:4cm)-- (Prop_5.north);
\draw[<-,blue, thick] (Lemma_3.south) +(left:1cm)-- (Prop_6.north);
\draw[<-, thick] (Prop_8.south) +(left:7mm) -- (Prop_5.east);
\draw[<-, thick] (Prop_8.south)+(left:3mm) -- (Prop_4.north);
\draw[<-, thick] (Prop_8.south)+(right:1mm) -- (Prop_6.west);
\draw[<-, thick] (Prop_5.south) -- (Prop_4.west);
\draw[->, thick]  (Prop_4.north)+(right:10mm)--(Prop_6.south);
\draw[<-, thick] (Prop_7.west) -- (Prop_4.east);
\draw[<-, thick] (Prop_10.south) -- (Prop_7.north);

\end{tikzpicture}
\caption{A diagram of the proof of the main theorem.  A black arrow means that a lemma or proposition was used to prove another lemma, proposition or theorem.  The blue arrows pointing from propositions \ref{Prop_lower_bound}-\ref{Prop_NoisySGD} to Lemma \ref{Lemma_SharedLM} mean that those propositions were used to satisfy the conditions of Lemma \ref{Lemma_SharedLM}.}\label{fig:proof_diagram}
\end{figure}
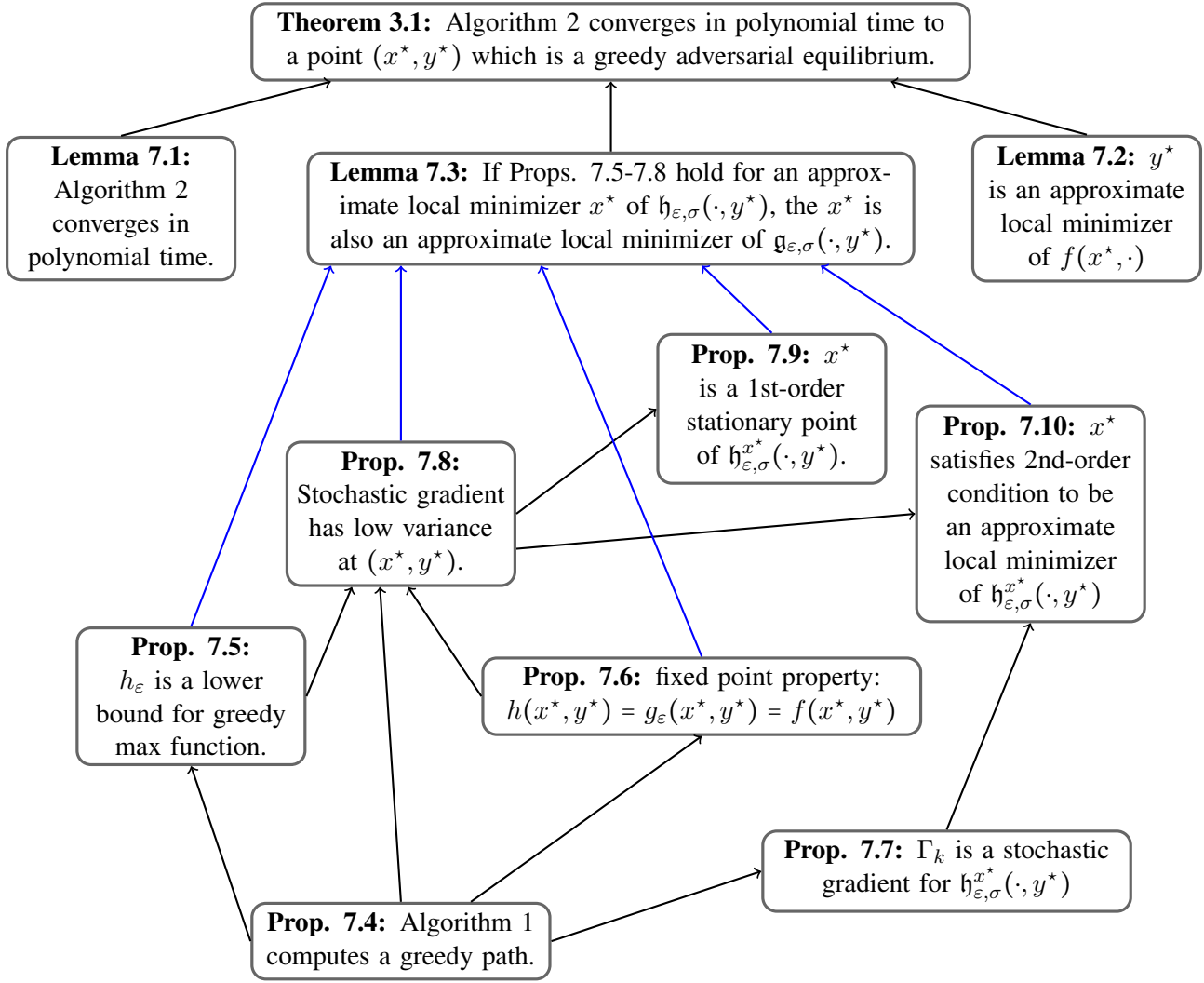

\newpage

\subsection{Gradient, function, and Hessian evaluations} 

\begin{lemma}[{Bounding the number of gradient, Hessian, and function evaluations}] \label{lemma:RunningTime} 
%
Algorithm \ref{alg:LocalMin-max} terminates after at most  $i^\star = O\left(\frac{b}{\gamma_1}\right)$ iterations, and thus takes at most $O\left(\frac{b}{\gamma_1}  \times  (\mathcal{I}_2 \mathcal{I}_4+ \mathcal{I}_3) \times \frac{b}{ \mu_1 \mu_3^2 L_{}}\right)$ gradient, Hessian, and function evaluations. 
\end{lemma}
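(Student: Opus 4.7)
The plan is to decompose the total oracle count as a product of three factors: the number of outer \texttt{While}-loop iterations in Algorithm \ref{alg:LocalMin-max}, the number of inner \texttt{For}-loop iterations per outer iteration, and the cost of a single call to Algorithm \ref{alg:InnerMaxLoop}. I will bound each factor separately.

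\textbf{Step 1 (bounding the outer loop).} First I will show that the outer \texttt{While} loop terminates after at most $i^\star = O(b/\gamma_1)$ iterations. The key observation is that the loop only continues when $\mathsf{NoProgress} = \mathsf{False}$, which occurs only when the algorithm succeeds either in the hill-climbing block (lines \ref{RebootStart_hill}--\ref{RebootEnd_hill}) or in the noisy SGD block (lines \ref{RebootStart}--\ref{RebootEnd}). In either case, the guard $f(x_{\ai+1}, y_{\ai+1}) \leq f(x_\ai, y_\ai) - \gamma_1$ is explicitly enforced at lines \ref{HillClimbing} and \ref{SuccessStart}. Thus at every successful iteration, the scalar sequence $f(x_\ai, y_\ai)$ decreases by at least $\gamma_1$. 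Since $|f| \leq b$, there can be at most $2b/\gamma_1$ successful iterations, so $i^\star = O(b/\gamma_1)$.

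\textbf{Step 2 (bounding work per outer iteration).} Within each outer iteration, the algorithm executes the hill-climbing \texttt{For} loop at most $\mathcal{I}_3$ times, and the noisy-SGD block at most $\mathcal{I}_4$ times, with an inner \texttt{For} loop of length $\mathcal{I}_2$. Each pass through either block invokes Algorithm \ref{alg:InnerMaxLoop} a constant number of times, together with $O(1)$ additional oracle calls to $f$ (to compute $h^{\ak}$, $\Gamma_{\ak}$, and to evaluate the success conditions). Therefore the work per outer iteration is at most $O((\mathcal{I}_3 + \mathcal{I}_2 \mathcal{I}_4) \times T_{\mathrm{inner}})$, where $T_{\mathrm{inner}}$ is the number of oracle calls used by a single invocation of Algorithm \ref{alg:InnerMaxLoop}.

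\textbf{Step 3 (bounding Algorithm \ref{alg:InnerMaxLoop}).} This is where the real work is. I will show that $T_{\mathrm{inner}} = O(b / (\mu_1 \mu_3^2 L))$ by arguing that every iteration of its \texttt{While} loop strictly increases $f(\mathsf{x}, \cdot)$ by at least a fixed quantum $\Delta_{\min}$. Concretely: if the branch at line \ref{NoislessSGD_Start} fires, then $\|\nabla_y f(\mathsf{x}, \mathsf{y}^\ell)\| > \epsilon'$; together with the $L_2$-Lipschitz gradient property and the choice of $\mu_1$, a standard descent-lemma computation on the segment $[\mathsf{y}^\ell, \mathsf{y}^\ell + \mu_1 \nabla_y f]$ yields an increase of at least $\tfrac{1}{2}\mu_1 (\epsilon')^2$. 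If instead the branch at line \ref{GreedyEnpoint_Start} fires, then the computed eigenvalue satisfies $\lambda > \sqrt{L\epsilon'}$; using the $L$-Lipschitz Hessian property, the choice of $\mu_3$, and the sign choice $\mathsf{a}$ (which guarantees the first-order term is non-negative along the step), a second-order Taylor expansion gives an increase of at least $\tfrac{1}{2} \mu_3^2 \sqrt{L \epsilon'} - O(L \mu_3^3)$, which simplifies to $\Omega(\mu_3^2 \sqrt{L \epsilon'})$ under our choice of $\mu_3$. The loop therefore terminates in at most $2b/\Delta_{\min}$ iterations, where $\Delta_{\min} = \Omega(\min(\mu_1 (\epsilon')^2, \mu_3^2 \sqrt{L\epsilon'}))$. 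After collecting constants and absorbing $\epsilon'$ using the relation $\mu_3^2 L \leq O(\epsilon)$ from Section \ref{section:constants}, this yields $T_{\mathrm{inner}} = O(b/(\mu_1 \mu_3^2 L))$. Multiplying the three bounds from Steps 1--3 gives the claim.

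\textbf{Main obstacle.} The technically delicate part is Step 3: I need the per-step increase in $f$ to hold uniformly, which requires carefully tracking the second-order Taylor remainder in the negative-curvature branch (where the first-order term $\nabla_y f \cdot (\mathsf{a} v)$ could in principle be negative were it not for the sign choice $\mathsf{a}$), and ensuring the residual error $\mu_4$ in the computed eigenvalue is small enough that $\lambda > \sqrt{L\epsilon'}$ implies a genuinely positive curvature direction. The constants are tuned so that $\mu_4 \leq \frac{1}{7}\sqrt{\delta L \epsilon}$, and I will need to verify that the resulting lower bound on the per-step increase does not degrade below the form $\Omega(\mu_3^2 \sqrt{L\epsilon})$ required by the lemma.
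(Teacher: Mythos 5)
Your proposal is correct and follows essentially the same route as the paper: the same three-factor decomposition (outer iterations $\times$ per-iteration calls to Algorithm \ref{alg:InnerMaxLoop} $\times$ cost of Algorithm \ref{alg:InnerMaxLoop}), the same $\gamma_1$-decrease argument for the outer loop, and the same descent-lemma / second-order-Taylor increments for the two branches of the inner loop. The only cosmetic difference is in how you count inner-loop iterations — you use a uniform per-step increase $\Delta_{\min}=\Omega(\min(\mu_1(\epsilon')^2,\mu_3^2\sqrt{L\epsilon'}))$ as in the technical overview, whereas the formal proof separately bounds the number of consecutive negative-curvature steps via the growth of $\|\nabla_y f\|^2$; both yield the stated $O(b/(\mu_1\mu_3^2 L))$ under the hyperparameter choices of Section \ref{section:constants}.
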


\begin{proof}

\textbf{Bounding the iterations of Algorithm \ref{alg:InnerMaxLoop}}:
First, we bound the number of iterations of the  ``While" loop in Algorithm \ref{alg:InnerMaxLoop}. 
We begin by showing that $\| \nabla_y f(\mathsf{x}, \mathsf{y}^{\aell})\| > \epsilon'$ occurs at least once every $\frac{4 \epsilon'}{\mu_3^2 L_{}}$ iterations.

Consider any iteration $\ell$ where $\| \nabla_y f(\mathsf{x}, \mathsf{y}^{\aell})\| \leq \epsilon'$.  Then (unless Algorithm \ref{alg:InnerMaxLoop} terminates at step $\aell$ of the While loop) Line \ref{GreedyEnpoint_Start} of Algorithm \ref{alg:InnerMaxLoop} implies that we have both
\begin{equation*}
v^\top \nabla^2_{{y}}f(\mathsf{x}, \mathsf{y}^{\aell}) v \geq \sqrt{L_{} \epsilon'},
\quad \textrm{ and } \quad
 \mathsf{y}^{\aell+1} = \mathsf{y}^\ell + \mu_3 \mathsf{a} v,
\end{equation*}
 where $\mathsf{a} = \mathrm{sign}(\nabla_yf(\mathsf{x},\mathsf{y}^{\aell})^\top v)$.
\noindent 
  Hence, since $\mu_3 \leq \frac{\sqrt{\epsilon'}}{2\sqrt{L_{}}}$, we have that
\begin{align*}
\nabla_y f(\mathsf{x}, \mathsf{y}^{\aell+1}) - \nabla_y f(\mathsf{x}, \mathsf{y}^{\aell}) &\geq  [\nabla_y^2 f(\mathsf{x}, \mathsf{y}^{\aell}) -L_{} \mu_3 I_d] \mu_3 \mathsf{a} v\\
&= (\sqrt{L_{} \epsilon'} - L_{} \mu_3) \mu_3 \mathsf{a} v\\
&\geq  \frac{1}{2}\sqrt{L_{} \epsilon'} \mu_3 \mathsf{a} v.
\end{align*}

\noindent
Therefore, since $\mathsf{a} = \mathrm{sign}(\nabla_yf(\mathsf{x},\mathsf{y}^{\aell})^\top v)$, we have that,
\be \label{eq:a1}
\|\nabla_y f(\mathsf{x}, \mathsf{y}^{\aell+1})\|^2 - \|\nabla_y f(\mathsf{x}, \mathsf{y}^{\aell})\|^2 \geq  \left(\frac{1}{2}\sqrt{L_{} \epsilon'} \mu_3 \right)^2.
\ee
Therefore, we have that the magnitude of the gradient becomes $\geq \epsilon'$ at least once every $\mathfrak{c} := \frac{4 \epsilon'}{\mu_3^2 L_{}}$ iterations of the While loop of Algorithm \ref{alg:InnerMaxLoop}. 

Since $\mu_3 \leq \frac{\sqrt{\epsilon}}{4\sqrt{L_{}}} \leq \frac{\sqrt{\epsilon'}}{2\sqrt{L_{}}}$, for some unit vector $w$ we have that
\be \label{eq:a2}
 f(\mathsf{x}, \mathsf{y}^{\aell+1}) -  f(\mathsf{x}, \mathsf{y}^{\aell})
 &\geq \mu_3 \mathsf{a} \nabla_y f(\mathsf{x}, \mathsf{y}^{\aell}) ^\top v  + \frac{1}{2} \mu_3^2 v^\top (\nabla_y ^2 f(\mathsf{x}, \mathsf{y}^{\aell}) - \mu_3 L_{} I_d) v\\
  &\geq 0  + \frac{1}{2} \mu_3^2 \sqrt{L_{} \epsilon'}  - \frac{1}{2} \mu_3^3 L_{} 
  \geq \frac{1}{4} \mu_3^2 \sqrt{L_{} \epsilon'} \geq 0.
 \ee
Recall that we have also shown (Inequality \eqref{eq:a1}) that the gradient becomes $\geq \epsilon'$ at least once every $\mathfrak{c} := \frac{4 \epsilon'}{\mu_3^2 L_{}}$ iterations of the While loop of Algorithm \ref{alg:InnerMaxLoop}.  Therefore, since $\mu_1 \leq \frac{1}{2 L_2}$ and $f$ is uniformly bounded by $b$, we must have that the While loop in Algorithm \ref{alg:InnerMaxLoop} terminates after $O\left(\frac{b \mathfrak{c} }{\mu_1 \epsilon}\right)$ iterations, and hence that  Algorithm \ref{alg:InnerMaxLoop} terminates after $O\left(\frac{b}{ \mu_1 \mu_3^2 L_{}}\right)$ oracle evaluations.

\bigskip

\noindent
\textbf{Bounding the iterations of Algorithm \ref{alg:LocalMin-max}}:
At each iteration $\ai$ of the While loop in Algorithm \ref{alg:LocalMin-max} except for the last iteration $i^\star$, Lines \ref{HillClimbing} and \ref{SuccessStart} of Algorithm \ref{alg:LocalMin-max} imply that
\be \label{eq:a5}
f(x_{\ai+1}, y_{\ai+1}) \leq f(x_{\ai}, y_{\ai}) - \gamma_1.
\ee

\noindent
Therefore, since $f$ is uniformly bounded  by $b$, Inequality \eqref{eq:a5} implies that the While loop of Algorithm \ref{alg:LocalMin-max} terminates after at most $i_{\max}$ iterations for some number $i_{\max} = O\left(\frac{b}{\gamma_1}\right)$. 
 Therefore since we have already shown that Algorithm \ref{alg:InnerMaxLoop} terminates in at most $O\left(\frac{b}{ \mu_1 \mu_3^2 L_{}}\right)$ oracle calls each time it is called, and Algorithm \ref{alg:InnerMaxLoop} is called $\mathcal{I}_3 + \mathcal{I}_2 \mathcal{I}_4$ times at each iteration of the While loop, running Algorithm \ref{alg:InnerMaxLoop} contributes at most $O\left(\frac{b}{\gamma_1} \times (\mathcal{I}_3 + \mathcal{I}_2 \mathcal{I}_4)  \times \frac{b}{ \mu_1 \mu_3^2 L_{}} \right)$ oracle calls to the cost of Algorithm \ref{alg:LocalMin-max}. 
  Since the other parts of the While loop make at most $O(\mathcal{I}_3 + \mathcal{I}_2 \mathcal{I}_4)$ function evaluations, they contribute no more that $O\left(\frac{b}{\gamma_1} \times (\mathcal{I}_3 + \mathcal{I}_2 \mathcal{I}_4)\right)$ function evaluations to the cost of Algorithm \ref{alg:LocalMin-max}. 
  Therefore, Algorithm \ref{alg:LocalMin-max} terminates after at most $O\left(\frac{b}{\gamma_1} \times (\mathcal{I}_3 + \mathcal{I}_2 \mathcal{I}_4) \times \frac{b}{ \mu_1 \mu_3^2 L_{}}\right)$ oracle calls.
\end{proof}

\subsection{$y^\star$ an approximate local maximum of $f(x^\star, \cdot)$}

\begin{lemma} [{Approximate local maximum in $y$}] \label{lemma:Player2}
The output  $\mathsf{y}_{\mathrm{LocalMax}}$ of Algorithm \ref{alg:InnerMaxLoop} with inputs  $\mathsf{x}, \mathsf{y}^0, \epsilon'$ satisfies
\begin{equation}\label{eq:b1a}
\|\nabla_y f(\mathsf{x}, \mathsf{y}_{\mathrm{LocalMax}})\| \leq  \epsilon'
\end{equation}
and
\begin{equation}\label{eq:b1b}
\lambda_{\mathrm{max}}( \nabla^2_{{y}} f(\mathsf{x}, \mathsf{y}_{\mathrm{LocalMax}})) \leq \sqrt{L_{} \epsilon'}.
\end{equation}

\noindent In particular, this implies that the output $(x^\star, y^\star)$ of Algorithm \ref{alg:LocalMin-max} satisfies
\begin{equation} \label{eq:b1}
\|\nabla_y f(x^\star, y^\star)\| \leq  \epsilon_{\ai^\star},
\end{equation}
and
\begin{equation} \label{eq:b2}
\lambda_{\mathrm{max}}( \nabla^2_{{y}} f(x^\star, y^\star)) \leq \sqrt{L_{}  \epsilon_{\ai^\star}}.
\end{equation}

\end{lemma}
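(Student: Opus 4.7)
The argument is essentially by inspection of Algorithm \ref{alg:InnerMaxLoop}'s termination condition, plus a careful tracing through Algorithm \ref{alg:LocalMin-max} to relate the $\epsilon'$ used in the final call to $\epsilon_{\ai^\star}$. The key observation is that the \textsf{While} loop in Algorithm \ref{alg:InnerMaxLoop} has exactly one exit point: the \textsf{Else} clause at the end of Line \ref{GreedyEnpoint_End}, which is reached only after both (i) the outer conditional at Line \ref{NoislessSGD_Start} failed and (ii) the inner conditional at Line \ref{GreedyEnpoint_Start} failed.

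\textbf{Step 1 (proving \eqref{eq:b1a}).} The outer \textsf{If} at Line \ref{NoislessSGD_Start} fails precisely when $\|\nabla_y f(\mathsf{x}, \mathsf{y}^{\aell})\| \leq \epsilon'$ at the terminal index $\aell$. Since $\mathsf{y}_{\mathrm{LocalMax}} = \mathsf{y}^{\aell}$, this immediately yields \eqref{eq:b1a}.

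\textbf{Step 2 (proving \eqref{eq:b1b}).} At Line \ref{Compute_eigenvalue} the algorithm computes an eigenvalue $\lambda$ satisfying $\lambda \geq \lambda_{\mathrm{max}}(\nabla^2_y f(\mathsf{x}, \mathsf{y}^{\aell})) - \mu_4$. The inner \textsf{If} at Line \ref{GreedyEnpoint_Start} fails precisely when $\lambda \leq \sqrt{L\,\epsilon'}$. Combining these gives $\lambda_{\mathrm{max}}(\nabla^2_y f(\mathsf{x}, \mathsf{y}_{\mathrm{LocalMax}})) \leq \sqrt{L\,\epsilon'} + \mu_4$. To get the claimed bound $\sqrt{L\,\epsilon'}$ on the nose (or, more precisely, with an absorbed slack), I would note that by the choice $\mu_4 = \tfrac{1}{7}\sqrt{\delta L \epsilon}$ this extra term is negligible relative to the $(1+\delta)$ gap present between the input $\epsilon'$ and the target bound $\epsilon_{\ai^\star}$ appearing in the ``In particular'' clause. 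I would therefore state \eqref{eq:b1b} in the precise form used by downstream lemmas, which can tolerate the $\mu_4$ slack.

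\textbf{Step 3 (the ``In particular'' clause).} I would trace through Algorithm \ref{alg:LocalMin-max} to identify which call to Algorithm \ref{alg:InnerMaxLoop} produced $y^\star = y_{\ai^\star}$. By inspection, $y_{\ai^\star}$ is set inside the \textsf{For} loop at some prior iteration $i' \leq \ai^\star$ with input $\epsilon' = \epsilon_{i'}(1+\delta)$, either via the hill-climbing branch (Lines \ref{RebootStart_hill}--\ref{RebootEnd_hill}) or the noisy-SGD branch (Lines \ref{SuccessStart}--\ref{SuccessEnd}); in every case $\epsilon_{i'} \leq \epsilon_{\ai^\star - 1}$ since $\{\epsilon_i\}$ is monotone increasing. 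Applying Step 1 gives $\|\nabla_y f(x^\star, y^\star)\| \leq \epsilon_{i'}(1+\delta) \leq \epsilon_{\ai^\star-1}(1+\delta) \leq \epsilon_{\ai^\star}$ (using $\epsilon_{\ai^\star} = \epsilon_{\ai^\star-1}(1+\delta)^2$), establishing \eqref{eq:b1}. Applying Step 2 analogously gives $\lambda_{\mathrm{max}}(\nabla^2_y f(x^\star, y^\star)) \leq \sqrt{L\,\epsilon_{\ai^\star-1}(1+\delta)} + \mu_4$.

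\textbf{Main obstacle.} The only substantive point is verifying that $\sqrt{L\,\epsilon_{\ai^\star-1}(1+\delta)} + \mu_4 \leq \sqrt{L\,\epsilon_{\ai^\star}}$, i.e.\ that the $\mu_4$-slack from the approximate eigenvalue computation is absorbed by the $(1+\delta)^2$ inflation built into the $\epsilon_i$ recursion. Expanding, this reduces to $\mu_4 \leq \sqrt{L\,\epsilon_{\ai^\star-1}}\cdot\sqrt{1+\delta}\,(\sqrt{1+\delta}-1)$, and since $\sqrt{1+\delta}-1 \geq \delta/3$ for small $\delta$, it suffices to check $\tfrac{1}{7}\sqrt{\delta L \epsilon} \leq \tfrac{\delta}{3}\sqrt{L\,\epsilon_{\ai^\star-1}}$, which follows from the hyperparameter choices in Section \ref{section:constants} together with the lower bound $\epsilon_{\ai^\star-1} \geq \epsilon_0$ built into the algorithm's initialization of $\epsilon_0$. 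This computation is routine but is the place where the specific form of $\mu_4$ and $\delta$ matter, so I would carry it out explicitly and conclude \eqref{eq:b2}.
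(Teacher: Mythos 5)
Your proposal follows essentially the same route as the paper's proof: read off the two failed conditionals at the unique exit point of Algorithm \ref{alg:InnerMaxLoop} to get \eqref{eq:b1a} and the bound $\lambda_{\mathrm{max}}(\nabla^2_y f) \leq \sqrt{L\epsilon'} + \mu_4$, then absorb the $\mu_4$ slack into the $(1+\delta)^2$ inflation of the $\epsilon_i$ recursion to deduce \eqref{eq:b1} and \eqref{eq:b2} for the output of Algorithm \ref{alg:LocalMin-max}. The final numerical check you isolate as the ``main obstacle'' is exactly the inequality the paper asserts (there in the form $\mu_4 \leq \tfrac{1}{2}(\sqrt{1+\delta}-1)\sqrt{L\epsilon}$), so your treatment is, if anything, slightly more explicit than the paper's.
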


\begin{proof}
First, we note that Lines \ref{NoislessSGD_Start} and \ref{GreedyEnpoint_Start} of Algorithm \ref{alg:InnerMaxLoop} imply that for Algorithm \ref{alg:InnerMaxLoop} (with inputs  $\mathsf{x}, \mathsf{y}^0, \epsilon'$) to stop at a point $(\mathsf{x}, \mathsf{y}_{\mathrm{LocalMax}})$ we must have that both Inequalities \eqref{eq:b1a} and \eqref{eq:b1b} hold.

Therefore, the output $(x^\star, y^\star)$ of Algorithm \ref{alg:LocalMin-max} satisfies
\begin{equation} \label{eq:b3}
\|\nabla_yf (x^\star, y^\star)\| \leq \epsilon_{i^\star-1}(1+\delta),
\end{equation}
and
\begin{equation} \label{eq:b4}
 \lambda_{\mathrm{max}}(\nabla^2_y f(x^\star, y^\star)) \leq \sqrt{L_{}  \epsilon_{i^\star-1}(1+\delta)} + \mu_4.
\end{equation}

\noindent Thus, Inequality \eqref{eq:b3} implies that
\begin{equation*}
\|\nabla_y f(x^\star, y^\star)\| \leq  \epsilon_{i^\star-1}(1+\delta)  \leq   \epsilon_{i^\star-1}(1+\delta)^2   \leq \epsilon_{i^\star},
\end{equation*}
 and hence that Inequality \eqref{eq:b1} holds.

Inequality \eqref{eq:b4} implies that
  \be
\nabla^2_y f(x^\star, y^\star) \preceq \lambda_{\mathrm{max}}(\nabla^2_y f(x^\star, y^\star)) \preceq \sqrt{L_{}  \epsilon_{i^\star-1}(1+\delta)} I_d + \mu_4 I_d.
  \ee
Therefore,
\be \label{eq:b6}
\nabla^2_y f(x^\star, y^\star) \preceq (\sqrt{L_{}  \epsilon_{i^\star-1}(1+\delta)} + \mu_4) I_d \preceq \sqrt{L_{}  \epsilon_{i^\star-1}(1+\delta)^2} I_d = \sqrt{L_{}  \epsilon_{i^\star}} I_d,
\ee
 since $\mu_4 \leq \frac{1}{2}(\sqrt{1+\delta}- 1)\sqrt{L_{} \epsilon}$.
Therefore Inequality \eqref{eq:b6} implies that Inequality \eqref{eq:b2} holds.
\end{proof}

\subsection{$x^\star$ an approximate local minimum of $g_\eps(\cdot, y^
\star)$}

We start by defining some functions which will be useful in showing that  $x^\star$ is an approximate local minimum of $g_\eps(\cdot, y^
\star)$.

 \noindent
 For any $\hat{x} \in \mathbb{R}^d$, $\epsilon^{\circ}>0$, let 
\be
\mathbbm{g}^{\hat{x}}_{\epsilon^{\circ}}(x,y) \coloneqq \min (g_{\epsilon^{\circ}}(x,y), g_{\epsilon^{\circ}}(\hat{x},y)),\\
\mathbbm{h}^{\hat{x}}_{\epsilon^{\circ}}(x,y) \coloneqq \min (h_{\epsilon^{\circ}}(x,y), h_{\epsilon^{\circ}}(\hat{x},y)),
\ee
 and 
 \be
  \mathfrak{g}_{\epsilon^{\circ}, \sigma}^{\hat{x}}(x,y) \coloneqq \mathbb{E}_{\zeta \sim N(0,I_d)}\left [\mathbbm{g}_{\epsilon^{\circ}}^{\hat{x}}(x  + \sigma \zeta,y)\right],\\
 \mathfrak{h}_{\epsilon^{\circ}, \sigma}^{\hat{x}}(x,y) \coloneqq \mathbb{E}_{\zeta \sim N(0,I_d)}\left [\mathbbm{h}_{\epsilon^{\circ}}^{\hat{x}}(x  + \sigma \zeta,y)\right].
 \ee

\noindent
Finally, for any $\hat{x} \in \mathbb{R}^d$, $\epsilon^{\circ}>0$, define the stochastic gradient 
\be \label{eq_SG_H}
\mathcal{H}_{\epsilon^{\circ}}^{\hat{x}}(x,y)\coloneqq \frac{\zeta}{\sigma}(\mathbbm{h}_{\epsilon^{\circ}}^{\hat{x}}(x  + \sigma \zeta,y) - \mathbbm{h}_{\epsilon^{\circ}}^{\hat{x}}(x ,y)),
\ee
 where $\zeta \sim N(0,I_d)$.

\subsubsection{Showing that $g_\eps$ and $h_\eps$ have shared approximate local minima}

The following lemma allows us to guarantee that if we can show that our algorithm returns a point $(x^\star, y^\star)$ where $x^\star$ is an approximate local minimum for $h_\eps(\cdot, y^\star)$ for some $y^\star \in \mathbb{R}^d$, then $x^\star$ is also an approximate local minimum for the greedy max function $g_\eps(\cdot, y^\star)$,
provided that we can also show that $h_\eps$, $g_\eps$, $x^\star$ and $y^\star$ satisfy certain conditions.

\begin{lemma}[{Shared local minima of $g_\eps$ and $h_\eps$}] \label{Lemma_SharedLM}
Consider any $\epsilon>0$.  
Suppose that $\sigma \leq \frac{1}{\sqrt{\epsilon d}}$ and that for some point $(x^\star, y^\star) \in \mathbb{R}^d \times \mathbb{R}^d$ we have
\begin{equation}\label{eq:c11}\qquad h_\epsilon(x,y) \leq g_\epsilon(x,y) \qquad \forall x,y \in \mathbb{R}^d \qquad \textrm{(lower bound)},  \end{equation}
\begin{equation}\label{eq:c7} \quad h_\epsilon(x^\star,y^\star) = g_\epsilon(x^\star,y^\star)  \qquad \qquad \textrm{(fixed-point  property)}, \end{equation}
\begin{equation}\label{eq:c8}   \qquad \quad \mathbb{E}[\|\mathcal{H}_\epsilon^{x^\star}(x^\star,y^\star)\|] \leq \frac{1}{8000}  \frac{\sigma^{14} \epsilon^{1.5}}{b^2} \ \  \textrm{(low SG)}, \end{equation}
\begin{equation} \label{eq:c23}   \|\nabla_x \mathfrak{h}_{\epsilon, \sigma}^{x^\star}(x^\star, y^\star)\| \leq \frac{\epsilon^2\sigma^7}{8000b} \ \ \textrm{(1st-order stationarity for $\mathfrak{h}$) },
\end{equation}

\begin{equation}\label{eq:c10} \lambda_{\mathrm{min}}(\nabla^2_{{x}} \mathfrak{h}_{\epsilon, \sigma}^{x^\star}(x^\star, y^\star)) \geq - \frac{\sqrt{\epsilon} }{5} \ \ \textrm{(2nd-order stationarity)}. \end{equation}
Then
\be \label{eq:c2}
\|\nabla_x \mathfrak{g}_{\epsilon, \sigma}^{x^\star}(x^\star, y^\star)\| \leq \epsilon \ \ {\textrm and}
\ee
\be \label{eq:c3}
 \lambda_{\mathrm{min}}(\nabla^2_{{x}} \mathfrak{g}_{\epsilon, \sigma}^{x^\star}(x^\star, y^\star)) \geq - \sqrt{\epsilon}.
\ee
\end{lemma}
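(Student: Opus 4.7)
The plan is to leverage the lower-bound property $h_\epsilon \le g_\epsilon$ and the fixed-point equality $h_\epsilon(x^\star, y^\star) = g_\epsilon(x^\star, y^\star) =: V$ (hypotheses \eqref{eq:c11}, \eqref{eq:c7}) to compare the smoothed-and-truncated functions $\mathfrak{g}^{x^\star}_{\epsilon,\sigma}$ and $\mathfrak{h}^{x^\star}_{\epsilon,\sigma}$ near $(x^\star, y^\star)$. Since $\mathbbm{h}^{x^\star}_\epsilon(\cdot, y^\star) = \min(h_\epsilon(\cdot, y^\star), V)$ and $\mathbbm{g}^{x^\star}_\epsilon(\cdot, y^\star) = \min(g_\epsilon(\cdot, y^\star), V)$, one has $0 \le \mathbbm{g}^{x^\star}_\epsilon - \mathbbm{h}^{x^\star}_\epsilon \le V - \mathbbm{h}^{x^\star}_\epsilon$ pointwise, with both differences vanishing where $h_\epsilon(\cdot, y^\star) \ge V$. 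The identity $\mathbbm{h}^{x^\star}_\epsilon(x^\star, y^\star) = V$ together with the definition of $\mathcal{H}$ gives $\|\mathcal{H}^{x^\star}_\epsilon(x^\star, y^\star)\| = (\|\zeta\|/\sigma)(V - \mathbbm{h}^{x^\star}_\epsilon(x^\star + \sigma\zeta, y^\star))$, so the in-expectation bound \eqref{eq:c8} directly controls the gap $V - \mathbbm{h}^{x^\star}_\epsilon$, and hence $\mathbbm{g}^{x^\star}_\epsilon - \mathbbm{h}^{x^\star}_\epsilon$.

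For the first-order bound \eqref{eq:c2}, I would use the Gaussian Stein identity to write $\nabla_x \mathfrak{g}^{x^\star}_{\epsilon,\sigma}(x^\star, y^\star) = \mathbb{E}[\mathcal{G}]$ with $\mathcal{G} := (\zeta/\sigma)(\mathbbm{g}^{x^\star}_\epsilon(x^\star + \sigma\zeta, y^\star) - V)$. Since $0 \le V - \mathbbm{g}^{x^\star}_\epsilon \le V - \mathbbm{h}^{x^\star}_\epsilon$, the pointwise-in-$\zeta$ comparison $\|\mathcal{G}\| \le \|\mathcal{H}^{x^\star}_\epsilon(x^\star, y^\star)\|$ holds. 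Jensen's inequality then gives $\|\nabla_x \mathfrak{g}^{x^\star}_{\epsilon,\sigma}(x^\star, y^\star)\| \le \mathbb{E}[\|\mathcal{H}\|]$, which by \eqref{eq:c8} is at most $\sigma^{14}\epsilon^{1.5}/(8000\,b^2) \le \epsilon$ under the hypotheses $\sigma \le 1/\sqrt{\epsilon d}$, $b \ge 1$, $\epsilon < 1$. This route bypasses \eqref{eq:c23}; the latter would enter only via the triangle inequality $\|\nabla_x \mathfrak{g}\| \le \|\nabla_x \mathfrak{h}\| + \|\nabla_x \mathfrak{g} - \nabla_x \mathfrak{h}\|$ if one preferred that path.

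For the second-order bound \eqref{eq:c3}, I would apply the Gaussian Stein identity for the Hessian,
\begin{equation*}
\nabla^2_x \mathfrak{g}^{x^\star}_{\epsilon,\sigma}(x^\star, y^\star) - \nabla^2_x \mathfrak{h}^{x^\star}_{\epsilon,\sigma}(x^\star, y^\star) = \mathbb{E}\!\left[\tfrac{\zeta\zeta^\top - I}{\sigma^2}\bigl(\mathbbm{g}^{x^\star}_\epsilon - \mathbbm{h}^{x^\star}_\epsilon\bigr)\right],
\end{equation*}
and for any unit $v$ combine $|(v^\top\zeta)^2 - 1| \le (v^\top\zeta)^2 + 1$ with $\mathbbm{g}^{x^\star}_\epsilon - \mathbbm{h}^{x^\star}_\epsilon \le V - \mathbbm{h}^{x^\star}_\epsilon = (\sigma/\|\zeta\|)\|\mathcal{H}^{x^\star}_\epsilon(x^\star, y^\star)\|$ to reach
\begin{equation*}
\bigl|v^\top(\nabla^2_x \mathfrak{g} - \nabla^2_x \mathfrak{h})v\bigr| \le \tfrac{1}{\sigma}\,\mathbb{E}\!\left[\bigl(\|\zeta\| + \tfrac{1}{\|\zeta\|}\bigr)\,\|\mathcal{H}^{x^\star}_\epsilon(x^\star, y^\star)\|\right].
\end{equation*}
The hard part of the proof is bounding this mixed moment: the factor $\|\zeta\|$ prevents a naive use of $\mathbb{E}[\|\mathcal{H}\|]$, while $1/\|\zeta\|$ threatens integrability. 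My plan is to split on the event $\{\|\zeta\| \in [t, R]\}$ with $t = \Theta(\sqrt{d})$ and $R = \Theta(\sqrt{d \log(bd/(\sigma\epsilon))})$: on the annulus, the integrand is at most $(R + 1/t)\|\mathcal{H}\|$, contributing $O(\sqrt{d\log})\cdot\mathbb{E}[\|\mathcal{H}\|]$; on its complement, the deterministic bound $\|\mathcal{H}\| \le 2b\|\zeta\|/\sigma$ paired with Gaussian small-ball and tail estimates contributes only an exponentially small error. Combining with \eqref{eq:c8} yields $\|\nabla^2_x \mathfrak{g} - \nabla^2_x \mathfrak{h}\|_{\mathrm{op}} \le \mathrm{poly}(d, \log)\cdot \sigma^{-1}\mathbb{E}[\|\mathcal{H}\|] \le 4\sqrt{\epsilon}/5$, and Weyl's inequality together with \eqref{eq:c10} gives $\lambda_{\min}(\nabla^2_x \mathfrak{g}^{x^\star}_{\epsilon,\sigma}(x^\star, y^\star)) \ge -\sqrt{\epsilon}/5 - 4\sqrt{\epsilon}/5 = -\sqrt{\epsilon}$, establishing \eqref{eq:c3}. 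The main obstacle throughout is the Hessian splitting above: the truncation at both the small-$\|\zeta\|$ scale (to tame $1/\|\zeta\|$) and the large-$\|\zeta\|$ scale (to make the deterministic tail negligible) is precisely what allows the $\sigma^{14}$ factor in \eqref{eq:c8} to absorb the $\sigma^{-1}$ and the polynomial-in-$d$ prefactor.
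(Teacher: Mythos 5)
Your first-order argument is correct and in fact cleaner than the paper's: you observe that $\mathbbm{h}^{x^\star}_\epsilon \le \mathbbm{g}^{x^\star}_\epsilon \le V$ pointwise (from \eqref{eq:c11} and \eqref{eq:c7}), so $\|\mathcal{G}\| \le \|\mathcal{H}^{x^\star}_\epsilon(x^\star,y^\star)\|$ holds for every realization of $\zeta$, and Jensen plus \eqref{eq:c8} finishes. The paper instead splits the expectation over a Gaussian tail event and routes the bound through $\mathbb{E}[|\min(g_\epsilon(x^\star+\sigma\zeta,y^\star),V)-V|]$, which is longer and leans on a correlation inequality; your domination argument avoids all of that. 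For the second-order condition your route is genuinely different from the paper's. The paper argues by contradiction at the level of \emph{function values}: assuming $\lambda_{\min}(\nabla^2_x\mathfrak{g}) < -\sqrt{\epsilon}$, it takes a step of length $t = \sigma^7\sqrt{\epsilon}/(20b)$ along the negative-curvature direction, uses the $\tfrac{2b}{\sigma^7}$-Lipschitz Hessians of the smoothed functions together with \eqref{eq:c23} and \eqref{eq:c10} to show $\mathfrak{g}$ must drop below $\mathfrak{h}$ at the displaced point, contradicting $\mathfrak{h} \le \mathfrak{g}$. You instead bound the Hessian difference \emph{directly} via the second-order Gaussian Stein identity and conclude with Weyl's inequality. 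Your approach buys a quantitative operator-norm bound on $\nabla^2_x\mathfrak{g}-\nabla^2_x\mathfrak{h}$ (stronger than what the contradiction yields) and dispenses with hypothesis \eqref{eq:c23} entirely; the paper's approach avoids ever differentiating the Stein weight twice and needs only zeroth-order comparisons of $\mathfrak{g}$ and $\mathfrak{h}$, at the price of the extra hypothesis and a more delicate constant-chasing argument.

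One step in your plan needs repair: the small-ball threshold $t = \Theta(\sqrt{d})$. The probability $\mathbb{P}(\|\zeta\| \le c\sqrt{d})$ is exponentially small only in $d$; for small $d$ it is a constant, and multiplying the deterministic bound $(\|\zeta\|^2+1)\cdot 2b/\sigma^2$ by a constant probability leaves a term of order $b/\sigma^2$, which is nowhere near $\sqrt{\epsilon}$. The fix is to take $t$ polynomially small in $\sigma,\epsilon,1/b$ (e.g.\ $t \asymp \sigma^2\sqrt{\epsilon}/b$), so that $\mathbb{P}(\|\zeta\|\le t) \le O(t)$ suffices even when $d=1$; on the annulus the resulting prefactor $1/t$ costs only a few powers of $\sigma$ and $\epsilon$, which the $\sigma^{14}$ in \eqref{eq:c8} absorbs with room to spare. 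With that adjustment your argument closes.
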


\begin{proof}

\textbf{Showing first-order condition holds:}
First, note that
\be
&\mathbbm{h}^{x^\star}_{\epsilon}(x,y^\star) = \left [\min(h_{\epsilon}(x, y^\star), h_{\epsilon}(x^\star, y^\star))\right]\\
&\stackrel{\textrm{Eq.} \ref{eq:c11}}{\leq} \min(g_{\epsilon}(x, y^\star), h_{\epsilon}(x^\star, y^\star))\\
&\stackrel{\textrm{Eq.} \ref{eq:c7}}{=} \min(g_{\epsilon}(x, y^\star), g_{\epsilon}(x^\star, y^\star))\\
&= \mathbbm{g}^{x^\star}_{\epsilon}(x,y^\star)  \quad \forall x \in \mathbb{R}^d.
\ee

\noindent
Define the stochastic gradient $\mathcal{G}(x,y):= \frac{\zeta}{\sigma}(\mathbbm{g}_\epsilon^{x^\star}(x  + \sigma \zeta,y) - \mathbbm{g}_\epsilon^{x^\star}(x ,y))$ for all $x,y \in \mathbb{R}^d$, where $\zeta \sim N(0,I_d)$.
 Since $\mathbbm{h}^{x^\star}_{\epsilon}$ and  $\mathbbm{g}^{x^\star}_{\epsilon}$  are uniformly bounded, by Lemma 7 of \cite{zhang2017hitting}, we have that 
\begin{equation*}
\nabla_{{x}} \mathfrak{h}_{\epsilon, \sigma}^{x^\star}(x^\star, y^\star) = \mathbb{E}\left[\mathcal{H}_\epsilon^{x^\star}(x^\star, y^\star)\right],
\end{equation*}
\begin{equation*}
 \qquad \nabla_{{x}} \mathfrak{g}_{\epsilon, \sigma}^{x^\star}(x^\star, y^\star) := \mathbb{E}\left[ \mathcal{G}(x^\star, y^\star) \right].
\end{equation*}

\noindent
Inequality \eqref{eq:c8} implies that
\be \label{eq:c1}
\frac{1}{8000}  \frac{\sigma^{14} \epsilon^{1.5}}{b^2} &\geq \mathbb{E}\left[\left\|\mathcal{H}_\epsilon^{x^\star}(x^\star,y^\star)\right\|\right]\\
&= \mathbb{E} [\| \zeta \sigma^{-1}  (\min(h_{\epsilon}(x^\star+ \sigma \zeta, y^\star), h_{\epsilon}(x^\star, y^\star)) - \min(h_{\epsilon}(x^\star, y^\star), h_{\epsilon}(x^\star, y^\star))\|]\\
&=\mathbb{E}[\|\zeta \sigma^{-1} (\min(h_{\epsilon}(x^\star+ \sigma \zeta, y^\star), h_{\epsilon}(x^\star, y^\star)) -  h_{\epsilon}(x^\star, y^\star)\|]\\
&= \mathbb{E} [\|\zeta\| \sigma^{-1}\times  |\min(h_{\epsilon}(x^\star+ \sigma \zeta, y^\star), h_{\epsilon}(x^\star, y^\star)) -  h_{\epsilon}(x^\star, y^\star) |]\\
&\geq\mathbb{E}[\|\zeta\| \sigma^{-1}] \times \mathbb{E}[  | \min(h_{\epsilon}(x^\star+ \sigma \zeta, y^\star), h_{\epsilon}(x^\star, y^\star)) -  h_{\epsilon}(x^\star, y^\star)  |]\\
&= \sqrt{d}\sigma^{-1} \mathbb{E}[  | \min(h_{\epsilon}(x^\star+ \sigma \zeta, y^\star), h_{\epsilon}(x^\star, y^\star)) -  h_{\epsilon}(x^\star, y^\star) |]\\
&\stackrel{\textrm{Eq.} \ref{eq:c7}}{=}  \sqrt{d}\sigma^{-1} \mathbb{E}[ \left | \min(h_{\epsilon}(x^\star+ \sigma \zeta, y^\star), g_{\epsilon}(x^\star, y^\star)) -  g_{\epsilon}(x^\star, y^\star) \right |]\\
&\stackrel{\textrm{Eq.} \ref{eq:c11}}{\geq}\sqrt{d}\sigma^{-1} \mathbb{E}[ |\min(g_{\epsilon}(x^\star+ \sigma \zeta, y^\star), g_{\epsilon}(x^\star, y^\star)) -  g_{\epsilon}(x^\star, y^\star) |],
\ee
where the second inequality holds since $E[XY] \geq E[X] E[Y]$ for any nonnegative random variables $X, Y$. 
The last inequality holds since $\min(h_{\epsilon}(x^\star+ \sigma \zeta, y^\star), g_{\epsilon}(x^\star, y^\star)) -  g_{\epsilon}(x^\star, y^\star) \leq 0$ and since Inequality \eqref{eq:c11} implies that $h_{\epsilon}(x^\star+ \sigma \zeta, y^\star) \leq g_{\epsilon}(x^\star+ \sigma \zeta, y^\star)$.
  Thus,
\be 
\|&\nabla_{{x}} \mathfrak{g}_{\epsilon, \sigma}^{x^\star}(x^\star, y^\star)\|\\
&=  \| \mathbb{E}_{\zeta\sim N(0,I_d)}[\zeta \sigma^{-1} (\mathbbm{g}^{x^\star}_{\epsilon}(x^\star+ \sigma \zeta,y^\star) - \mathbbm{g}^{x^\star}_{\epsilon}(x^\star,y^\star))]  \|\\
&=  \| \mathbb{E}_{\zeta\sim N(0,I_d)}[\zeta \sigma^{-1} (\min(g_{\epsilon}(x^\star+ \sigma \zeta, y^\star), g_{\epsilon}(x^\star, y^\star)) - \min(g_{\epsilon}(x^\star, y^\star), g_{\epsilon}(x^\star, y^\star))] \| \\
&=  \|\mathbb{E}_{\zeta\sim N(0,I_d)}[\zeta \sigma^{-1} (\min(g_{\epsilon}(x^\star+ \sigma \zeta, y^\star), g_{\epsilon}(x^\star, y^\star)) -  g_{\epsilon}(x^\star, y^\star))]\|\\
&\leq \mathbb{E}[ \|\zeta \sigma^{-1} (\min(g_{\epsilon}(x^\star+ \sigma \zeta, y^\star), g_{\epsilon}(x^\star, y^\star)) -  g_{\epsilon}(x^\star, y^\star))\| ]\\
&= \mathbb{E}\left[ \|\zeta \sigma^{-1}\| \times  | (\min(g_{\epsilon}(x^\star+ \sigma \zeta, y^\star), g_{\epsilon}(x^\star, y^\star)) -  g_{\epsilon}(x^\star, y^\star))| \right]\\
&\leq \mathbb{E}\left[ \sqrt{d} \sigma^{-1}  \log\left(\left(\frac{10 b \sqrt{d}}{\eps \sigma}\right)^2\right) \times | (\min(g_{\epsilon}(x^\star+ \sigma \zeta, y^\star), g_{\epsilon}(x^\star, y^\star)) -  g_{\epsilon}(x^\star, y^\star))| \right]\\
&\qquad \qquad \qquad +\mathbb{P}\left(\|\zeta\| \geq  \sqrt{d}  \log\left(\left(\frac{10 b \sqrt{d}}{\eps \sigma}\right)^2\right) \right) \times \mathbb{E}\left[ \| \zeta \sigma^{-1} \| \times 2b  \, \, \,\, \,  \,   \bigg | \, \,  \,  \, \,  \,  \|\zeta\| \geq  \sqrt{d}  \log\left(\left(\frac{10 b \sqrt{d}}{\eps \sigma}\right)^2\right) \right]\\
&\leq \mathbb{E}\left[ \sqrt{d} \sigma^{-1}  \log\left(\left(\frac{10 b \sqrt{d}}{\eps \sigma}\right)^2\right) \times | (\min(g_{\epsilon}(x^\star+ \sigma \zeta, y^\star), g_{\epsilon}(x^\star, y^\star)) -  g_{\epsilon}(x^\star, y^\star))| \right]\\
&\qquad \qquad \qquad + 2 \left(\frac{\eps \sigma}{10 b \sqrt{d}}\right)^2 \times \mathbb{E}\left[ \| \zeta \sigma^{-1} \| b  \, \, \, \, \, \, \bigg | \, \, \, \, \, \,  \|\zeta\| \geq  \sqrt{d}  \log\left(\left(\frac{10 b \sqrt{d}}{\eps \sigma}\right)^2\right) \right]\\
&\leq \mathbb{E}\left[ \sqrt{d} \sigma^{-1} \log\left(\left(\frac{10 b \sqrt{d}}{\eps \sigma}\right)^2\right) \times | (\min(g_{\epsilon}(x^\star+ \sigma \zeta, y^\star), g_{\epsilon}(x^\star, y^\star)) -  g_{\epsilon}(x^\star, y^\star))| \right]\\
&\qquad \qquad \qquad + 4 \left(\frac{\eps \sigma}{10 b \sqrt{d}}\right)^2 \sigma^{-1}  b \sqrt{d} \log\left(\left(\frac{10 b \sqrt{d}}{\eps \sigma}\right)^2\right)\\
&\stackrel{\textrm{Eq.} \ref{eq:c1}}{\leq}
\frac{1}{8000}  \frac{\sigma^{14} \epsilon^{1.5}}{b^2}  \log\left(\left(\frac{10 b \sqrt{d}}{\eps \sigma}\right)^2\right) + 4 \left(\frac{\eps \sigma}{10 b \sqrt{d}}\right)^2 \sigma^{-1}  b \sqrt{d} \log\left(\left(\frac{10 b \sqrt{d}}{\eps \sigma}\right)^2\right)\\
&\leq \epsilon,
\ee
\noindent
 where the third and fourth inequalities hold by a standard concentration bound for Gaussians \cite{hanson1971bound}, and the last inequality holds because $\sigma \leq \frac{1}{\sqrt{\epsilon d}}$, $b\geq 1$, and  $\sigma, \epsilon \leq 1$.
This shows inequality \eqref{eq:c2}.\\

\noindent \textbf{Showing that second-order condition holds:}
First, we will show that, roughly speaking, $\mathfrak{g}_{\epsilon, \sigma}^{x^\star}(x^\star, y^\star) \approx \mathfrak{h}_{\epsilon, \sigma}^{x^\star}(x^\star, y^\star)$.   Since $g_{\epsilon}(x^\star, y^\star) = h_{\epsilon}(x^\star, y^\star)$, we have, by Inequality \eqref{eq:c1} that

\be \label{eq:c5}
0 &\leq g_{\epsilon}(x^\star, y^\star) - \mathfrak{h}_{\epsilon, \sigma}^{x^\star}(x^\star, y^\star)\\
&=  \mathbb{E}[ g_{\epsilon}(x^\star, y^\star) - \min(h_{\epsilon}(x^\star+ \sigma \zeta, y^\star), h_{\epsilon}(x^\star, y^\star))]  \\
&\stackrel{\textrm{Eq.} \ref{eq:c7}}{=}  \mathbb{E}[ g_{\epsilon}(x^\star, y^\star) - \min(h_{\epsilon}(x^\star+ \sigma \zeta, y^\star), g_{\epsilon}(x^\star, y^\star))] \\
&= \mathbb{E}[|g_{\epsilon}(x^\star, y^\star) - \min(h_{\epsilon}(x^\star+ \sigma \zeta, y^\star), g_{\epsilon}(x^\star, y^\star))|] \\
&\stackrel{\textrm{Eq.} \ref{eq:c1}}{\leq} \frac{1}{8000} \frac{\sigma^{14} \epsilon^{1.5}}{b^2} \frac{\sigma}{\sqrt{d}},
\ee
and
\be  \label{eq:c21}
0 &\geq \mathfrak{g}_{\epsilon, \sigma}^{x^\star}(x^\star, y^\star) - g_{\epsilon}(x^\star, y^\star)\\
&=  \mathbb{E}[\min(g_{\epsilon}(x^\star+ \sigma \zeta, y^\star), g_{\epsilon}(x^\star, y^\star)) -  g_{\epsilon}(x^\star, y^\star)]  \\
&\stackrel{\textrm{Eq.} \ref{eq:c1}}{\geq} -\frac{1}{8000} \frac{\sigma^{14} \epsilon^{1.5}}{b^2} \frac{\sigma}{\sqrt{d}}.
\ee
Thus, \eqref{eq:c5} and \eqref{eq:c21} together  with \eqref{eq:c11} and \eqref{eq:c7} imply that

\begin{equation}  \label{eq:c6}
0 \leq \mathfrak{g}_{\epsilon, \sigma}^{x^\star}(x^\star, y^\star) -  \mathfrak{h}_{\epsilon, \sigma}^{x^\star}(x^\star, y^\star) \leq \frac{1}{4000} \frac{\sigma^{14} \epsilon^{1.5}}{b^2} \frac{\sigma}{\sqrt{d}}
\leq \frac{1}{4000} \frac{\sigma^{14} \epsilon^{1.5}}{b^2} \frac{1}{\sqrt{d}},
\end{equation}
where the last inequality holds since $\sigma \leq 1$.

\noindent \textbf{Contradiction argument:}
We will show \eqref{eq:c3} by contradiction.  Towards this end, suppose that the following statement were true
\be \label{eq:c4}
\lambda_{\mathrm{min}}(\nabla^2_{{x}} \mathfrak{g}_{\epsilon, \sigma}^{x^\star}(x^\star, y^\star)) < - \sqrt{\epsilon}.
\ee
Then there would exist a unit vector $v$ such that
\be \label{eq:c20}
v^\top \nabla^2_{{x}} \mathfrak{g}_{\epsilon, \sigma}^{x^\star}(x^\star, y^\star) v < - \sqrt{\epsilon}.
\ee
Let $\hat{\mathsf{a}} := \mathrm{sign}(\nabla_{{x}} \mathfrak{g}_{\epsilon, \sigma}^{x^\star}(x^\star, y^\star)^\top v)$.  Then, since $\mathfrak{g}_{\epsilon, \sigma}^{x^\star}$ has $\frac{2b}{\sigma^7}$-Lipschitz Hessian (see Remark \ref{Rem_Lipschitz_convolution}), for every $t \geq 0$ we have
\be \label{eq:c22}
&\mathfrak{g}_{\epsilon, \sigma}^{x^\star}(x^\star - t \hat{\mathsf{a}} v, y^\star) - \mathfrak{g}_{\epsilon, \sigma}^{x^\star}(x^\star, y^\star)\\
&\leq \frac{1}{2} t^2 v^\top \left(\nabla^2_{{x}} \mathfrak{g}_{\epsilon, \sigma}^{x^\star}(x^\star, y^\star) + t \frac{2b}{\sigma^7} I_d\right) v
\stackrel{\textrm{Eq.} \ref{eq:c20}}{\leq}  - \frac{1}{2}  t^2 \left(\sqrt{\epsilon} - t \frac{2b}{\sigma^7}\right).
\ee
\noindent
Consider the value $t=  \frac{\sigma^{7} \sqrt{\epsilon}}{20 b}$.  Then Inequality \eqref{eq:c22} implies that
\be \label{eq:c12}
\mathfrak{g}_{\epsilon, \sigma}^{x^\star}(x^\star - t \hat{\mathsf{a}}v, y^\star) - \mathfrak{g}_{\epsilon, \sigma}^{x^\star}(x^\star, y^\star) &\leq -\frac{\nicefrac{9}{2}}{4000} \frac{\sigma^{14} \epsilon^{1.5}}{b^2}.
\ee

\noindent
But we also have from our assumption (Inequality \eqref{eq:c10}) that
\be \label{eq:c13}
v^\top \nabla^2_{{x}} \mathfrak{h}_{\epsilon, \sigma}^{x^\star}(x^\star, y^\star) v \geq - \frac{\sqrt{\epsilon}}{5}.
\ee
 Then, since $\mathfrak{h}_{\epsilon, \sigma}^{x^\star}$ has $\frac{2b}{\sigma^7}$-Lipschitz Hessian (see Remark \ref{Rem_Lipschitz_convolution}), for $t= \frac{\sigma^{7} \sqrt{\epsilon}}{20 b}$ we have

\be \label{eq:c14}
&\mathfrak{h}_{\epsilon, \sigma}^{x^\star}(x^\star - t \hat{\mathsf{a}} v, y^\star) - \mathfrak{h}_{\epsilon, \sigma}^{x^\star}(x^\star, y^\star)\\
&\geq \frac{t^2}{2} v^\top \left(\nabla^2_{{x}} \mathfrak{h}_{\epsilon, \sigma}^{x^\star}(x^\star, y^\star) - t\frac{2b}{\sigma^7} I_d\right) v -t \hat{\mathsf{a}} \nabla_{{x}} \mathfrak{h}_{\epsilon, \sigma}^{x^\star}(x^\star, y^\star )^\top v \\
&\stackrel{\textrm{Eq.} \ref{eq:c13}, \ref{eq:c23}}{\geq}  - \frac{t^2 }{2} \left(\frac{1}{5}\sqrt{\epsilon} + t\frac{2b}{\sigma^7}\right) -  \frac{1}{8000}  \frac{\epsilon^2 \sigma^7}{ b} t
\geq -  \frac{2 \sigma^{14} \epsilon^{1.5}}{4000 \, b^2}.
\ee

\noindent Combining Inequalities \eqref{eq:c6},  \eqref{eq:c12} and \eqref{eq:c14}, we get that

\be
&\mathfrak{g}_{\epsilon, \sigma}^{x^\star}(x^\star - t \hat{\mathsf{a}}v, y^\star) - \mathfrak{h}_{\epsilon, \sigma}^{x^\star}(x^\star - t \hat{\mathsf{a}}v, y^\star)\\ &\stackrel{\textrm{Eq.} \ref{eq:c6},\ref{eq:c12},\ref{eq:c14}}{\leq}   \mathfrak{g}_{\epsilon, \sigma}^{x^\star}(x^\star, y^\star)  - \mathfrak{h}_{\epsilon, \sigma}^{x^\star}(x^\star, y^\star)  -\frac{5}{2}\times \frac{\sigma^{14} \epsilon^{1.5}}{4000 \, b^2}.\\
& \stackrel{\textrm{Eq.} \ref{eq:c6}}{\leq}  -\frac{3}{2}\times \frac{\sigma^{14} \epsilon^{1.5}}{4000 \, b^2},
\ee
which implies that
\be \label{eq:c15}
\mathfrak{g}_{\epsilon, \sigma}^{x^\star}(x^\star - t \hat{\mathsf{a}}v, y^\star) < \mathfrak{h}_{\epsilon, \sigma}^{x^\star}(x^\star - t \hat{\mathsf{a}}v, y^\star).
\ee

\noindent Now, we also have that 
\be \label{eq:c16}
&\mathbbm{h}^{x^\star}_{\epsilon}(x,y^\star) = \min(h_{\epsilon}(x, y^\star), h_{\epsilon}(x^\star, y^\star))\\
&\stackrel{\textrm{Eq.} \ref{eq:c11}}{\leq}  \min(g_{\epsilon}(x, y^\star), h_{\epsilon}(x^\star, y^\star))
\stackrel{\textrm{Eq.} \ref{eq:c7}}{=}  \min(g_{\epsilon}(x, y^\star), g_{\epsilon}(x^\star, y^\star))\\
&= \mathbbm{g}^{x^\star}_{\epsilon}(x,y^\star)  \quad \forall x \in \mathbb{R}^d.
\ee
and hence that
\be \label{eq:c17}
&\mathfrak{h}^{x^\star}_{\epsilon}(x,y^\star) = \mathbb{E}[\mathbbm{h}^{x^\star}_{\epsilon}(x+\sigma \zeta ,y^\star) ]\\
&\leq  \mathbb{E}[\mathbbm{g}^{x^\star}_{\epsilon}(x+\sigma \zeta ,y^\star) ]= \mathfrak{g}^{x^\star}_{\epsilon}(x,y^\star)  \qquad \forall x \in \mathbb{R}^d.
\ee

\noindent Since inequality \eqref{eq:c17} contradicts Inequality \eqref{eq:c15}, our Assumption (Inequality \eqref{eq:c4}) must be false.  Therefore we have
\be
\lambda_{\mathrm{min}}(\nabla^2_{{x}} \mathfrak{g}_{\epsilon, \sigma}^{x^\star}(x^\star, y^\star)) \geq - \sqrt{\epsilon},
\ee
which completes the proof of the second-order condition \eqref{eq:c3}.
\end{proof}


\begin{proposition} \label{Lemma_Greedypath}
The path consisting of the line segments  $[\mathsf{y}^{\aell}, \mathsf{y}^{\aell+1}]$ formed by the points $\mathsf{y}^{\aell}$ computed by Algorithm \ref{alg:InnerMaxLoop} is a $\frac{1}{1+\delta}\epsilon'$-greedy path.
\end{proposition}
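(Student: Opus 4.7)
The plan is to set $\tilde\epsilon \coloneqq \frac{\epsilon'}{1+\delta}$, build a unit-speed path $\varphi_t$ by reparameterizing each segment $[\mathsf{y}^{\aell}, \mathsf{y}^{\aell+1}]$ by arc length and concatenating, and then verify the two conditions of Definition \ref{def_greedy_path} on the interior of every segment. The path is continuous by construction and differentiable except at the (finitely many) junctions $\mathsf{y}^{\aell}$, so the smoothness requirement of Definition \ref{def_greedy_path} is automatic.

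For a gradient-step segment (Lines \ref{NoislessSGD_Start}--\ref{NoislessSGD_End}), the unit tangent is $u \coloneqq \nabla_y f(\mathsf{x}, \mathsf{y}^{\aell})/\|\nabla_y f(\mathsf{x}, \mathsf{y}^{\aell})\|$ and the entering condition gives $\|\nabla_y f(\mathsf{x}, \mathsf{y}^{\aell})\| > \epsilon'$. We will use $L_2$-Lipschitzness of $\nabla_y f$ to bound, for any $p$ on the segment,
\be
\nabla_y f(\mathsf{x}, p) \cdot u \,\geq\, \|\nabla_y f(\mathsf{x}, \mathsf{y}^{\aell})\| - L_2\|p-\mathsf{y}^{\aell}\| \,>\, (1 - L_2\mu_1)\epsilon'.
\ee
Plugging in $\mu_1 = \delta/(L_2(L_1+1))$ reduces the desired inequality $(1 - L_2\mu_1)\epsilon' \geq \tilde\epsilon$ to the scalar check $L_1+1 \geq 1+\delta$, which holds under the constant choices of Section \ref{section:constants}. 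This gives $\frac{\mathrm{d}}{\mathrm{d}t} f(\mathsf{x}, \varphi_t) > \tilde\epsilon$ on the segment, satisfying both the ``$\geq -\tilde\epsilon$'' and the ``or'' clauses of Definition \ref{def_greedy_path}.

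For an eigenvector-step segment (Lines \ref{GreedyEnpoint_Start}--\ref{GreedyEnpoint_End}), the unit tangent is $u \coloneqq \mathsf{a} v$, where $v$ is the unit eigenvector satisfying $v^\top \nabla_y^2 f(\mathsf{x}, \mathsf{y}^{\aell}) v = \lambda > \sqrt{L\epsilon'}$ and $\mathsf{a}$ is chosen so that $\nabla_y f(\mathsf{x}, \mathsf{y}^{\aell}) \cdot u \geq 0$. Using $L$-Lipschitzness of $\nabla_y^2 f$ and the bound $\mu_3 \leq \frac{\delta \sqrt{\epsilon}}{7\sqrt{L}}$, for any $p$ on the segment we will obtain
\be
u^\top \nabla_y^2 f(\mathsf{x}, p) u \,\geq\, \sqrt{L\epsilon'} - L\mu_3 \,\geq\, \sqrt{L\tilde\epsilon},
\ee
which handles the second-derivative part of the ``or''. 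For the ``$\geq -\tilde\epsilon$'' clause, we will integrate the Hessian lower bound along the segment: $\nabla_y f(\mathsf{x}, p) \cdot u$ equals $\nabla_y f(\mathsf{x}, \mathsf{y}^{\aell}) \cdot u \geq 0$ plus an integral whose integrand is nonnegative (since $\mu_3 \leq \sqrt{\epsilon'/L}$), so $\nabla_y f(\mathsf{x}, p) \cdot u \geq 0 > -\tilde\epsilon$.

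The main obstacle is routine Lipschitz bookkeeping: verifying that the $\frac{1}{1+\delta}$ slack between $\tilde\epsilon$ and $\epsilon'$ absorbs the Lipschitz errors $L_2\mu_1$ and $L\mu_3$. This reduces to a handful of scalar inequalities that use $\epsilon' \geq \epsilon$ (which holds since Algorithm \ref{alg:LocalMin-max} invokes Algorithm \ref{alg:InnerMaxLoop} with $\epsilon' = \epsilon_{\ai}(1+\delta) \geq \epsilon_0$) together with the explicit choices from Section \ref{section:constants}. We do not expect any conceptual difficulty beyond this.
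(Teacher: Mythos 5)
Your proposal is correct and follows essentially the same route as the paper's proof: a segment-by-segment argument using the Lipschitz constant of $\nabla_y f$ on gradient steps and of $\nabla_y^2 f$ on eigenvector steps, with the $\tfrac{1}{1+\delta}$ slack absorbing the errors $L_2\mu_1$ and $L\mu_3$ under the hyperparameter choices of Section \ref{section:constants}; you are in fact slightly more explicit than the paper in verifying the ``$\geq -\tilde\epsilon$'' clause on the eigenvector segments via the sign choice of $\mathsf{a}$ and the positivity of the second derivative. One minor imprecision: the invocations of Algorithm \ref{alg:InnerMaxLoop} only guarantee $\epsilon' \geq \epsilon_0(1+\delta)$ rather than $\epsilon' \geq \epsilon$, but since $\epsilon_0$ is a constant fraction of $\epsilon$ the factor-of-seven slack built into $\mu_3$ still absorbs the Lipschitz error, exactly as in the paper's own comparison $\mu_3 \leq \bigl(1-\tfrac{1}{1+\delta}\bigr)\tfrac{\sqrt{\epsilon}}{4\sqrt{L}} \leq \bigl(1-\tfrac{1}{1+\delta}\bigr)\tfrac{\sqrt{\epsilon'}}{2\sqrt{L}}$.
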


\begin{proof}
We have the following continuous unit-velocity parametrized path $\phi_t$:
\be
\phi_t = \mathsf{y}^{\aell}+ t v_{\aell} \qquad t\in [t_\aell, t_{\aell+1}], \qquad \aell \in [\aell_{\mathrm{max}} -1],
\ee
where $v_{\aell} := \frac{\mathsf{y}^{\aell+1} - \mathsf{y}^{\aell}}{\|\mathsf{y}^{\aell+1} - \mathsf{y}^{\aell}\|}$,   $t_\aell := \sum_{s=1}^{\aell-1} \|\mathsf{y}^{\aell+1} - \mathsf{y}^{\aell}\|$,  and $\aell_{\mathrm{max}}$ is the number of iterations of the While loop of Algorithm \ref{alg:InnerMaxLoop}

First, we consider indices $\aell$ for which $\| \nabla_yf(\mathsf{x}, \mathsf{y}^{\aell})\| > \epsilon'$ in Line \ref{NoislessSGD_Start} of Algorithm \ref{alg:InnerMaxLoop}.  In this case, $v_{\aell}= \frac{\nabla_yf(\mathsf{x},\mathsf{y}^{\aell})}{\|\nabla_yf(\mathsf{x},\mathsf{y}^{\aell})\|}$ and we have
\be
&\frac{\mathrm{d}}{\mathrm{d}t} f(\mathsf{x}, \phi_t) \geq  [\nabla_yf(\mathsf{x}, \mathsf{y}^{\aell}) -L_2 \|\mathsf{y}^{\aell+1} - \mathsf{y}^{\aell}\| u  ]^\top v_{\aell}\\
&=   \left[\nabla_yf(\mathsf{x}, \mathsf{y}^{\aell}) -L_2 \mu_1\|\nabla_yf(\mathsf{x},\mathsf{y}^{\aell})\| u \right]^\top \frac{\nabla_yf(\mathsf{x},\mathsf{y}^{\aell})}{\|\nabla_yf(\mathsf{x},\mathsf{y}^{\aell})\|}\\
&=   \left[\nabla_yf(\mathsf{x}, \mathsf{y}^{\aell}) - L_2 \mu_1 \|\nabla_yf(\mathsf{x},\mathsf{y}^{\aell})\| u \right]^\top \frac{\nabla_yf(\mathsf{x},\mathsf{y}^{\aell})}{\|\nabla_yf(\mathsf{x},\mathsf{y}^{\aell})\|}\\
&=   \|\nabla_yf(\mathsf{x}, \mathsf{y}^{\aell})\| - (L_2 \mu_1 \|\nabla_yf(\mathsf{x},\mathsf{y}^{\aell})\| u)^\top \frac{\nabla_yf(\mathsf{x},\mathsf{y}^{\aell})}{\|\nabla_yf(\mathsf{x},\mathsf{y}^{\aell})\|}\\
&\geq   \|\nabla_yf(\mathsf{x}, \mathsf{y}^{\aell})\| -  L_2 \mu_1  \|\nabla_yf(\mathsf{x},\mathsf{y}^{\aell})\|\\
&=  (1 -  L_2 \mu_1)  \|\nabla_yf(\mathsf{x}, \mathsf{y}^{\aell})\|\\
&\geq  {\frac{1}{1+\delta} \epsilon' \qquad \qquad \qquad \qquad \qquad \forall t\in [t_\aell, t_{\aell+1})},
\ee
for some unit vectors $u, w$, since $\mu_1 \leq (1- \frac{1}{1+\delta}) \frac{1}{2 L_2}$.

Next, we consider indices $\aell$ for which $\| \nabla_yf(\mathsf{x}, \mathsf{y}^{\aell})\| \leq \epsilon'$ in Line \ref{NoislessSGD_Start} of Algorithm \ref{alg:InnerMaxLoop}.  Since $\mu_3 \leq (1-\frac{1}{1+\delta}) \frac{\sqrt{\epsilon}}{4\sqrt{L_{}}} \leq (1-\frac{1}{1+\delta}) \frac{\sqrt{\epsilon'}}{2\sqrt{L_{}}}$, we have that 
\be 
\frac{\mathrm{d}^2}{\mathrm{d}t^2} f(\mathsf{x}, \phi_t) &\geq v^\top (\nabla_y ^2 f(\mathsf{x}, \mathsf{y}^{\aell}) - \mu_3 L_{} I_d) v\\
  &\geq \sqrt{L_{} \epsilon'} - \mu_3 L_{}
  \geq \sqrt{L_{} (1+\delta)^{-1} \epsilon'}  \qquad \qquad \qquad \forall t\in [t_\aell, t_{\aell+1}),
 \ee
for some unit vector $v$.
\end{proof}

\bigskip
\subsubsection{Properties of $g_\epsilon$ and $h_\epsilon$}

 \begin{proposition}[{Greedy max lower bound}]\label{Prop_lower_bound}
  \be \label{eq:z2}
  h_{\epsilon^{\circ}} (x,y) \leq g_{\epsilon^{\circ}}(x,y), \qquad \forall x,y \in \mathbb{R}^d,  \forall \epsilon^{\circ} > 0.
  \ee
 \end{proposition}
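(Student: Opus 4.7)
The plan is to prove the inequality by direct unpacking of the definitions of $h_{\epsilon^{\circ}}$ and $g_{\epsilon^{\circ}}$, and then invoking Proposition \ref{Lemma_Greedypath} to identify $h_{\epsilon^{\circ}}(x,y)$ with the $f$-value at the endpoint of some $\epsilon^{\circ}$-greedy path starting at $y$. Since $g_{\epsilon^{\circ}}(x,y)$ is by definition the supremum of $f(x,z)$ over all endpoints $z$ of $\epsilon^{\circ}$-greedy paths starting at $y$, the bound $h_{\epsilon^{\circ}}(x,y)\le g_{\epsilon^{\circ}}(x,y)$ will then follow immediately.

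In more detail, fix $x,y \in \mathbb{R}^d$ and $\epsilon^{\circ} > 0$. By the definition \eqref{eq:z1}, $h_{\epsilon^{\circ}}(x,y) = f(x,\mathcal{Y})$ where $\mathcal{Y}$ is the output of Algorithm \ref{alg:InnerMaxLoop} when called with inputs $\mathsf{x}\leftarrow x$, $\mathsf{y}^0\leftarrow y$, and $\epsilon' \leftarrow (1+\delta)\epsilon^{\circ}$. First I would invoke Proposition \ref{Lemma_Greedypath}, which states that the piecewise-linear unit-speed path $\varphi$ formed by concatenating the line segments $[\mathsf{y}^{\aell},\mathsf{y}^{\aell+1}]$ computed by Algorithm \ref{alg:InnerMaxLoop} is a $\frac{1}{1+\delta}\epsilon'$-greedy path for $f(x,\cdot)$. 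Substituting $\epsilon' = (1+\delta)\epsilon^{\circ}$ yields that $\varphi$ is an $\epsilon^{\circ}$-greedy path for $f(x,\cdot)$. Moreover, $\varphi$ starts at $\varphi_0 = \mathsf{y}^0 = y$ and ends at $\varphi_\tau = \mathcal{Y}$, so by the definition of $S_{\epsilon^{\circ}, x, y}$ we have $\mathcal{Y} \in S_{\epsilon^{\circ}, x, y}$.

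Finally, taking the supremum in the definition \eqref{eq_greedy_max} of the greedy max function gives
\begin{equation*}
h_{\epsilon^{\circ}}(x,y) \;=\; f(x,\mathcal{Y}) \;\leq\; \sup\{ f(x,z) : z \in S_{\epsilon^{\circ}, x, y}\} \;=\; g_{\epsilon^{\circ}}(x,y),
\end{equation*}
which is the desired inequality \eqref{eq:z2}. Since $(x,y,\epsilon^{\circ})$ were arbitrary, this completes the proof. There is no real obstacle here: the work is entirely shouldered by Proposition \ref{Lemma_Greedypath}, and the factor $(1+\delta)$ in the definition of $h_{\epsilon^{\circ}}$ is precisely what is needed to absorb the loss in the greediness parameter incurred by the finite step-size analysis of Algorithm \ref{alg:InnerMaxLoop}.
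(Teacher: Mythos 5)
Your proof is correct and follows essentially the same route as the paper's: both identify $h_{\epsilon^{\circ}}(x,y)$ as the $f$-value at the endpoint of the path traced by Algorithm \ref{alg:InnerMaxLoop}, invoke Proposition \ref{Lemma_Greedypath} (with the $(1+\delta)$ factor cancelling) to conclude that this path is an $\epsilon^{\circ}$-greedy path starting at $y$, and then bound by the supremum defining $g_{\epsilon^{\circ}}$. Your write-up is in fact slightly more explicit than the paper's about the endpoint lying in $S_{\epsilon^{\circ},x,y}$, but the argument is the same.
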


 \begin{proof} Recall that  $h_{\epsilon^{\circ}} (x,y):= f(x,\mathcal{Y})$, where $\mathcal{Y} \leftarrow \mathsf{y}_{\mathrm{LocalMax}}$ is the output of Algorithm \ref{alg:InnerMaxLoop} with inputs $\mathsf{x} \leftarrow x$, $\mathsf{y}^0 \leftarrow y$, and $\epsilon' \leftarrow (1+\delta)\epsilon^{\circ}$.
 
  By Proposition \ref{Lemma_Greedypath}, the path traced by Algorithm \ref{alg:InnerMaxLoop} is a $\epsilon^{\circ}$-greedy path with starting point $y$.  Recall that $ g_{\epsilon^{\circ}}(x,y)$ is the supremum of the value of $f$ at the endpoints of all  $\epsilon^{\circ}$-greedy paths which seek to maximize $f(x, \cdot)$ from the starting point $y$.  Therefore, we have that
 \begin{equation*}
  h_{\epsilon^{\circ}} (x,y) \leq g_{\epsilon^{\circ}}(x,y), \qquad \forall x,y \in \mathbb{R}^d,  \forall \epsilon^{\circ} > 0.  \qedhere
  \end{equation*}
 \end{proof}
 
 \begin{proposition}[{Fixed point property}] \label{Prop_fixed_point} 
 Recall that $\epsilon_i = \epsilon_0 (1+\delta)^{2i}$, and consider the points $(x_i, y_i)$ generated at each iteration $i$ of the While loop in Algorithm \ref{alg:LocalMin-max}.  Then
\be \label{eq:z4}
   h_{\epsilon_i} (x_i,y_i) = g_{\epsilon_i}(x_i,y_i) = f(x_i, y_i), \qquad \forall i \in \mathbb{N}.
\ee
 \end{proposition}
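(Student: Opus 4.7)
The plan is to exploit a small but crucial slack in the algorithm's parameter choices. Every time $y_i$ is assigned it is the output of Algorithm~\ref{alg:InnerMaxLoop} called with an input $\epsilon'$ strictly below $\epsilon_i$: reading off Algorithm~\ref{alg:LocalMin-max}, $y_1$ on Line~\ref{InitializationEnd} is produced with $\epsilon' = \epsilon_0(1+\delta)$, and each $y_{i+1}$ assigned in iteration $i$ of the outer while loop (in both the hill-climbing and the SGD branches) is produced with $\epsilon' = \epsilon_i(1+\delta)$. Since the update rule $\epsilon_i = \epsilon_{i-1}(1+\delta)^2$ gives $\epsilon' = \epsilon_i/(1+\delta)$ in every case, Lemma~\ref{lemma:Player2} yields the strict bounds
\begin{equation*}
\|\nabla_y f(x_i, y_i)\| \leq \frac{\epsilon_i}{1+\delta} < \epsilon_i \qquad \text{and} \qquad \lambda_{\max}\!\left(\nabla_y^2 f(x_i, y_i)\right) \leq \sqrt{\frac{L\epsilon_i}{1+\delta}} < \sqrt{L\epsilon_i}.
\end{equation*}

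For $h_{\epsilon_i}(x_i, y_i) = f(x_i, y_i)$, I would trace iteration $\ell = 0$ of Algorithm~\ref{alg:InnerMaxLoop} run on inputs $(x_i, y_i, (1+\delta)\epsilon_i)$. The gradient test on Line~\ref{NoislessSGD_Start} compares $\|\nabla_y f(x_i, y_i)\|$ against $(1+\delta)\epsilon_i$ and fails. The algorithm then enters the Else branch, computes an eigenvalue $\lambda \leq \lambda_{\max}(\nabla_y^2 f(x_i, y_i))$, and compares it against $\sqrt{L(1+\delta)\epsilon_i}$; by the bound above this also fails. Hence $\mathsf{Stopy}$ is set to $\mathsf{True}$ at once, the loop returns $\mathsf{y}_{\mathrm{LocalMax}} = y_i$, and by the definition in \eqref{eq:z1} we conclude $h_{\epsilon_i}(x_i, y_i) = f(x_i, y_i)$.

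For $g_{\epsilon_i}(x_i, y_i) = f(x_i, y_i)$, it suffices to show $S_{\epsilon_i, x_i, y_i} = \{y_i\}$, i.e., that there is no $\epsilon_i$-greedy path $\varphi\colon [0,\tau]\to \mathbb{R}^d$ of positive length starting at $y_i$. Suppose for contradiction that such a $\varphi$ exists. By $C^2$-smoothness of $f$ and the strict inequalities at $y_i$, continuity of $\nabla_y f$ and $\nabla_y^2 f$ yields $\tau'\in(0,\tau]$ with $\|\nabla_y f(x_i,\varphi_t)\| < \epsilon_i$ and $\lambda_{\max}(\nabla_y^2 f(x_i,\varphi_t)) < \sqrt{L\epsilon_i}$ for all $t\in[0,\tau']$. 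At any differentiable $t\in(0,\tau']$ the unit-speed tangent $v(t)=\frac{d}{dt}\varphi_t$ satisfies $\frac{d}{dt}f(x_i,\varphi_t) = \nabla_y f(x_i,\varphi_t)^\top v(t) \leq \|\nabla_y f(x_i,\varphi_t)\| < \epsilon_i$, and the directional second derivative along $\varphi$ is at most $\lambda_{\max}(\nabla_y^2 f(x_i,\varphi_t)) < \sqrt{L\epsilon_i}$; both disjuncts of Definition~\ref{def_greedy_path} therefore fail, a contradiction.

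The main obstacle is keeping the bookkeeping of the parameter $\epsilon'=\epsilon_i/(1+\delta)$ straight across the different code paths in Algorithm~\ref{alg:LocalMin-max} that assign $y_i$; once that strict inequality is in hand, both conclusions follow from essentially the same observation that continuity preserves the strict inequality in a neighborhood of $y_i$. A minor technicality is that a greedy path is only required to be once-differentiable, so the second-derivative disjunct of Definition~\ref{def_greedy_path} may be ill-defined in general; but since the first-derivative disjunct already fails throughout $[0,\tau']$, this does not affect the argument.
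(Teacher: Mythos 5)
Your proof is correct and follows essentially the same route as the paper: both rest on the stopping condition of Algorithm~\ref{alg:InnerMaxLoop} (via Lemma~\ref{lemma:Player2}) to conclude that $y_i$ already satisfies the approximate-local-maximum conditions for the parameter with which $h_{\epsilon_i}$ is computed, so the maximization subroutine returns $y_i$ immediately and the only $\epsilon_i$-greedy path starting at $y_i$ is the trivial one. You in fact supply more detail than the paper, which merely asserts the uniqueness of the greedy path where you justify it by a continuity argument exploiting the strict slack $\epsilon'=\epsilon_i/(1+\delta)<\epsilon_i$; the one caveat is that your bound $\frac{\mathrm{d}^2}{\mathrm{d}t^2}f(x_i,\varphi_t)\le\lambda_{\max}(\nabla_y^2 f(x_i,\varphi_t))$ implicitly reads the second-derivative condition of Definition~\ref{def_greedy_path} as the directional second derivative $\dot\varphi_t^\top\nabla_y^2 f\,\dot\varphi_t$ (dropping the curvature term $\nabla_y f^\top\ddot\varphi_t$), which is the paper's evident intent elsewhere (e.g., Proposition~\ref{Lemma_Greedypath}) but is not forced by the literal wording.
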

 \begin{proof}

By Lemma \ref{lemma:Player2} we have that 
\begin{equation}  \label{eq:z3}
\|\nabla_y f(x_i, y_i)\| \leq  \epsilon_i  \quad \textrm{ and } \quad \lambda_{\mathrm{max}}( \nabla^2_{{y}} f(x_i, y_i)) \leq \sqrt{L_{} \epsilon_i},
\end{equation}
since $y_{\ai}$ is generated by Algorithm \ref{alg:InnerMaxLoop} with inputs $\mathsf{x} \leftarrow x_{\ai}$, $\mathsf{y}^0 \leftarrow y_{\ai-1}$, and $\epsilon' \leq \epsilon_{\ai-1}(1+\delta)$.

 Inequality \eqref{eq:z3} implies that there is only one $\epsilon_i$-greedy path which seeks to maximize $f(x_{\ai}, \cdot)$ with starting point $y_i$, namely, the path consisting of the single point $y_{\ai}$.  Therefore, 
  \begin{equation*}
  h_{\epsilon_i} (x_i,y_i) = g_{\epsilon_i}(x_i,y_i) = f(x_i, y_i), \qquad \forall i \in \mathbb{N},
  \end{equation*}
  where the first equality holds since, by the proof of Prop \ref{Prop_lower_bound}, $h_{\epsilon_i} (x_i,y_i)$ is the value of $f(x_{\ai}, \cdot)$ at the endpoint of an $\epsilon_i$-greedy path, initialized at $y_i$, which seeks to maximize $f(x_{\ai}, \cdot)$.
  \end{proof}

\begin{proposition}[{Stochastic gradient}] \label{lemma_SG}
For any $\epsilon^{\circ}>0$, $ x,y, \hat{x} \in \mathbb{R}^d$, define
$\hat{\Gamma}^{\hat{x}}_{\epsilon^{\circ}} (x,y) :=  [\mathbbm{h}_{\epsilon^{\circ}}^{\hat{x}} (x + \sigma u, y) - c]\frac{1}{\sigma} u,$
where  $u \sim N(0,I_d)$, for some $c\geq 0$ independent of $u$ with $|c| \leq b$.
Then
\be \label{eq:e1}
\mathbb{E}[\hat{\Gamma}^{\hat{x}}_{\epsilon^{\circ}} (x,y)] = \nabla_{{x}} \mathfrak{h}_{\epsilon^\circ, \sigma}^{\hat{x}}(x,y).
\ee
and, for every $t \geq 0$, we have
\be \label{eq:e3}
\mathbb{P}(\|\hat{\Gamma}^{\hat{x}}_{\epsilon^{\circ}} (x,y) - \nabla_{{x}} \mathfrak{h}_{\epsilon^\circ, \sigma}^{\hat{x}}(x,y)\| \geq t) \leq 2\exp\left(-\frac{t^2}{8\left(\frac{2b\sqrt{d}}{\sigma}\right)^2}\right).
\ee
\end{proposition}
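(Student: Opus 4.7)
The plan is to handle \eqref{eq:e1} and \eqref{eq:e3} with two essentially independent arguments: the first is a Gaussian integration-by-parts (Stein-type) identity applied to the convolution-smoothed function $\mathfrak{h}^{\hat{x}}_{\epsilon^\circ,\sigma}$, and the second is standard Gaussian concentration applied to the pointwise bound $\|\hat{\Gamma}^{\hat{x}}_{\epsilon^\circ}(x,y)\| \leq (2b/\sigma)\|u\|$.

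For \eqref{eq:e1}, I would write $\mathfrak{h}^{\hat{x}}_{\epsilon^\circ,\sigma}(x,y) = \int \mathbbm{h}^{\hat{x}}_{\epsilon^\circ}(x+\sigma\zeta,y)\phi(\zeta)\,\mathrm{d}\zeta$, where $\phi$ is the $N(0, I_d)$ density, and perform the change of variables $v = x + \sigma\zeta$ so that the entire $x$-dependence is transferred onto the smooth Gaussian kernel. Differentiation under the integral sign is then justified by dominated convergence, using only the uniform bound $|\mathbbm{h}^{\hat{x}}_{\epsilon^\circ}| \leq b$ together with integrability of $\nabla\phi$; undoing the change of variables yields the identity
\begin{equation*}
\nabla_x \mathfrak{h}^{\hat{x}}_{\epsilon^\circ,\sigma}(x,y) = \mathbb{E}_{u \sim N(0,I_d)}\!\left[\mathbbm{h}^{\hat{x}}_{\epsilon^\circ}(x+\sigma u,y)\tfrac{u}{\sigma}\right],
\end{equation*}
which is precisely Lemma~7 of \cite{zhang2017hitting}, already invoked in the proof of Lemma~\ref{Lemma_SharedLM}. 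Since $c$ is independent of $u$ and $\mathbb{E}[u] = 0$, subtracting $c$ inside the expectation contributes $-c\,\mathbb{E}[u/\sigma] = 0$, which proves \eqref{eq:e1}.

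For \eqref{eq:e3}, boundedness and the triangle inequality give the pointwise estimate $\|\hat{\Gamma}^{\hat{x}}_{\epsilon^\circ}(x,y)\| \leq (|\mathbbm{h}^{\hat{x}}_{\epsilon^\circ}(x+\sigma u,y)|+|c|)\|u\|/\sigma \leq (2b/\sigma)\|u\|$. Jensen's inequality combined with $\mathbb{E}[\|u\|] \leq \sqrt{\mathbb{E}[\|u\|^2]} = \sqrt{d}$ then yields $\|\mathbb{E}[\hat{\Gamma}^{\hat{x}}_{\epsilon^\circ}(x,y)]\| \leq 2b\sqrt{d}/\sigma$, so a second application of the triangle inequality gives
\begin{equation*}
\|\hat{\Gamma}^{\hat{x}}_{\epsilon^\circ}(x,y) - \nabla_x\mathfrak{h}^{\hat{x}}_{\epsilon^\circ,\sigma}(x,y)\| \leq \tfrac{2b}{\sigma}\bigl(\|u\| + \sqrt{d}\bigr).
\end{equation*}
Because $u \mapsto \|u\|$ is $1$-Lipschitz on $\mathbb{R}^d$, Gaussian Lipschitz concentration gives $\mathbb{P}(\|u\| \geq \sqrt{d}+r) \leq \exp(-r^2/2)$ for every $r \geq 0$; setting $r = t\sigma/(2b) - 2\sqrt{d}$ yields \eqref{eq:e3} in the nontrivial regime $t \gtrsim b\sqrt{d}/\sigma$, while for smaller $t$ the right-hand side of \eqref{eq:e3} already exceeds $1$ and the bound holds trivially.

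The one subtlety worth flagging is that $\mathbbm{h}^{\hat{x}}_{\epsilon^\circ}$ inherits the discontinuities of $h_{\epsilon^\circ}$ (cf.~Example~\ref{discontinuity_example}), so one cannot differentiate the integrand directly in the first step. The standard workaround of pushing all $x$-dependence onto the smooth Gaussian kernel via change of variables resolves this cleanly and requires only the uniform boundedness of $\mathbbm{h}^{\hat{x}}_{\epsilon^\circ}$; no regularity of $\mathbbm{h}^{\hat{x}}_{\epsilon^\circ}$ or of $h_{\epsilon^\circ}$ is needed. Once this move is made, both parts reduce to routine textbook computations.
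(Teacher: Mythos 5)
Your proposal is correct and follows essentially the same route as the paper's proof: the unbiasedness claim is handled via the Gaussian-smoothing gradient identity (Lemma~7 of \cite{zhang2017hitting}) together with $\mathbb{E}[u]=0$ to dispose of $c$, and the tail bound comes from the pointwise estimate $\|\hat{\Gamma}^{\hat{x}}_{\epsilon^{\circ}}(x,y)-\nabla_x\mathfrak{h}^{\hat{x}}_{\epsilon^\circ,\sigma}(x,y)\| \leq \frac{2b}{\sigma}(\|u\|+\sqrt{d})$ followed by standard Gaussian concentration of $\|u\|$. Your explicit change-of-variables justification of differentiation under the integral (needed because $\mathbbm{h}^{\hat{x}}_{\epsilon^\circ}$ is discontinuous) is a detail the paper delegates entirely to the cited lemma, but it is not a different argument.
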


\begin{proof}
First, note that, since $|f(x,y)|$ is bounded by $b$, we have 
\be \label{eq:e2}
&|\mathbbm{h}^{\hat{x}}_{\epsilon^{\circ}}(x,y)| = |\min (h_{\epsilon^{\circ}}(x,y), h_{\epsilon^{\circ}}(\hat{x},y))|\\
&= |f(x,\mathcal{Y})|
 \leq 2b \qquad x,y \in \mathbb{R}^d,
\ee  
 where $\mathcal{Y} \leftarrow \mathsf{y}_{\mathrm{LocalMax}}$ is the output of Algorithm \ref{alg:InnerMaxLoop} with inputs $\mathsf{x} \leftarrow x$ (or $\mathsf{x} \leftarrow \hat{x}$), $\mathsf{y}^0 \leftarrow y$, and $\epsilon' \leftarrow (1+\delta)\epsilon^{\circ}$.
Therefore,
\be
\mathbb{E}[\hat{\Gamma}^{\hat{x}}_{\epsilon^{\circ}} (x,y)] &= \mathbb{E}[\mathbbm{h}^{\hat{x}}_{\epsilon^{\circ}} (x + \sigma u, y) - c]\sigma^{-1} u\\
&=\mathbb{E}[\mathbbm{h}^{\hat{x}}_{\epsilon^{\circ}} (x + \sigma u, y) \sigma^{-1} u] - \sigma^{-1}\mathbb{E}[c] \times \mathbb{E}[u]\\
&= \mathbb{E}[\mathbbm{h}^{\hat{x}}_{\epsilon^{\circ}} (x + \sigma u, y) \sigma^{-1} u] - 0\\
&=\nabla_{{x}} \mathfrak{h}_{\epsilon^\circ, \sigma}^{\hat{x}}(x,y).
\ee
where the last equality follows from Lemma 7 in \cite{zhang2017hitting}, since $\mathbbm{h}_{\epsilon^{\circ}}$ is uniformly bounded by Inequality \eqref{eq:e2}.

Next, we prove Inequality \eqref{eq:e3}:
\be \label{eq:e4}
\|\hat{\Gamma}^{\hat{x}}_{\epsilon^{\circ}} (x,y)& - \nabla_{{x}} \mathfrak{h}_{\epsilon^\circ, \sigma}^{\hat{x}}(x,y)\|\\
&\leq  \| [\mathbbm{h}_{\epsilon^{\circ}}^{\hat{x}} (x + \sigma u, y) - c]\sigma^{-1} u \| + \|\nabla_{{x}} \mathfrak{h}_{\epsilon^\circ, \sigma}^{\hat{x}}(x,y)\| \\
&\leq  \| [\mathbbm{h}_{\epsilon^{\circ}}^{\hat{x}} (x + \sigma u, y) - c]\sigma^{-1} u \| + \|\mathbb{E}[\mathbbm{h}_{\epsilon^{\circ}}^{\hat{x}} (x + \sigma u, y) \sigma^{-1} u]\| \\
&\leq  \| 2b \sigma^{-1} u \| + 2b\sigma^{-1}\sqrt{d},
\ee
since $|\mathbbm{h}_{\epsilon^{\circ}}^{\hat{x}}| \leq b$ and $|c|\leq b$.  Therefore, since $u \sim N(0,I_d)$ is a gaussian random vector, Inequality \eqref{eq:e4} implies that
\begin{align} \label{eq:e5}
\mathbb{P}(\|\hat{\Gamma}^{\hat{x}}_{\epsilon^{\circ}} (x,y) - \nabla_{{x}} \mathfrak{h}_{\epsilon^\circ, \sigma}^{\hat{x}}(x,y)\| \geq t)
&\leq \mathbb{P}\left(\| 2b \sigma^{-1} u \| + 2b\sigma^{-1}\sqrt{d} \geq t\right)\\
 &\leq 2\exp\left(-\frac{t^2}{8\left(\frac{2b\sqrt{d}}{\sigma}\right)^2}\right)  \qquad \forall t \geq 0,
\end{align}
where the last inequality holds by a standard concentration inequality for Gaussians \cite{hanson1971bound}.
\end{proof}

\begin{proposition}[{Low-magnitude stochastic gradient}] \label{prop_SGVariance}

With probability at least $1-\omega$ the stochastic gradient $\mathcal{H}_{\epsilon_{\ai^\star}}$ at the outputs $(x^\star, y^\star)$ of Algorithm \ref{alg:LocalMin-max} satisfies
\be
\mathbb{E}\left[\left\|\mathcal{H}^{x^\star}_{\epsilon_{\ai^\star}}(x^\star, y^\star)\right\| \, \bigg | \, (x^\star, y^\star) \right] \leq 10 b \gamma_1 \sqrt{d}\sigma^{-1} \log\frac{2}{\gamma_1}.
\ee
\end{proposition}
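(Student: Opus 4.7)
The plan is to exploit the termination condition of Algorithm \ref{alg:LocalMin-max}: since the algorithm returns $(x^\star, y^\star) = (x_{i^\star}, y_{i^\star})$, the hill-climbing for-loop at iteration $i^\star$ drew $\mathcal{I}_3 = 30b/\gamma_1$ i.i.d.\ samples $\zeta_{i^\star j} \sim N(0,I_d)$ and every single one of them failed the decrease test $f(x_{i^\star} + \sigma\zeta_{i^\star j}, \mathcal{Y}) \leq f(x_{i^\star}, y_{i^\star}) - \gamma_1$. Using Proposition \ref{Prop_fixed_point} to rewrite $f(x_{i^\star},y_{i^\star}) = h_{\epsilon_{i^\star}}(x^\star, y^\star)$, and the definition of $h_{\epsilon_{i^\star}}$ to rewrite $f(x_{i^\star} + \sigma\zeta_{i^\star j}, \mathcal{Y}) = h_{\epsilon_{i^\star}}(x^\star + \sigma\zeta_{i^\star j}, y^\star)$, this says that every one of the $\mathcal{I}_3$ samples avoids the set
\[
A := \{\zeta \in \mathbb{R}^d : h_{\epsilon_{i^\star}}(x^\star + \sigma\zeta, y^\star) \leq h_{\epsilon_{i^\star}}(x^\star, y^\star) - \gamma_1\}.
\]

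The first step is a concentration argument for $p := \Pr_{\zeta \sim N(0,I_d)}[\zeta \in A]$. For any candidate threshold $p_0$, the probability that $\mathcal{I}_3$ i.i.d.\ samples all miss $A$ when $p > p_0$ is at most $(1-p_0)^{\mathcal{I}_3} \leq e^{-p_0 \mathcal{I}_3}$. Taking a union bound over the at most $i_{\max} = O(b/\gamma_1)$ iterations allowed by Lemma \ref{lemma:RunningTime}, the event that the algorithm terminates at some iteration whose ambient probability of decrease exceeds $p_0$ has probability at most $i_{\max}\, e^{-p_0 \mathcal{I}_3}$. Choosing $p_0 := \log(i_{\max}/\omega)/\mathcal{I}_3 = O(\gamma_1\log(b/(\gamma_1\omega))/b)$ makes this at most $\omega$, so with probability at least $1-\omega$ the terminating iterate satisfies $p \leq p_0$.

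The second step is to translate $p \leq p_0$ into the desired bound on $\mathbb{E}\|\mathcal{H}^{x^\star}_{\epsilon_{i^\star}}(x^\star, y^\star)\|$. Because $\hat{x} = x = x^\star$ in the definition of $\mathcal{H}$, the truncation collapses to
\[
\|\mathcal{H}^{x^\star}_{\epsilon_{i^\star}}(x^\star, y^\star)\| = \frac{\|\zeta\|}{\sigma}\bigl(h_{\epsilon_{i^\star}}(x^\star, y^\star) - h_{\epsilon_{i^\star}}(x^\star + \sigma\zeta, y^\star)\bigr)_+.
\]
I would split the expectation on $\zeta \in A$ versus $\zeta \notin A$. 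On $A^c$ the positive part is bounded by $\gamma_1$, giving contribution at most $\gamma_1 \sqrt{d}/\sigma$. On $A$ the positive part is bounded by $2b$, and I would further split $\|\zeta\|$ at the level $M := \sqrt{d}\log(2/\gamma_1)$, using $\mathbb{E}[\|\zeta\|\mathbb{1}_{\zeta\in A}] \leq M p_0 + \mathbb{E}[\|\zeta\|\mathbb{1}_{\|\zeta\|>M}]$ together with a standard Gaussian tail bound that makes the tail integral negligible. Substituting $p_0 = O(\gamma_1\log(b/(\gamma_1\omega))/b)$ and using $b\geq 1$, $\omega=10^{-3}$, the dominant term is $O(\gamma_1 \sqrt{d}\log(2/\gamma_1)/\sigma)$, which is comfortably absorbed by the target $10 b \gamma_1 \sqrt{d}\sigma^{-1}\log(2/\gamma_1)$.

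The main obstacle is the concentration step: one must arrange the union bound so that it applies \emph{conditionally} on the algorithm terminating at an unknown iteration $i^\star$, and the resulting $\log(i_{\max}/\omega)$ factor must be absorbed by the $b$ in the target bound (which is why the target has a factor of $b$ where the $\gamma_1\sqrt{d}/\sigma$ contribution from $A^c$ alone does not). The rest of the argument is a routine split-and-tail bound on a Gaussian expectation.
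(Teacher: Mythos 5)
Your proposal is correct and takes essentially the same route as the paper's proof: both extract from the termination of the hill-climbing loop that $\mathcal{I}_3$ independent Gaussian probes all failed to produce a $\gamma_1$-decrease, union-bound over the at most $i_{\max}$ iterations via $(1-p)^{\mathcal{I}_3}$, and then bound $\mathbb{E}\|\mathcal{H}\|$ by splitting the truncated decrement into a ``small everywhere'' part and a ``rare but bounded by $2b$'' part with a Gaussian tail cutoff at $\|\zeta\| \approx \sqrt{d}\log(2/\gamma_1)$. The only cosmetic difference is direction: the paper argues contrapositively (large $\mathbb{E}\|\mathcal{H}\|$ forces $\Pr[\text{decrease} > 2\gamma_1] \geq 2\gamma_1$, contradicting the failed probes), whereas you argue forward from $\Pr[A]\leq p_0$; both hinge on the same implicit requirement $\gamma_1\mathcal{I}_3 \gtrsim \log(i_{\max}/\omega)$ to absorb the log factor into the $b$ in the target.
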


\begin{proof}
Suppose that for any $\ai \in [\ai_{\max}]$ we have that
\be \label{eq:f11}
\mathbb{E}\left[\|\mathcal{H}_{\epsilon_{\ai}}^{x_{\ai}}(x_{\ai}, y_{\ai})\| \big | (x_{\ai}, y_{\ai}) \right] > 10 b \gamma_1 \sqrt{d}\sigma^{-1} \log\frac{2}{\gamma_1}.
\ee

\noindent Then, 
\be
  {}& 10 b \gamma_1 \frac{\sqrt{d}}{\sigma} \log\frac{2}{\gamma_1} < \mathbb{E}\left[\|\mathcal{H}_{\epsilon_{\ai}}^{x_{\ai}}(x_{\ai}, y_{\ai})\| \big | (x_{\ai}, y_{\ai}) \right] \\
 &= \mathbb{E}[  \| \zeta \sigma^{-1}(\mathbbm{h}_{\epsilon_{\ai}}^{x_{\ai}}(x_{\ai}  + \sigma \zeta,y_{\ai}) - \mathbbm{h}_{\epsilon_{\ai}}^{x_{\ai}}(x_{\ai} ,y_{\ai}))  \| \, \, \big | \,\, (x_{\ai}, y_{\ai}) ] \\
 &= \mathbb{E}\left[ \| \zeta \sigma^{-1}(\min(h_{\epsilon_{\ai}}(x_{\ai}  + \sigma \zeta,y_{\ai}), h_{\epsilon_{\ai}} (x_{\ai}, y_{\ai})) - \min(h_{\epsilon_{\ai}}(x_{\ai} ,y_{\ai}), h_{\epsilon_{\ai}}(x_{\ai} ,y_{\ai})))  \|  \, \, \big | \, \,  (x_{\ai}, y_{\ai}) \right]\\
  &= \mathbb{E}\left[ \| \zeta \sigma^{-1} (\min(h_{\epsilon_{\ai}}(x_{\ai}  + \sigma \zeta,y_{\ai}), h_{\epsilon_{\ai}} (x_{\ai}, y_{\ai})) - h_{\epsilon_{\ai}}(x_{\ai} ,y_{\ai}))  \| \,  \big | \,  (x_{\ai}, y_{\ai}) \right]\\
    &= \mathbb{E}\left[ \left\| \zeta \sigma^{-1}  \right\| \times |\min(h_{\epsilon_{\ai}}(x_{\ai}  + \sigma \zeta,y_{\ai}), h_{\epsilon_{\ai}} (x_{\ai}, y_{\ai})) - h_{\epsilon_{\ai}}(x_{\ai} ,y_{\ai})) | \, \, \,  \big | \, \, \, (x_{\ai}, y_{\ai}) \right]\\
     &\leq \mathbb{E}\left[  \sqrt{d} \sigma^{-1} \log\frac{2}{\gamma_1}\times | \min(h_{\epsilon_{\ai}}(x_{\ai}  + \sigma \zeta,y_{\ai}), h_{\epsilon_{\ai}} (x_{\ai}, y_{\ai})) - h_{\epsilon_{\ai}}(x_{\ai} ,y_{\ai})| \, \, \,  \big | \,  \, \, (x_{\ai}, y_{\ai}) \right]\\
     &\qquad \qquad \qquad +\mathbb{P}\left(\|\zeta\| \geq  \sqrt{d} \log\frac{2}{\gamma_1} \right) \times \mathbb{E}\left[ \| \zeta \sigma^{-1} \| \times 2b  \, \, \,  \big | \, \,  \,  \|\zeta\| \geq  \sqrt{d} \log\frac{2}{\gamma_1} \right]\\
     &\leq \mathbb{E}\left[  \sqrt{d} \sigma^{-1} \log\frac{2}{\gamma_1}\times |\min(h_{\epsilon_{\ai}}(x_{\ai}  + \sigma \zeta,y_{\ai}), h_{\epsilon_{\ai}} (x_{\ai}, y_{\ai})) - h_{\epsilon_{\ai}}(x_{\ai} ,y_{\ai})| \, \, \, \big | \, \, \, (x_{\ai}, y_{\ai}) \right]\\
        &\qquad \qquad \qquad + 2 \frac{\gamma_1}{2} \times \mathbb{E}\left[ \| \zeta \sigma^{-1} \| b  \, \, \, \big | \, \, \,  \|\zeta\| \geq  \sqrt{d} \log\frac{2}{\gamma_1} \right]\\
     &\leq \sqrt{d} \sigma^{-1} \log\frac{2}{\gamma_1} \mathbb{E}[ | (\min(h_{\epsilon_{\ai}}(x_{\ai}  + \sigma \zeta,y_{\ai}), h_{\epsilon_{\ai}} (x_{\ai}, y_{\ai})) - h_{\epsilon_{\ai}}(x_{\ai} ,y_{\ai})) |  \, \, \, \big | \, \, \, (x_{\ai}, y_{\ai}) ] + 2b \gamma_1 \sigma^{-1} \sqrt{d} \log\frac{2}{\gamma_1}.
 \ee
 where the last two inequalities hold by a standard concentration bound for Gaussians \cite{hanson1971bound}. 

\noindent Thus,
 \be
\mathbb{E}\left[ | (\min(h_{\epsilon_{\ai}}(x_{\ai}  + \sigma \zeta,y_{\ai}), h_{\epsilon_{\ai}} (x_{\ai}, y_{\ai})) - h_{\epsilon_{\ai}}(x_{\ai} ,y_{\ai})) |\right] >  8 b \gamma_1.
 \ee

\noindent Since $|h_{\epsilon_{\ai}}|$ is uniformly bounded by $b$, this implies that
 \be
  \mathbb{P}\left[ | (\min(h_{\epsilon_{\ai}}(x_{\ai}  + \sigma \zeta,y_{\ai}), h_{\epsilon_{\ai}} (x_{\ai}, y_{\ai})) - h_{\epsilon_{\ai}}(x_{\ai} ,y_{\ai})) | > 2b\gamma_1\right] \geq 2 \gamma_1,
  \ee
and hence since $b \geq 1$, that
 \be  \label{eq:f10}
 \mathbb{P}[ | (\min(h_{\epsilon_{\ai}}(x_{\ai}  &+ \sigma \zeta,y_{\ai}), h_{\epsilon_{\ai}} (x_{\ai}, y_{\ai})) - h_{\epsilon_{\ai}}(x_{\ai} ,y_{\ai})) | > 2\gamma_1] \geq 2 \gamma_1,
\ee
 for any $\ai \in [\ai_{\mathrm{max}} ]$ for which Inequality \eqref{eq:f11} holds.

For any $\ai \in [\ai_{\mathrm{max}} ]$, let  $E_{\ai}$ be the ``bad" event that we have both 
\be  \label{eq:f9}
\mathbb{E}[\|\mathcal{H}_{\epsilon_{\ai}}^{x_{\ai}} (x_{\ai}, y_{\ai})\| \, \, \, \big | \, \, \, (x_i, y_i)] > 10 b \gamma_1 \sqrt{d} \sigma^{-1} \log\frac{2b}{\gamma_1},
\ee
and
\be \label{eq:f6}
\left| \min(h_{\epsilon_{\ai}}(x_{\ai} + \sigma \zeta_{\ai \aj}, y_{\ai}), h_{\epsilon_{\ai}}(x_{\ai}, y_{\ai}) ) - h_{\epsilon_{\ai}}(x_{\ai}, y_{\ai}) \right| \leq 2\gamma_1,
\ee
for every $j \in [\mathcal{I}_3]$.
Then Inequality \eqref{eq:f10} implies that 
\be \label{eq:f7}
\mathbb{P}(E_i) \leq (1 - \gamma_1)^{\mathcal{I}_3} \leq \frac{\omega}{i_{\max}} \qquad \forall i\in [i_{\max}].
\ee
since $\mathcal{I}_3 \geq \frac{1}{\gamma_1} \log(\frac{i_{\max}}{\omega})$, where $i_{\max} = O(\frac{b}{\gamma_1})$ is an upper bound given in the proof of Lemma \ref{lemma:RunningTime} on the number of iterations of the While loop in Algorithm \ref{alg:LocalMin-max}.

Therefore,
\be \label{eq:f8}
{\textstyle \mathbb{P}(\bigcup_{i=1}^{i_{\max}} E_i) \leq \sum_{i=1}^{i_{\max}}  \mathbb{P}(E_i) \stackrel{\textrm{Eq.} \ref{eq:f7}}{\leq} i_{\max} \times \frac{\omega}{i_{\max}}  = \omega.}
\ee

\noindent
Now, whenever the Algorithm outputs a point $(x_{\ai^\star}, y_{\ai^\star}) = (x^\star, y^\star)$, it first checks that the inequality in Line \ref{HillClimbing} does not hold for the point $(x^\star + \sigma \zeta_{\ai^\star \aj},y^\star)$ for the random vector $\zeta_{\ai^\star \aj} \sim N(0,I_d)$, and it repeats this check $\mathcal{I}_3$ times before stopping.  In other words, we have that
\be \label{eq:f1}
f(x^\star + \sigma \zeta_{\ai^\star \aj}, \mathcal{Y}_{\aj}) > f(x^\star, y^\star) - \gamma_1 \qquad \forall j \in [\mathcal{I}_3],
\ee
for a sequence of independent random vectors $\zeta_{\ai^\star 1}, \ldots, \zeta_{\ai^\star \mathcal{I}_3} \sim N(0,I_d)$.  Here we denote by $\mathcal{Y}_{\aj}$ the output of Algorithm \ref{alg:InnerMaxLoop} for inputs  $\mathsf{x} \leftarrow x^\star + \sigma \zeta_{\ai^\star \aj}$,  $\mathsf{y}^0 \leftarrow y^\star$, and $\epsilon' \leftarrow \epsilon_{\ai^\star}(1+\delta)$. 

Inequality \eqref{eq:f1} and Equation \eqref{eq:z1} together imply that
\be \label{eq:f3}
h_{\epsilon_{\ai^\star}}(x^\star + \sigma \zeta_{\ai^\star \aj}, y^\star) > f(x^\star, y^\star) - \gamma_1  \qquad \forall \aj \in [\mathcal{I}_3].
\ee
Therefore, by Proposition \ref{Prop_fixed_point},  Inequality \eqref{eq:f3} implies that
\be
h_{\epsilon_{\ai^\star}}(x^\star + \sigma \zeta_{\ai^\star \aj}, y^\star) > h_{\epsilon_{\ai^\star}}(x^\star, y^\star) - \gamma_1  \qquad \forall \aj \in [\mathcal{I}_3],
\ee
and hence that
\be  \label{eq:f4}
h_{\epsilon_{\ai^\star}}(x^\star + \sigma \zeta_{\ai^\star \aj}, y^\star) - h_{\epsilon_{\ai^\star}}(x^\star, y^\star) > - \gamma_1  \qquad \forall \aj \in [\mathcal{I}_3].
\ee

\noindent Inequality \eqref{eq:f4} then implies that, for every $\aj \in [\mathcal{I}_3]$,
\begin{equation} \label{eq:f5}
\left| \min(h_{\epsilon_{\ai^\star}}(x^\star + \sigma \zeta_{\ai^\star \aj}, y^\star), h_{\epsilon_{\ai^\star}}(x^\star, y^\star) ) - h_{\epsilon_{\ai^\star}}(x^\star, y^\star) \right| <  \gamma_1.
\end{equation}

\noindent
Therefore, Inequalities \eqref{eq:f8} and \eqref{eq:f5}, together with our definition of the ``bad" events $E_i$ (definitions \eqref{eq:f6} and \eqref{eq:f9}) together imply that
\be
\mathbb{E}\left [\|\mathcal{H}^{x^\star}_{\epsilon_{\ai^\star}}(x^\star, y^\star)\| \big | (x^\star, y^\star) \right] \leq 10 b \gamma_1 \frac{\sqrt{d}}{\sigma} \log\frac{2b}{\gamma_1},
\ee
with probability at least $1-\omega$.
\end{proof}

\bigskip
\subsubsection{Showing $x^\star$ is an approximate local minimum of $h_\eps(\cdot, y^\star)$}

\begin{proposition}[{First-order stationary condition}] \label{Prop_first_order_x}
\be \label{eq_summary6} 
\mathbb{P}\left(\left\| \nabla_{{x}} \mathfrak{h}_{\epsilon_{\ai^\star}, \sigma}^{x^\star}(x^\star,y^\star)\right\| > 10 b \gamma_1 \frac{\sqrt{d}}{\sigma} \log\frac{2b}{\gamma_1} \right) \leq \omega.
\ee
\end{proposition}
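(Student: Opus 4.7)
The plan is to reduce the claim to Proposition \ref{prop_SGVariance} via an unbiasedness identity and Jensen's inequality. Specifically, I would first identify the stochastic gradient $\mathcal{H}^{x^\star}_{\epsilon_{\ai^\star}}(x^\star, y^\star)$ appearing in \eqref{eq_SG_H} with a special case of the estimator $\hat{\Gamma}^{\hat{x}}_{\epsilon^{\circ}}(x,y)$ from Proposition \ref{lemma_SG}. Taking $\hat{x} = x$, $\epsilon^{\circ} = \epsilon_{\ai^\star}$, and the constant $c = \mathbbm{h}_{\epsilon_{\ai^\star}}^{x^\star}(x^\star, y^\star)$, note that $c$ is independent of the Gaussian randomness $u \sim N(0, I_d)$ once we condition on $(x^\star, y^\star)$, and $|c| \leq b$ because $\mathbbm{h}$ is a truncation of $h_\epsilon = f(x, \mathcal{Y})$ and $|f| \leq b$. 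Hence Proposition \ref{lemma_SG} yields the identity
\begin{equation*}
\mathbb{E}\!\left[\mathcal{H}^{x^\star}_{\epsilon_{\ai^\star}}(x^\star, y^\star) \,\bigg|\, (x^\star, y^\star)\right] = \nabla_{x} \mathfrak{h}_{\epsilon_{\ai^\star}, \sigma}^{x^\star}(x^\star, y^\star).
\end{equation*}

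Next, I would apply Jensen's inequality to the conditional expectation: since $\|\cdot\|$ is convex,
\begin{equation*}
\left\|\nabla_{x} \mathfrak{h}_{\epsilon_{\ai^\star}, \sigma}^{x^\star}(x^\star, y^\star)\right\| = \left\|\mathbb{E}\!\left[\mathcal{H}^{x^\star}_{\epsilon_{\ai^\star}}(x^\star, y^\star) \,\bigg|\, (x^\star, y^\star)\right]\right\| \leq \mathbb{E}\!\left[\left\|\mathcal{H}^{x^\star}_{\epsilon_{\ai^\star}}(x^\star, y^\star)\right\| \,\bigg|\, (x^\star, y^\star)\right].
\end{equation*}
Proposition \ref{prop_SGVariance} bounds the right-hand side by $10 b \gamma_1 \sqrt{d}\sigma^{-1} \log(2/\gamma_1)$ with probability at least $1-\omega$ over the randomness used to produce $(x^\star, y^\star)$. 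Since $b \geq 1$ we have $\log(2/\gamma_1) \leq \log(2b/\gamma_1)$, giving exactly the bound required by \eqref{eq_summary6}.

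There is no real obstacle here: Proposition \ref{prop_SGVariance} has already done the heavy lifting of exploiting the termination condition of Algorithm \ref{alg:LocalMin-max} (namely that the hill-climbing checks in Line \ref{HillClimbing} all fail at the output $(x^\star, y^\star)$) to control the expected stochastic-gradient magnitude. The only care needed in this proposition is to verify that Proposition \ref{lemma_SG} applies with the natural choice of $c$, which amounts to checking $u$-independence and the boundedness $|c| \leq b$; both are immediate.
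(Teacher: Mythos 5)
Your proposal is correct and follows essentially the same route as the paper's proof: identify $\mathcal{H}^{x^\star}_{\epsilon_{\ai^\star}}$ as the unbiased estimator of $\nabla_x \mathfrak{h}^{x^\star}_{\epsilon_{\ai^\star},\sigma}$ via Proposition \ref{lemma_SG}, apply Jensen's inequality to pass the norm inside the conditional expectation, and invoke Proposition \ref{prop_SGVariance} for the high-probability bound. Your explicit check that $\log(2/\gamma_1) \leq \log(2b/\gamma_1)$ for $b \geq 1$ tidies up a constant the paper glosses over.
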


\begin{proof}
By Proposition \ref{prop_SGVariance} we have that, with probability at least $1-\omega$,
\be 
10 b \gamma_1 \frac{\sqrt{d}}{\sigma} \log\frac{2b}{\gamma_1}
&\stackrel{\textrm{Prop.} \ref{prop_SGVariance}}{\geq} \mathbb{E}_{\zeta \sim N(0,I_d)}\left[ \left\| \frac{\zeta}{\sigma}(\mathbbm{h}_{\epsilon_{\ai^\star}}^{x^\star}(x^\star  + \sigma \zeta,y^\star)- \mathbbm{h}_{\epsilon_{\ai^\star}}^{x^\star}(x^\star ,y^\star))\right\| \,\,\, \bigg | \,\,\, (x^\star, y^\star) \right]\\
&\geq  \left\| \mathbb{E}_{\zeta \sim N(0,I_d)}\left[ \frac{\zeta}{\sigma}(\mathbbm{h}_{\epsilon_{\ai^\star}}^{x^\star}(x^\star  + \sigma \zeta,y^\star) - \mathbbm{h}_{\epsilon_{\ai^\star}}^{x^\star}(x^\star ,y^\star)) \,\,\, \bigg | \,\,\,  (x^\star, y^\star) \right] \right\|\\
&\stackrel{\textrm{Prop.} \ref{lemma_SG}}{=} \| \nabla_{{x}} \mathfrak{h}_{\epsilon_{\ai^\star}, \sigma}^{x^\star}(x^\star,y^\star)\|,
\ee
and hence that, with probability at least $1-\omega$,
\begin{equation} \label{eq:g12}
\| \nabla_{{x}} \mathfrak{h}_{\epsilon_{\ai^\star}, \sigma}^{x^\star}(x^\star,y^\star)\| \leq 10 b \gamma_1 \frac{\sqrt{d}}{\sigma} \log\frac{2b}{\gamma_1}.
\end{equation}

\noindent
Therefore, Inequality \eqref{eq:g12} Implies that 
\begin{equation*}
\mathbb{P}\left(\| \nabla_{{x}} \mathfrak{h}_{\epsilon_{\ai^\star}, \sigma}^{x^\star}(x^\star,y^\star)\| > 10 b \gamma_1 \frac{\sqrt{d}}{\sigma} \log\frac{2b}{\gamma_1}\right) \leq \omega. \qedhere
\end{equation*}
\end{proof}

\begin{proposition}[{Second-order stationary condition}] \label{Prop_NoisySGD}
With probability at least $1-2\omega$, we have that
\be \label{eq:g1}
 \lambda_{\mathrm{min}}\left(\nabla^2_{{x}} \mathfrak{h}_{\epsilon_{\ai^\star}, \sigma}^{x^\star}(x^\star, y^\star) \right) \geq - \frac{1}{5}\sqrt{\epsilon_{\ai^\star}}.
 \ee
\end{proposition}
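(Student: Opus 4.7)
The plan is to argue by contradiction. Suppose the event
$\lambda_{\mathrm{min}}(\nabla^2_{x} \mathfrak{h}^{x^\star}_{\epsilon_{\ai^\star}, \sigma}(x^\star, y^\star)) < -\tfrac{1}{5}\sqrt{\epsilon_{\ai^\star}}$ occurs with probability exceeding $2\omega$. I will show that on this event the outer algorithm would, with probability $>\omega$, trigger the success condition of Line \ref{SuccessStart} at iteration $\ai^\star$, setting $\mathsf{NoProgress} \leftarrow \mathsf{False}$ and hence not halting at $(x^\star, y^\star)$; this contradicts the definition of $\ai^\star$.

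The first step is to verify that a single run of the inner noisy-SGD loop (Lines \ref{k_for_Start}--\ref{k_for_End}), initialized at $X_0 = x^\star$ with $y = y^\star$, is an instance of the kind of perturbed SGD analyzed in \cite{NonconvexOptimizationForML}: by Proposition \ref{lemma_SG} the vector $\Gamma_{\ak}$ is an unbiased stochastic gradient for $\mathfrak{h}^{x^\star}_{\epsilon_{\ai^\star}, \sigma}(\cdot, y^\star)$ with sub-Gaussian tails of proxy variance $O(b^2 d /\sigma^2)$; by Remark \ref{Rem_Lipschitz_convolution} the function $\mathfrak{h}^{x^\star}_{\epsilon_{\ai^\star}, \sigma}(\cdot, y^\star)$ is $O(b)$-bounded with gradient and Hessian Lipschitz constants $O(b/\sigma^5)$ and $O(b/\sigma^7)$; and the assumed negative curvature at $X_0$ provides a ``non-stationarity budget'' of at least $\tfrac{1}{5}\sqrt{\epsilon_{\ai^\star}}$ in the smallest Hessian eigenvalue. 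With the choices of $\eta, \alpha, \mathcal{I}_2$ in Section \ref{section:constants}, the saddle-escape guarantee of \cite{NonconvexOptimizationForML} then yields that with probability at least (say) $\tfrac{1}{2}$ the iterate $X_{\mathcal{I}_2}$ satisfies
\begin{equation*}
\mathfrak{h}^{x^\star}_{\epsilon_{\ai^\star}, \sigma}(X_{\mathcal{I}_2}, y^\star) \;\leq\; \mathfrak{h}^{x^\star}_{\epsilon_{\ai^\star}, \sigma}(x^\star, y^\star) - \Delta_0
\end{equation*}
for a quantity $\Delta_0$ that, by the choice of constants, is substantially larger than $\gamma_1$ (say $\Delta_0 \geq 4\gamma_1$).

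The next step is to translate this decrease of the smoothed, truncated function $\mathfrak{h}^{x^\star}$ into the pointwise decrease of $f(X_{\mathcal{I}_2}, \mathcal{Y})$ that the algorithm checks in Line \ref{SuccessStart}. I will use the fixed-point property (Proposition \ref{Prop_fixed_point}) to write $f(x^\star, y^\star) = h_{\epsilon_{\ai^\star}}(x^\star, y^\star)$, and the low-stochastic-gradient bound (Proposition \ref{prop_SGVariance}), already shown to hold with probability $\geq 1-\omega$, to conclude that $\mathfrak{h}^{x^\star}_{\epsilon_{\ai^\star}, \sigma}(x^\star, y^\star)$ differs from $f(x^\star, y^\star)$ by at most $\Delta_0/4$ (essentially because the expected truncation error $h_{\epsilon_{\ai^\star}}(x^\star, y^\star) - \min(h_{\epsilon_{\ai^\star}}(x^\star + \sigma\zeta, y^\star), h_{\epsilon_{\ai^\star}}(x^\star, y^\star))$ is precisely the quantity controlled in the proof of Proposition \ref{prop_SGVariance}). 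Unrolling the definition of $\mathfrak{h}^{x^\star}$ as a Gaussian average and applying a Markov-style argument then produces, with constant probability over the final Gaussian draw $u$, a point $X_{\mathcal{I}_2} + \sigma u$ at which $h_{\epsilon_{\ai^\star}}(X_{\mathcal{I}_2} + \sigma u, y^\star) \leq f(x^\star, y^\star) - \gamma_1$; since $h_{\epsilon_{\ai^\star}}(X_{\mathcal{I}_2} + \sigma u, y^\star) = f(X_{\mathcal{I}_2} + \sigma u, \mathcal{Y})$ by the definition of $h_{\epsilon_{\ai^\star}}$ via Algorithm \ref{alg:InnerMaxLoop}, this is exactly the trigger condition of Line \ref{SuccessStart}.

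I expect the main obstacle to be this translation step, because $h_{\epsilon_{\ai^\star}}$ is potentially discontinuous, so a decrease in the Gaussian-convolved truncation $\mathfrak{h}^{x^\star}$ does not automatically give a pointwise decrease of $h_{\epsilon_{\ai^\star}}$. The care needed is exactly in using Proposition \ref{prop_SGVariance} to rule out the possibility that $\mathfrak{h}^{x^\star}$'s decrease is ``spurious'' (i.e., comes from large downward fluctuations of $h_{\epsilon_{\ai^\star}}$ in the $\sigma$-neighborhood of $x^\star$ rather than from an honest trend away from $x^\star$). Once a single SGD run succeeds with constant probability, I will boost success to $1-\omega$ by using that the outer ``For $\aj = 1$ to $\mathcal{I}_4$'' loop (Line \ref{RebootStart}) makes $\mathcal{I}_4 = 6\log(2b/(\gamma_1\omega))$ independent attempts; combining this $\omega$ failure probability with the $\omega$ failure probability from invoking Proposition \ref{prop_SGVariance} gives the claimed $2\omega$ bound, completing the contradiction.
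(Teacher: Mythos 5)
Your overall skeleton matches the paper's: treat the inner loop as perturbed SGD on $\mathfrak{h}_{\epsilon_{\ai^\star},\sigma}^{x^\star}(\cdot,y^\star)$ using Proposition \ref{lemma_SG} for the stochastic-gradient properties, invoke the saddle-escape guarantee of \cite{NonconvexOptimizationForML} to get a decrease of $\mathfrak{h}$ by at least $\gamma_1$ with constant probability per attempt, boost over the $\mathcal{I}_4$ restarts, and derive a contradiction with the fact that the algorithm halted at $\ai^\star$. (The paper organizes this as a union bound over ``bad events'' $\mathcal{E}_{\ai}$ across all iterations rather than a direct contradiction at $\ai^\star$, but that is cosmetic.)

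The genuine gap is in your translation step, which you correctly identify as the crux but then resolve incorrectly. Your Markov-style argument over the final Gaussian draw produces a point of the form $X_{\mathcal{I}_2}+\sigma u$ at which $h_{\epsilon_{\ai^\star}}$ has dropped by $\gamma_1$, and you assert that ``this is exactly the trigger condition of Line \ref{SuccessStart}.'' It is not: Line \ref{SuccessStart} tests $f(X_{\ak},\mathcal{Y})\leq f(X_0,y_{\ai})-\gamma_1$ at the unperturbed iterate $X_{\ak}=X_{\mathcal{I}_2}$ itself, where $\mathcal{Y}$ is the output of Algorithm \ref{alg:InnerMaxLoop} run at $\mathsf{x}\leftarrow X_{\ak}$. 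The perturbed points $X_{\ak-1}+\sigma u$ are queried only to form $\Gamma_{\ak}$ inside the loop and are never tested for success, and the hill-climbing loop of Lines \ref{RebootStart_hill}--\ref{RebootEnd_hill} perturbs around $x_{\ai}$, not around $X_{\mathcal{I}_2}$. Because $h_{\epsilon_{\ai^\star}}$ is discontinuous, a decrease of the Gaussian-averaged truncation $\mathfrak{h}^{x^\star}$ at $X_{\mathcal{I}_2}$ (equivalently, smallness of $h$ at a positive-measure set of points $X_{\mathcal{I}_2}+\sigma\zeta$) does not imply smallness of $h$ at $X_{\mathcal{I}_2}$, so your contradiction does not close. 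The paper avoids this by phrasing the bad event $\mathcal{E}_{\ai}$ directly in terms of $\mathfrak{h}_{\epsilon_{\ai},\sigma}^{x_{\ai}}(X_{\mathcal{I}_2}^{\aj},y_{\ai})$ and asserting that non-triggering of Line \ref{SuccessStart} yields the $\mathfrak{h}$-non-decrease \eqref{eq:g9}. A secondary issue: the saddle-escape lemma as invoked in the paper requires, in addition to the negative curvature, that $\|\nabla_x\mathfrak{h}_{\epsilon_{\ai^\star},\sigma}^{x^\star}(x^\star,y^\star)\|$ be small; the paper supplies this via Proposition \ref{Prop_first_order_x}, and that is where one of the two $\omega$'s in the failure probability actually comes from, whereas your accounting attributes it to the translation step instead.
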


\begin{proof}

In this proof, it will be convenient to write $X_{\ak}$ which appears inside the For loop (Lines \ref{RebootStart}-\ref{RebootEnd} in Algorithm \ref{alg:LocalMin-max}) with an index $j$ indicating the value that $X_{\ak}$ takes during the $j$th For loop.  Specifically, instead of $X_{\ak}$, we will write $X^{\aj}_{\ak}$.  In a simmilar manner, we right $u_{\ai \ak}$ in place of $u$.

From Line \ref{PSGD_start} of Algorithm \ref{alg:LocalMin-max} we have that
\be  \label{eq:g15}
 h^{\ak} = \min(f(X^{\aj}_{\ak-1} + \sigma u_{\ai \ak}, \mathcal{Y}), f(x_{\ai}, y_{\ai})),
\ee
where $\mathcal{Y} \leftarrow \mathsf{y}_{\mathrm{LocalMax}}$ is the output $\mathsf{y}_{\mathrm{LocalMax}}$ of Algorithm \ref{alg:InnerMaxLoop} with inputs $\mathsf{x} \leftarrow X^{\aj}_{\ak-1} + \sigma u_{\ai \ak}$,  $\mathsf{y}^0 \leftarrow y_{\ai}$, $\epsilon' \leftarrow \epsilon_{\ai}(1+\delta)$.

Therefore, by the definition of the function $h$ (Equation \eqref{eq:z1}), we have that
\be  \label{eq:g16}
f(X^{\aj}_{\ak-1} + \sigma u_{\ai \ak}, \mathcal{Y}) = h_{\epsilon_i} (X^{\aj}_{\ak-1} + \sigma u_{\ai \ak}, y_{\ai}).
\ee

\noindent
Moreover, by Proposition \ref{Prop_fixed_point}, we have that
\be  \label{eq:g17}
f(x_{\ai}, y_{\ai}) = h_{\epsilon_i}(x_{\ai}, y_{\ai}).
\ee

\noindent
  Therefore, Equations \eqref{eq:g15},  \eqref{eq:g16} and \eqref{eq:g17}, together imply that
\be  \label{eq:g2}
 h^{\ak} &= \min(f(X^{\aj}_{\ak-1} + \sigma u_{\ai \ak}, \mathcal{Y}),  \, \, f(x_{\ai}, y_{\ai}))\\
 & =  \min(h_{\epsilon_i} (X^{\aj}_{\ak-1} + \sigma u_{\ai \ak}, y_{\ai}),  \, \, h_{\epsilon_i}(x_{\ai}, y_{\ai})).
\ee

\noindent
In Line \ref{PSGD_end} of Algorithm \ref{alg:LocalMin-max} we have that  $\Gamma_{\ak} = (h^{\ak} - h^{\ak-1}) \frac{1}{\sigma} u_{\ai \ak}$.  Therefore, Equation \eqref{eq:g2} implies that
\be \label{eq:g3}
\Gamma_{\ak} &= (h^{\ak} - h^{\ak-1}) \frac{1}{\sigma} u_{\ai \ak}\\
&= [\min(h_{\epsilon_i} (X^{\aj}_{\ak-1} + \sigma u_{\ai \ak}, y_{\ai}), h_{\epsilon_i}(x_{\ai}, y_{\ai})) - h_{\ak-1}]\frac{1}{\sigma} u _{\ai \ak}\\
& = [\min(\mathbbm{h}_{\epsilon_i}^{x_{\ai}} (X^{\aj}_{\ak-1} + \sigma u_{\ai \ak}, y_{\ai})) - h_{\ak-1}]\frac{1}{\sigma} u _{\ai \ak}.
\ee

\noindent Therefore by Equation \eqref{eq:g3} and Proposition \ref{lemma_SG}, and since $h_{\ak-1}$ is independent of $\sigma u_{\ai \ak}$, we have that
\be \label{eq:g4}
\mathbb{E}[\Gamma_{\ak}]= \nabla_{{x}} \mathfrak{h}_{\epsilon_{\ai}, \sigma}^{x_{\ai}}(X^{\aj}_{\ak-1},y_{\ai}),
\ee
and
\begin{equation} \label{eq:g5}
\mathbb{P}(\|\Gamma_{\ak} - \nabla_{{x}} \mathfrak{h}_{\epsilon_{\ai}, \sigma}^{x_{\ai}}(X^{\aj}_{\ak-1},y_{\ai})\| \geq t) \leq 2\exp\left(-\frac{t^2}{8\left(\frac{2b\sqrt{d}}{\sigma}\right)^2}\right)  \qquad \forall t \geq 0.
\end{equation}
In other words, $\Gamma_{\ak}$ is a stochastic gradient for $\mathfrak{h}_{\epsilon_{\ai}, \sigma}^{x_{\ai}}(X^{\aj}_{\ak-1},y_{\ai})$ (Equation \eqref{eq:g4}) which satisfies a concentration property (Inequality \eqref{eq:g5}).

Since $\Gamma_{\ak}$ is a stochastic gradient  with concentration properties for a smooth function, we can apply results from \cite{NonconvexOptimizationForML} which, roughly speaking, say that stochastic gradient descent with added Gaussian noise can escape saddle points in polynomial time.

More specifically, Lemma 25 of \cite{NonconvexOptimizationForML}, together with Equations \eqref{eq:g4} and \eqref{eq:g5}, imply that if at any iteration $i$ of the For loop in Algorithm \ref{alg:LocalMin-max} (Lines \ref{RebootStart}-\ref{RebootEnd}) we have
\be \label{eq:g6}
\|\nabla_{{x}} \mathfrak{h}_{\epsilon_{\ai}, \sigma}^{x_{\ai}}(x_{\ai},y_{\ai})\| \leq  \epsilon_{\ai}/50,
\ee
and
\be \label{eq:g7}
\lambda_{\min}(\nabla_{{x}}^2 \mathfrak{h}_{\epsilon_{\ai}, \sigma}^{x_{\ai}}(x_{\ai},y_{\ai})) \leq -   \sqrt{\epsilon_{\ai}}/10,
\ee
then 
\be \label{eq:g14}
\mathbb{P}(\mathfrak{h}_{\epsilon_{\ai}, \sigma}^{x_{\ai}}( X^{\aj}_{\mathcal{I}_2}, y_{\ai}) - \mathfrak{h}_{\epsilon_{\ai}, \sigma}^{x_{\ai}}(x_{\ai}, y_{\ai}) \leq -\gamma_1) \geq \frac{1}{6},
\ee
as long as the hyper-parameters $\gamma_1, \mathcal{I}_2, \eta, \alpha$ satisfy 
$$\gamma_1 \leq \epsilon^{\nicefrac{3}{2}} \sigma^{15.5} (b^{2.5} \mathsf{c} \log^5( bd\sqrt{\sigma \epsilon}))^{-1},$$ 
$$\mathcal{I}_2 \geq \mathsf{c} \eta^{-1} \epsilon^{-\nicefrac{1}{2}} \log(bd\sqrt{\sigma \epsilon}) ,$$ 
$$\eta \leq \frac{\sigma^9}{b^4 d^2} (b^2 (1+ 10 b d \sigma^{-12} \epsilon^{-2}) \mathsf{c} \log^9( bd\sqrt{\sigma \epsilon}))^{-1},$$  
$$\alpha = \eta \mathsf{c}\log( bd\sqrt{\sigma \epsilon}) \sqrt{1+ b^2 d^2 \sigma^{-2}},$$
  where $\mathsf{c}$ is a universal constant.

For every $i \in [i_{\max}]$, let $\mathcal{E}_i$ be the ``bad" event that we have that both of the following Equations \eqref{eq:g8} and \eqref{eq:g9} hold:
\begin{equation} \label{eq:g8}
\|\nabla_{{x}} \mathfrak{h}_{\epsilon_{\ai}, \sigma}^{x_{\ai}}(x_{\ai},y_{\ai})\| \leq \frac{1}{50}\epsilon_{\ai}, \qquad  \lambda_{\min}(\nabla_{{x}}^2 \mathfrak{h}_{\epsilon_{\ai}, \sigma}^{x_{\ai}}(x_{\ai},y_{\ai})) \leq - \frac{1}{10} \sqrt{\epsilon_{\ai}}
\end{equation}
and
\be \label{eq:g9}
\mathfrak{h}_{\epsilon_{\ai}, \sigma}^{x_{\ai}}( X^{\aj}_{\mathcal{I}_2}, y_{\ai}) - \mathfrak{h}_{\epsilon_{\ai}, \sigma}^{x_{\ai}}(x_{\ai}, y_{\ai}) > -\gamma_1 \qquad \forall j\in \mathcal{I}_4.
\ee

\noindent Then Inequality \eqref{eq:g14} implies that
\be  \label{eq:g10}
\mathbb{P}(\mathcal{E}_i) \leq \left(1-\frac{1}{6}\right)^{\mathcal{I}_4} \leq \frac{\omega}{i_{\max}},
\ee
since $\mathcal{I}_4 \geq 6\log(\frac{i_{\max}}{\omega})$, where $i_{\max} = O(\frac{b}{\gamma_1})$ is an upper bound given in the proof of Lemma \ref{lemma:RunningTime} on the number of iterations of the While loop in Algorithm \ref{alg:LocalMin-max}.

 \noindent
Therefore,
\begin{equation} \label{eq:g11}
{\textstyle \mathbb{P}(\bigcup_{i=1}^{i_{\max}} \mathcal{E}_{\ai}) \leq \sum_{i=1}^{{\ai}_{\max}}  \mathbb{P}(\mathcal{E}_{\ai}) \stackrel{\textrm{Eq.} \ref{eq:f10}}{\leq} {\ai}_{\max} \times \frac{\omega}{{\ai}_{\max}}  = \omega.}
\end{equation}

\noindent Now, since we set $\gamma_1 \leq \frac{\epsilon \sigma}{10^3 b \sqrt{d} \log(\frac{2b}{\gamma_1}) }$,  Proposition \ref{Prop_first_order_x} Implies that 
\be  \label{eq:g13}
\mathbb{P}\left(\| \nabla_{{x}} \mathfrak{h}_{\epsilon_{\ai^\star}, \sigma}^{x^\star}(x^\star,y^\star)\| > \frac{1}{100}\epsilon\right) \leq \omega.
\ee

\noindent
Now, the condition in Line \ref{SuccessStart} of Algorithm \ref{alg:LocalMin-max} implies that Inequality \eqref{eq:g9}  holds for all $j\in \mathcal{I}_4$ when the value of $i$ is $i^\star$.  Therefore, Inequalities \eqref{eq:g13}, and \eqref{eq:g11}, together with the definition of $\mathcal{E}_i$ (\eqref{eq:g8} and \eqref{eq:g9}) imply that
\be \label{eq:g18}
 \lambda_{\mathrm{min}}(\nabla^2_{{x}} \mathfrak{h}_{\epsilon_{\ai^\star}, \sigma}^{x^\star}(x^\star, y^\star)) \geq - \frac{1}{10}\sqrt{\epsilon_{\ai^\star}},
 \ee
 with probability at least $1-2\omega$.
\end{proof}

\subsection{Concluding the proof of Theorem \ref{thm:GreedyMin-max}} \label{concluding_the_proof}

\begin{proof}[Proof of Theorem \ref{thm:GreedyMin-max}]
\textbf{Showing convergence, and bounding the number of oracle calls.}
By Lemma \ref{lemma:RunningTime}, we have that Algorithm \ref{alg:LocalMin-max} terminates and outputs a point $(x^\star, y^\star) \in \mathbb{R}^d$ after  $$O\left(\frac{b}{\gamma_1} \times (\mathcal{I}_2 \mathcal{I}_4 + \mathcal{I}_3) \times \frac{b}{ \mu_1 \mu_3^2 L_{}}\right) = \mathrm{poly}\left(\frac{1}{\epsilon}, d, b, L_{}, \frac{1}{\sigma}\right)$$ gradient, function, and Hessian evaluations.  In particular, if $b, L \geq 1$ and if $\sigma, \epsilon \leq 1$, the number of gradient, function, and Hessian evaluations can be simplified to $O\left(\frac{d^{8} L^2 b^{37}}{\epsilon^{19}\sigma^{132}}\right)$.

\bigskip
\noindent
\textbf{Showing that $x^\star$ is an approximate local minimum for greedy max function.}
By Proposition \ref{Prop_lower_bound}, we have that
\be \label{eq:h7}
h_{\epsilon_{\ai^\star}} (x,y) \leq g_{\epsilon_{\ai^\star}} (x,y)  \qquad \forall x,y\in \mathbb{R}^d.
\ee

\noindent By Proposition \ref{Prop_fixed_point}, we have that
\be \label{eq:h1}
  h_{\epsilon_{\ai^\star}} (x^\star,y^\star) = g_{\epsilon_i}(x^\star, y^\star) = f(x^\star, y^\star).
\ee

\noindent By Proposition \ref{prop_SGVariance} we have, with probability at least $1-\omega$, that
\be \label{eq:h3}
\mathbb{E}\left[\left\|\mathcal{H}^{x^\star}_{\epsilon_{\ai^\star}}(x^\star, y^\star)\right\| \big | (x^\star, y^\star) \right]  \leq  \frac{1}{8000}  \frac{\sigma^{14} \epsilon_{\ai^\star}^{1.5}}{b^2},
\ee
since $\gamma_1 = \frac{\epsilon^{2.1} \sigma^{16.6}}{10^4 (1+b^{3.1}) d^{0.6} \log(b d \sigma \epsilon)}$.

 \noindent
By Proposition \ref{Prop_first_order_x}, with probability at least $1-\omega$, we have that
\be \label{eq:h8}
\left\| \nabla_{{x}} \mathfrak{h}_{\epsilon_{\ai^\star}, \sigma}^{x^\star}(x^\star,y^\star)\right\| \leq \frac{\epsilon_{\ai^\star}^2}{8000}  \frac{\sigma^7}{b},
\ee
since $\gamma_1 = \frac{\epsilon^{2.1} \sigma^{16.6}}{10^4 (1+b^{3.1}) d^{0.6} \log(b d \sigma \epsilon)}$.

 \noindent
By Proposition \ref{Prop_NoisySGD}, with probability at least $1-2\omega$ we have
\be \label{eq:h4}
 \lambda_{\mathrm{min}}(\nabla^2_{{x}} \mathfrak{h}_{\epsilon_{\ai^\star}, \sigma}^{x^\star}(x^\star, y^\star)) \geq - \frac{1}{5}\sqrt{\epsilon_{\ai^\star}}.
 \ee

\noindent Thus, by Lemma \ref{Lemma_SharedLM},  \, \eqref{eq:h7}-\eqref{eq:h4} imply that, w.p. at least $1-4 \omega$,
 \begin{equation} \label{eq:h5}
\big\|\nabla_{{x}} \mathfrak{g}_{\epsilon_{\ai^\star}, \sigma}^{x^\star}(x^\star, y^\star)\big\| \leq \epsilon_{\ai^\star} \quad \textrm{and} \quad \lambda_{\mathrm{min}}(\nabla^2_{{x}} \mathfrak{g}_{\epsilon_{\ai^\star}, \sigma}^{x^\star}(x^\star, y^\star)) \geq - \sqrt{\epsilon_{\ai^\star}}.
\end{equation}

\noindent \textbf{Showing that $y^\star$ is an approximate local maximum for $f(x^\star, \cdot)$.}
We also have, by Lemma \ref{lemma:Player2} that
\begin{equation} \label{eq:h6}
\|\nabla_y f(x^\star, y^\star)\| \leq  \epsilon_{\ai^\star} \, \, \, \,  \textrm{ and }  \, \,  \, \,  \lambda_{\mathrm{max}}( \nabla^2_{{y}} f(x^\star, y^\star)) \leq \sqrt{L_{}  \epsilon_{\ai^\star}}.
\end{equation}

\noindent \textbf{Showing that $(x^\star, y^\star)$ is greedy adversarial equilibrium for $f$.}
 Inequalities \eqref{eq:h5} and \eqref{eq:h6} together imply that, with probability at least $1-4 \omega$, the point $(x^\star, y^\star)$ is an $(\epsilon^\star, \sigma)$-greedy adversarial equilibrium, where $\epsilon^\star = \epsilon_{\ai^\star} \leq \eps$.
\end{proof}

\section*{Acknowledgments}
This research was supported in part by NSF CCF-1908347.

\bibliographystyle{plain}
  \bibliography{GAN}

\newpage
\appendix

\section{Greedy adversarial equilibrium in the strongly convex-strongly concave setting} \label{sec:convex_concave}

\begin{theorem}\label{Thm_strongly_convex_concave}
Suppose that for some $\alpha, L >0$ the function $f: \mathbb{R}^d \times \mathbb{R}^d \rightarrow \mathbb{R}$ is $C^2$-smooth, $\alpha$-strongly convex in $x$, $\alpha$-strongly concave in $y$, and has $L$-Lipschitz gradients and $L$-Lipschitz Hessian. 
 And suppose that $(x^\star, y^\star)$ is an $(\epsilon, \sigma)$-greedy adversarial equilibrium, for any $\epsilon>0$ and any $0< \sigma \leq \frac{\epsilon^3 \alpha}{10d^{\nicefrac{3}{2}} L^3 (\alpha+1) \log(L\frac{\alpha+1}{\alpha})}.$  
 Then the duality gap at $(x^\star, y^\star)$ satisfies
\be
\max_{z \in \mathbb{R}^d} f(x^\star, z) - \min_{w\in \mathbb{R}^d} f(w, y^\star) \leq 26\frac{\epsilon^2}{\alpha}.
\ee

\end{theorem}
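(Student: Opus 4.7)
The plan is to split the duality gap as $[\max_z f(x^\star,z)-f(x^\star,y^\star)]+[f(x^\star,y^\star)-\min_w f(w,y^\star)]$ and bound each piece by $O(\epsilon^2/\alpha)$. The first piece is immediate: the GAE condition $\|\nabla_y f(x^\star,y^\star)\|\leq\epsilon$ (Inequality~\eqref{eq:first_order_y}) together with $\alpha$-strong concavity of $f(x^\star,\cdot)$ yields $\max_z f(x^\star,z)-f(x^\star,y^\star)\leq \epsilon^2/(2\alpha)$, and in particular $\|y^\star-y^*(x^\star)\|\leq \epsilon/\alpha$, where $y^*(x):=\arg\max_z f(x,z)$ is unique by strong concavity.

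For the second piece, by $\alpha$-strong convexity of $f(\cdot,y^\star)$ it suffices to bound $\|\nabla_x f(x^\star,y^\star)\|$, because then $f(x^\star,y^\star)-\min_w f(w,y^\star)\leq \|\nabla_x f(x^\star,y^\star)\|^2/(2\alpha)$. Writing $\phi(x):=\max_z f(x,z)$, Danskin's theorem combined with strong concavity gives that $\phi$ is $C^2$-smooth with $\nabla\phi(x)=\nabla_x f(x,y^*(x))$ and Hessian of magnitude $O(L(1+L/\alpha))$. Hence $\|\nabla_x f(x^\star,y^\star)-\nabla\phi(x^\star)\|\leq L\|y^\star-y^*(x^\star)\|\leq L\epsilon/\alpha$, and the task reduces to proving $\|\nabla\phi(x^\star)\|\leq O(\epsilon)$.

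The key step is extracting this bound from the GAE condition $\|\nabla_x S(x^\star)\|\leq\epsilon$. Since every $\epsilon$-greedy path for the $\alpha$-strongly concave $f(x,\cdot)$ terminates at an $(\epsilon,\sqrt{L\epsilon})$-approximate local maximum, which is unique and within $O(\epsilon/\alpha)$ of $y^*(x)$, we obtain $\phi(x)-O(\epsilon^2/\alpha)\leq g_\epsilon(x,y^\star)\leq \phi(x)$ uniformly in $x$. Combining this with the fixed-point identity $g_\epsilon(x^\star,y^\star)=f(x^\star,y^\star)$ (Proposition~\ref{Prop_fixed_point}) and with $|\phi(x^\star)-f(x^\star,y^\star)|\leq\epsilon^2/(2\alpha)$ from the first paragraph shows that $S(x)$ agrees, up to $O(\epsilon^2/\alpha)$, with $\widehat S(x):=\mathbb{E}_\zeta[\min(\phi(x+\sigma\zeta),\phi(x^\star))]$.

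The hard part will be promoting this uniform closeness to closeness of the gradients at $x^\star$, because the stochastic-gradient representation $\nabla S(x)=\sigma^{-1}\mathbb{E}[\zeta\cdot(\cdot)]$ amplifies uniform errors by a factor of $\sqrt{d}/\sigma$. This is exactly where the quantitative hypothesis $\sigma\leq \epsilon^3\alpha/(10d^{3/2}L^3(\alpha+1)\log(L(\alpha+1)/\alpha))$ is used: it forces $\sqrt{d}\sigma^{-1}\cdot O(\epsilon^2/\alpha)\leq O(\epsilon)$, so that $\|\nabla S(x^\star)-\nabla\widehat S(x^\star)\|\leq O(\epsilon)$. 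A short Gaussian computation on $\widehat S$ — using that $\phi$ is $C^2$-smooth so that inside a Gaussian-typical ball of radius $O(\sigma\sqrt{d\log(1/\sigma)})$ one has $\phi(x^\star+\sigma\zeta)=\phi(x^\star)+\sigma\nabla\phi(x^\star)^\top\zeta+O(L(1+L/\alpha)\sigma^2\|\zeta\|^2)$, and that the Gaussian tail contributes negligibly by Remark~\ref{Rem_Lipschitz_convolution} — yields $\nabla\widehat S(x^\star)=\frac{1}{2}\nabla\phi(x^\star)+O(\sigma L(1+L/\alpha)\sqrt{d})$, because the min-truncation activates on exactly the half-space where $\sigma\nabla\phi(x^\star)^\top\zeta>0$ up to lower-order curvature terms. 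Chaining the three estimates gives $\|\nabla\phi(x^\star)\|\leq O(\epsilon)$, and substituting into the strong-convexity bound on the second piece and summing with the first yields the claimed $26\epsilon^2/\alpha$ after tracking the numerical constants.
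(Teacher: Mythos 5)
Your overall architecture matches the paper's: split the duality gap into a $y$-side piece (handled by $\|\nabla_y f(x^\star,y^\star)\|\leq\epsilon$ and strong concavity) and an $x$-side piece, reduce the latter to bounding $\|\nabla_x f(x^\star,y^\star)\|$, and extract that bound from the GAE condition $\|\nabla_x S(x^\star)\|\leq\epsilon$ by comparing $S$ to a smoothed-truncated surrogate whose gradient is $\tfrac12\nabla_x f(x^\star,y^\star)$ up to $O(L\sigma\sqrt d)$. The uniform sandwich $\phi(x)-O(\epsilon^2/\alpha)\leq g_\epsilon(x,y^\star)\leq\phi(x)$ is also correct.

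However, there is a genuine gap at the step you yourself flag as "the hard part." You claim the hypothesis $\sigma\leq \epsilon^3\alpha/(10d^{3/2}L^3(\alpha+1)\log(\cdot))$ "forces $\sqrt{d}\,\sigma^{-1}\cdot O(\epsilon^2/\alpha)\leq O(\epsilon)$." The inequality you need is $\sigma\gtrsim\sqrt{d}\,\epsilon/\alpha$, i.e.\ a \emph{lower} bound on $\sigma$; the theorem only supplies an \emph{upper} bound, and must hold for arbitrarily small $\sigma$, for which $\sqrt{d}\,\sigma^{-1}\cdot\epsilon^2/\alpha$ diverges. So differentiating the uniform $O(\epsilon^2/\alpha)$ closeness through the Gaussian smoothing cannot give $\|\nabla S(x^\star)-\nabla\widehat S(x^\star)\|\leq O(\epsilon)$. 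The paper's proof circumvents exactly this: it bounds the \emph{pointwise} gap $|g_\epsilon(x^\star+\sigma\zeta,y^\star)-f(x^\star+\sigma\zeta,y^\star)|$ by $\tfrac1\alpha\bigl(\|\nabla_y f(x^\star+\sigma\zeta,y^\star)\|-\epsilon\bigr)\,\|\nabla_y f(x^\star+\sigma\zeta,y^\star)\|$ using the length of the greedy path under strong concavity, and then exploits that $\|\nabla_y f(x^\star+\sigma\zeta,y^\star)\|-\epsilon\leq L\sigma\|\zeta\|$ (indeed $O(L^2\sigma^2 d/\epsilon)$ in expectation, by Lemma \ref{lemma_strong}, since $\|\nabla_y f(x^\star,y^\star)\|\leq\epsilon$ already). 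This gap is itself $O(\sigma)$, which cancels the $1/\sigma$ amplification and is why a small-$\sigma$ hypothesis suffices. To repair your proof you must replace the uniform $O(\epsilon^2/\alpha)$ bound in the gradient comparison with such a $\zeta$-dependent, $\sigma$-decaying bound.
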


\begin{proof}

Since $(x^\star, y^\star)$ is an $(\epsilon, \sigma)$-greedy adversarial equilibrium, $\|\nabla_y f(x^\star, y^\star)\| \leq \epsilon$.
Therefore, since $f(x,y)$ is $\alpha$-strongly concave in $y$, we have that
\be \label{eq_strong_y}
\|y^\star - \mathrm{argmax}_z f(x^\star, z)\| \leq \frac{\epsilon}{\alpha}.
\ee

\noindent
Recall the notation
   \be
   S(x) :=  \mathbb{E}_{\zeta \sim N(0,I_d)}\left [\min(g_\epsilon(x  + \sigma \zeta, y^\star),  g_\epsilon(x^\star, y^\star)) \right].
   \ee
   from Definition \ref{def:local_min-max_formal}.
   
   From the proof of Proposition \ref{Prop_fixed_point} we have that $f(x^\star, y^\star) = g_\epsilon(x^\star, y^\star)$.  Thus,
   
\be
      S(x) =  \mathbb{E}_{\zeta \sim N(0,I_d)} \left[\min(g_\epsilon(x  + \sigma \zeta, y^\star), f(x^\star, y^\star)) \right].
\ee
   
   \noindent
   Let 
   \be
   \hat{S}(x) = \mathbb{E}_{\zeta \sim N(0,I_d)}\left [\min(g_\epsilon(x  + \sigma \zeta, y^\star),  f(x^\star, y^\star)) - \min(f(x  + \sigma \zeta, y^\star),  f(x^\star, y^\star))  \right].
   \ee
    \noindent
   Then, by the formula in Equation \eqref{eq_stochastic_gradient_formula}, we have
\be \label{eq_z1}
\|\nabla_x \hat{S}(x)\| &=  \left \|\mathbb{E}_{\zeta \sim N(0,I_d)} \left [ \frac{\zeta}{\sigma}\left [\min(g_\epsilon(x  + \sigma \zeta, y^\star),  f(x^\star, y^\star)) - \min(f(x  + \sigma \zeta, y^\star),  f(x^\star, y^\star))  \right] \right]\right \|\\
& \leq  \mathbb{E}_{\zeta \sim N(0,I_d)} \left [ \frac{\|\zeta\|}{\sigma} \times \left \| \min(g_\epsilon(x  + \sigma \zeta, y^\star),  f(x^\star, y^\star)) - \min(f(x  + \sigma \zeta, y^\star),  f(x^\star, y^\star))  \right\| \right].
\ee

 \noindent
Suppose now that $\zeta \in \mathbb{R}^d$ and let $\varphi$ be any $\epsilon$-greedy path starting at the point $y^\star$ which seeks to maximize $f(x^\star  + \sigma \zeta, \cdot)$.
Thus, since $f(x^\star  + \sigma \zeta, \cdot)$ increases at rate at least $\epsilon$ along the greedy path, and $f(x^\star  + \sigma \zeta, \cdot)$ is $\alpha$-strongly concave, we must have that the greedy path reaches a point $p$ where
$$\|\nabla_y f(x^\star  + \sigma \zeta, p)\| \leq \epsilon,$$
after the path has traveled a distance of at most $\|p-y^\star\| \leq (\|\nabla_y f(x^\star  + \sigma \zeta, y^\star)\|-\epsilon) \times \frac{1}{\alpha}$.  %

Thus, since the value of $f$ must continuously increase along the greedy path, and $f(x,y)$ is strongly concave in $y$, we have that
\be \label{eq_z2}
{}&\left | \min(g_\epsilon(x  + \sigma \zeta, y^\star),  f(x^\star, y^\star)) - \min(f(x  + \sigma \zeta, y^\star),  f(x^\star, y^\star))  \right |\\
&\leq \left | g_\epsilon(x  + \sigma \zeta, y^\star) - f(x  + \sigma \zeta, y^\star)  \right |\\ 
&\leq (\|\nabla_y f(x^\star  + \sigma \zeta, y^\star)\|-\epsilon) \times \frac{1}{\alpha} \times \|\nabla_y f(x^\star  + \sigma \zeta, y^\star)\|,
\ee
\noindent
Therefore, combining Inequalities \eqref{eq_z1} and \eqref{eq_z2}, we get
\be
{}&\|\nabla_x \hat{S}(x)\| 
 \stackrel{\textrm{Eq.}\ref{eq_z1}, \ref{eq_z2}}{\leq} \mathbb{E}_{\zeta \sim N(0,I_d)} \left [ \frac{\|\zeta\|}{\sigma} \times (\|\nabla_y f(x^\star  + \sigma \zeta, y^\star)\|-\epsilon) \times \frac{1}{\alpha} \times \|\nabla_y f(x^\star  + \sigma \zeta, y^\star)\| \right]\\
&\leq\mathbb{E}_{\zeta \sim N(0,I_d)} \bigg [ \frac{20\sqrt{d} \log\left(d\frac{L(\alpha+1)}{\alpha \epsilon}\right)}{\sigma} \times (\|\nabla_y f(x^\star  + \sigma \zeta, y^\star)\|-\epsilon) \times \frac{1}{\alpha}\\
&\qquad \qquad \qquad  \qquad \qquad \qquad \qquad  \qquad \qquad \qquad \qquad \times \left(\epsilon + L \sigma 20\sqrt{d} \log\left(d\frac{L(\alpha+1)}{\alpha \epsilon}\right)\right)\bigg]\\
&+ \mathbb{E}_{\zeta \sim N(0,I_d)} \left [ \frac{\|\zeta\|}{\sigma} \times (L\sigma\|\zeta\|) \times \frac{1}{\alpha} \times(\epsilon + L\sigma\|\zeta\|) \times \mathbbm{1}\bigg\{\|\zeta\| \geq 20\sqrt{d} \log\left(d\frac{L(\alpha+1)}{\alpha \epsilon}\right) \bigg\} \right ]\\
&
 \stackrel{\textrm{Lemma }\ref{lemma_strong}}{\leq} \frac{20\sqrt{d} \log(d(\frac{L(\alpha+1)}{\alpha \epsilon})}{\sigma} \times \left( \frac{1}{2 \epsilon}L^2  \sigma^2 d  +   \frac{1}{2}\sigma^2 L d \right ) \times \frac{1}{\alpha}\times \left(\epsilon + L \sigma 20\sqrt{d} \log\left(d\frac{L(\alpha+1)}{\alpha \epsilon}\right)\right)\\
&+ \frac{L^2 \sigma + \epsilon}{\alpha}\mathbb{E}_{\zeta \sim N(0,I_d)} \left [ \|\zeta\|^3 \times \mathbbm{1}\bigg\{\|\zeta\| \geq 20\sqrt{d} \log\left(d\frac{L(\alpha+1)}{\alpha \epsilon}\right)\bigg\} \right ]\\
& \leq \frac{\epsilon}{2} + \frac{L^2 \sigma + \epsilon}{\alpha}\mathbb{E}_{\zeta \sim N(0,I_d)} \left [ \|\zeta\|^4 \times \mathbbm{1}\bigg\{\|\zeta\| \geq 20\sqrt{d} \log\left(d\frac{L(\alpha+1)}{\alpha \epsilon}\right)\bigg\} \right ]\\
& \leq \frac{\epsilon}{2} + \int_{(20\sqrt{d} \log(d(\frac{L(\alpha+1)}{\alpha \epsilon}))^4 }^{\infty} \mathbb{P}(\|\zeta\|^4 \geq t) \mathrm{d}t\\
 & = \frac{\epsilon}{2} + \int_{(20\sqrt{d} \log(d(\frac{L(\alpha+1)}{\alpha \epsilon}))^4 }^{\infty} \mathbb{P}(|\zeta\| \geq t^{\nicefrac{1}{4}}) \mathrm{d}t\\
  & \leq \frac{\epsilon}{2} + \int_{(20\sqrt{d} \log(d(\frac{L(\alpha+1)}{\alpha \epsilon}))^4 }^{\infty} e^{-\frac{\sqrt{t} - d}{8}} \mathrm{d}t\\
    & \leq \frac{\epsilon}{2} + e^{\frac{d}{8}} \left( -16(\sqrt{t}+8) e^{-\frac{\sqrt{t}}{8}} \right) \bigg|_{20\sqrt{d} \log\left(d\frac{L(\alpha+1)}{\alpha \epsilon}\right)^4 }^{\infty} \\
   & \leq \epsilon,
\ee
where the third inequality is by  Lemma \ref{lemma_strong}, the fourth inequality holds since  since $\sigma \leq \frac{\epsilon^3 \alpha}{10 d^{\nicefrac{3}{2}} L^3 (\alpha+1) \log(L\frac{\alpha+1}{\alpha \epsilon})}$, and the sixth inequality is a standard concentration bound for Gaussians \cite{hanson1971bound}.

Define $R(x) :=  \mathbb{E}_{\zeta \sim N(0,I_d)}\left [\min(f(x  + \sigma \zeta, y^\star),  f(x^\star, y^\star)) \right]$ for all $x\in \mathbb{R}^d$.
Since $(x^\star, y^\star)$ is an $(\epsilon, \sigma)$-greedy adversarial point, we have $\|\nabla_x S(x)\| \leq \epsilon$.
Then
\be \label{eq_z4}
\| \nabla_x R(x^\star) \| = \| \nabla_x S(x) - \nabla_x \hat{S}(x) \| \leq  \| \nabla_x S(x)\| +  \| \nabla_x \hat{S}(x) \| &\stackrel{\textrm{Eq.}\ref{eq_z3}}{\leq} 2\epsilon.
\ee
 \noindent
  Since $f(\cdot, y^\star)$ is convex,
\be\label{eq_strong_2}
\mathbb{P}_{\zeta \sim N(0,I_d)}(\min(f(x^\star  + \sigma \zeta, y^\star) \geq  f(x^\star, y^\star)) \geq \frac{1}{2}.
\ee
 Thus, since $f$ is also $L$-Lipschitz,
\be
\| \nabla_x R(x^\star) \| \stackrel{\textrm{Eq.}\eqref{eq_strong_2}}{\geq}\frac{1}{2}\|\nabla_x f(x^\star, y^\star)\| - \mathbb{E}_{\zeta \sim N(0,I_d)}[L\sigma \|\zeta\|] =  \frac{1}{2}\|\nabla_x f(x^\star, y^\star)\| - L \sigma \sqrt{d}.
\ee
 \noindent
Thus,
\be \label{eq_strong_1}
\|\nabla_x f(x^\star, y^\star)\| &\leq  2\| \nabla_x R(x^\star) \|  + 2L \sigma \sqrt{d}\\
& \stackrel{\textrm{Eq.}\eqref{eq_z4}}{\leq} 4\epsilon +  2L \sigma \sqrt{d} \\
&\leq 5\epsilon,
\ee
since $\sigma \leq \frac{\epsilon}{2Ld}$.

\noindent
Since $f$ is $\alpha$-strongly convex, Inequality \eqref{eq_strong_1} implies that
\be \label{eq_strong_x}
\|x^\star - \argmin_w f(w, y^\star)\| \leq 5\frac{\epsilon}{\alpha}.
\ee

\noindent
Now, since $f(\cdot, y^\star)$ is $\alpha$-strongly convex, Inequality \eqref{eq_strong_x} implies that
\be \label{eq_z5}
f(x^\star, y^\star) - \min_{w\in \mathbb{R}^d} f(w, y^\star) \leq  25\frac{\epsilon^2}{\alpha}.
\ee
And since $f(x^\star, \cdot)$ is  $\alpha$-strongly concave, 
Inequality \eqref{eq_strong_y} implies that
\be \label{eq_z6}
\max_{z\in \mathbb{R^d}} f(x^\star, z) - f(x^\star, y^\star) \leq \frac{\epsilon^2}{\alpha}.
\ee

\noindent
Combining Inequalities \eqref{eq_z5} and \eqref{eq_z6}, we get
\be \label{eq_z8}
\max_{z \in \mathbb{R}^d} f(x^\star, z) - \min_{w\in \mathbb{R}^d} f(w, y^\star) \leq 26\frac{\epsilon^2}{\alpha}.
\ee

\end{proof}

\begin{lemma} \label{lemma_strong}
Suppose that $f: \mathbb{R}^d \times \mathbb{R}^d \rightarrow \mathbb{R}$ is $C^2$-smooth with $L$-Lipschitz gradient and $L$-Lipschitz Hessian, and that $\|\nabla_y f(x^\star, y^\star)\| \leq \epsilon$ for some $x^\star, y^\star \in \mathbb{R}^d$.  Then
\be
\mathbb{E}_{\zeta \sim N(0,I_d)} &[ \|\nabla_y f(x^\star  + \sigma \zeta, y^\star)\|] \leq \epsilon + \frac{1}{2 \epsilon}L^2  \sigma^2 d  +   \frac{1}{2}\sigma^2 L d.
\ee

\end{lemma}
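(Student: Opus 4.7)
\medskip

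\noindent\textbf{Proof plan for Lemma \ref{lemma_strong}.}

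The plan is to Taylor expand $\nabla_y f(x^\star + \sigma\zeta, y^\star)$ about $x^\star$, isolating a linear-in-$\zeta$ term (which vanishes in expectation of the squared norm but produces an order-$\sigma$ contribution) and a higher-order remainder controlled by the Lipschitz Hessian assumption. Concretely, I would write
\[
\nabla_y f(x^\star + \sigma\zeta, y^\star)
= \nabla_y f(x^\star, y^\star)
+ \int_0^1 \nabla^2_{xy} f(x^\star + t\sigma\zeta, y^\star)\,\sigma\zeta \, \mathrm{d}t,
\]
set $A \coloneqq \nabla^2_{xy} f(x^\star, y^\star)$ and $g \coloneqq \nabla_y f(x^\star, y^\star)$, and split the integral into $A\,\sigma\zeta$ plus a remainder $R$ whose norm the Lipschitz Hessian hypothesis controls via $\|R\| \leq \tfrac{1}{2} L \sigma^2 \|\zeta\|^2$.

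Next I would apply the triangle inequality to get
$\mathbb{E}\,\|\nabla_y f(x^\star + \sigma\zeta, y^\star)\| \leq \mathbb{E}\,\|g + \sigma A \zeta\| + \mathbb{E}\,\|R\|$, and bound the remainder via $\mathbb{E}\|R\| \leq \tfrac{1}{2} L \sigma^2 \,\mathbb{E}\|\zeta\|^2 = \tfrac{1}{2} L \sigma^2 d$; this yields the third term in the claimed bound. For the linear-in-$\zeta$ piece, Jensen's inequality gives $\mathbb{E}\,\|g+\sigma A\zeta\| \leq \sqrt{\mathbb{E}\,\|g+\sigma A\zeta\|^2}$, and since $\zeta \sim N(0,I_d)$ has mean zero and identity covariance,
\[
\mathbb{E}\,\|g + \sigma A\zeta\|^2 = \|g\|^2 + \sigma^2 \|A\|_F^2 \leq \epsilon^2 + \sigma^2 L^2 d,
\]
where $\|A\|_F^2 \leq d\|A\|_{\mathrm{op}}^2 \leq d L^2$ follows from $L$-Lipschitzness of $\nabla f$ (which bounds $\|\nabla^2 f\|_{\mathrm{op}} \leq L$ and thus also its cross-block $A$).

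Finally I would apply the elementary inequality $\sqrt{a+b} \leq \sqrt{a} + \frac{b}{2\sqrt{a}}$ (valid for $a,b \geq 0$, $a>0$, and obtained by squaring) with $a=\epsilon^2$, $b = \sigma^2 L^2 d$ to obtain $\mathbb{E}\,\|g + \sigma A\zeta\| \leq \epsilon + \frac{\sigma^2 L^2 d}{2\epsilon}$, which combined with the remainder bound gives exactly $\epsilon + \frac{L^2 \sigma^2 d}{2\epsilon} + \frac{L \sigma^2 d}{2}$. There is no real obstacle here beyond bookkeeping; the only subtle point is using the Lipschitz-Hessian hypothesis (rather than only the Lipschitz-gradient one) to extract a remainder of size $\sigma^2\|\zeta\|^2$, which is what enables the better-than-$\sigma$ scaling needed for the invocation of this lemma in the proof of Theorem \ref{Thm_strongly_convex_concave}.
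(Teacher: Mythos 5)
Your proposal is correct and follows essentially the same route as the paper: the same first-order Taylor expansion of $\nabla_y f$ in $x$ with the Lipschitz-Hessian hypothesis controlling the $\tfrac{1}{2}L\sigma^2\|\zeta\|^2$ remainder, and the same tangent-line bound $\sqrt{\epsilon^2+t}\le\epsilon+\tfrac{t}{2\epsilon}$ for the leading term. The only (cosmetic, and arguably cleaner) difference is that you handle the linear-in-$\zeta$ piece via Jensen and the exact identity $\mathbb{E}\|A\zeta\|^2=\|A\|_F^2\le dL^2$, whereas the paper drops the cross term by symmetry of $\langle g,\sigma A\zeta\rangle$ and bounds $\|A\zeta\|\le L\|\zeta\|$ pointwise.
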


\begin{proof}
Since $f$ is $C^2$-smooth, we have that

\be \label{eq_z3}
 \mathbb{E}_{\zeta \sim N(0,I_d)} &[ \|\nabla_y f(x^\star  + \sigma \zeta, y^\star)\|]\\
 &= \mathbb{E}_{\zeta \sim N(0,I_d)} \left[ \left \| \nabla_y f(x^\star, y^\star) + \int_{0}^{\sigma\|\zeta\|}  \nabla_{xy}^2   f(x^\star  + t \frac{\zeta}{\|\zeta\|}, y^\star)  \frac{\zeta}{\|\zeta\|}  \mathrm{d}t \right \| \right ]\\
&\leq \mathbb{E}_{\zeta \sim N(0,I_d)} \left[  \int_{0}^{\sigma\|\zeta\|} \left \| \nabla_y f(x^\star, y^\star)+   \nabla_{xy}^2   f(x^\star  + t \frac{\zeta}{\|\zeta\|}, y^\star) \frac{\zeta}{\|\zeta\|}  \right \|\mathrm{d}t \right ]\\
&\leq  \mathbb{E}_{\zeta \sim N(0,I_d)} \left[  \int_{0}^{\sigma\|\zeta\|} \left \| \nabla_y f(x^\star, y^\star) + \nabla_{xy}^2   f(x^\star, y^\star) \frac{\zeta}{\|\zeta\|}  \right \| \mathrm{d}t  \right ]\\
& \qquad \qquad+   \mathbb{E}_{\zeta \sim N(0,I_d)} \left[ \int_{0}^{\sigma\|\zeta\|}   \left \| [\nabla_{xy}^2   f(x^\star  + t \frac{\zeta}{\|\zeta\|}, y^\star)  - \nabla_{xy}^2   f(x^\star, y^\star)] \frac{\zeta}{\|\zeta\|}  \right \|\mathrm{d}t \right ]\\
&\leq   \mathbb{E}_{\zeta \sim N(0,I_d)} \left[ \left \| \nabla_y f(x^\star, y^\star) + \nabla_{xy}^2   f(x^\star, y^\star) \sigma \zeta  \right \|   \right ] + \mathbb{E}_{\zeta \sim N(0,I_d)} \left[ \int_{0}^{\sigma\|\zeta\|}   L \sigma \|\zeta\| \mathrm{d}t \right ]\\
&\leq  \mathbb{E}_{\zeta \sim N(0,I_d)} \left[ \left \| \nabla_y f(x^\star, y^\star) + L  \sigma \zeta  \right \|   \right ]  + \frac{1}{2}\sigma^2 L \mathbb{E}_{\zeta \sim N(0,I_d)} \left[ \|\zeta\|^2   \right ]\\
&=  \mathbb{E}_{\zeta \sim N(0,I_d)} \left[ \left \langle \nabla_y f(x^\star, y^\star) + L  \sigma \zeta, \, \, \, \, \nabla_y f(x^\star, y^\star) + L  \sigma \zeta   \right \rangle^{\nicefrac{1}{2}}   \right ]  + \frac{1}{2}\sigma^2 L d\\
&=  \mathbb{E}_{\zeta \sim N(0,I_d)} \left[ \left( \|\nabla_y f(x^\star, y^\star)\|^2+ 2 \left \langle \nabla_y f(x^\star, y^\star) , \, \, L  \sigma \zeta  \right \rangle, \, \, \, \,+ L^2  \sigma^2 \|\zeta\|^2  \right )^{\nicefrac{1}{2}} \right ]  + \frac{1}{2}\sigma^2 L d\\
&\leq \mathbb{E}_{\zeta \sim N(0,I_d)} \left[ \left( \|\nabla_y f(x^\star, y^\star)\|^2 + L^2  \sigma^2 \|\zeta\|^2  \right )^{\nicefrac{1}{2}} \right ]  + \frac{1}{2}\sigma^2 L d\\
&\leq \mathbb{E}_{\zeta \sim N(0,I_d)} \left[ \left( \epsilon^2 + L^2  \sigma^2 \|\zeta\|^2  \right )^{\nicefrac{1}{2}} \right ]  + \frac{1}{2}\sigma^2 L d\\
&\leq  \mathbb{E}_{\zeta \sim N(0,I_d)} \left[  \epsilon + \frac{1}{2 \epsilon}L^2  \sigma^2 \|\zeta\|^2 \right ]  + \frac{1}{2}\sigma^2 L d\\
&= \epsilon + \frac{1}{2 \epsilon}L^2  \sigma^2 d  +   \frac{1}{2}\sigma^2 L d,
\ee
where the third inequality is due to the fact that $f$ has $L$-Lipschitz Hessian.  The fourth ineqaulity holds because $-L I_{2d} \preceq \nabla^2   f(x^\star, y^\star) \preceq L I_{2d}$ because $f$ has $L$-Lipschitz gradient, $\nabla^2   f(x^\star, y^\star)$ is symmetric (and therefore is diagonalizable by an orthogonal matrix) and $\zeta \sim N(0,I_d)$.
The fifth inequality holds since $\sqrt{\cdot}$ is concave, and the distribution of $\langle \nabla_y f(x^\star, y^\star) , \, L  \sigma \zeta \rangle$ is symmetric about $0$.
And the seventh inequality is due to the fact that the linear approximation to the function $\sqrt{t}$ at $t=\epsilon^2$ gives an upper bound for the function $\sqrt{t}$ for all $t \geq \epsilon^2$, since the function $\sqrt{t}$ is concave.

\end{proof}

 \section{Hardness results} \label{sec:hardness}
 \subsection{Hardness of nonconvex optimization in the oracle model} \label{sec:hardness_oracle}
In this section we show a hardness result for global nonconvex optimization in the oracle model (that is, when one is only given access to an oracle for the value, gradient, and Hessian of the objective function). 
 Although we could not find a reference for such a result in the literature, we suspect that it is widely known to be true.
 For hardness results for {convex} optimization in the oracle model, see for instance \cite{Problem_Complexity_book}.

Define the bump function $\psi: \mathbb{R}^d \rightarrow \mathbb{R}$ by
\be
\psi(x) := \begin{cases} e^{-\frac{1}{1-\|x\|^2}} \quad \textrm{ if } \|x\| < 1,\\ 0 \quad \textrm{ otherwise.} \end{cases}
\ee
In particular, we note that $\psi$ is $1$-lipschitz with $8$-Lipschitz gradient, and that $\sup_{x\in \mathbb{R}^d}\psi(x) =\frac{1}{e}$ and $\inf_{x\in \mathbb{R}^d}\psi(x) = 0$.  We also note that $\psi$ is $\mathcal{C}^\infty$ with all its derivatives vanishing outside of the ball $B(0,1)$.

We first prove hardness for the case of deterministic algorithms (Theorem \ref{thm:hardness_deterministic}), then we generalize the result to randomized algorithms (Corollary \ref{thm:hardness_randomized}).

\begin{theorem}[Hardness of nonconvex optimization for deterministic algorithms] \label{thm:hardness_deterministic}
Let $\mathcal{A}(g)$ be any deterministic algorithm which takes as input any function $g: \mathbb{R}^d \rightarrow \mathbb{R}$,  and has output $\tilde{x}(g) \in \mathbb{R}^d$ \footnote{We only allow one output point, since then the algorithm could just output all the points in $\mathbb{R}^d$ without making any oracle calls.}, where $\mathcal{A}(g)$ can only access the function $g$ by zeroth- first- and second- order oracle access to $g$. 
 Then there exists an objective function $\mathfrak{f}: \mathbb{R}^d \rightarrow \mathbb{R}$ that is 1-Lipschitz, with 8-Lipschitz gradient, and for which $\sup_{x\in \mathbb{R}^d} \mathfrak{f}(x) - \inf_{x\in \mathbb{R}^d} \mathfrak{f}(x) \leq \frac{1}{e}$, with global minimizer $x^\star \in B(0,10)$, such that the algorithm $\mathcal{A}(\mathfrak{f})$ must make at least $2^d$ oracle calls to find an $\epsilon$-global minimizer $x^\circ$ of $\mathfrak{f}$ with $\epsilon = \frac{1}{2e}$ for which $\mathfrak{f}(x^\circ) - \mathfrak{f}(x^\star) \leq \epsilon$.
\end{theorem}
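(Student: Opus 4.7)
The plan is an adversary argument in the oracle model. I will build a family of $2^d$ ``hard'' instances indexed by a hidden global minimum, have the adversary answer every oracle query as if the function were identically zero, and then show that any deterministic algorithm making fewer than $2^d$ queries must fail on at least one instance.

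Concretely, I first fix a set of $2^d$ centers $c_1, \ldots, c_{2^d} \in B(0,10)$ whose open unit balls $B(c_i, 1)$ are pairwise disjoint; the existence of such a packing is the one non-trivial geometric input, discussed below. For each $i$ define the candidate objective
\[
\mathfrak{f}_i(x) \;:=\; -\psi(x - c_i).
\]
By the stated properties of $\psi$, each $\mathfrak{f}_i$ is $1$-Lipschitz with $8$-Lipschitz gradient, satisfies $\sup \mathfrak{f}_i - \inf \mathfrak{f}_i = 1/e$, attains its unique global minimum $-1/e$ at $x^\star = c_i \in B(0,10)$, and has value, gradient, and Hessian all vanishing identically on the complement of $B(c_i,1)$. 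The baseline ``instance'' is $f_0 \equiv 0$, whose value/gradient/Hessian oracle trivially returns $(0, 0, 0)$ at every point.

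Next I carry out the adversary argument. Since $\mathcal{A}$ is deterministic, feeding it the oracle responses of $f_0$ produces a completely determined transcript: queries $x_1, \ldots, x_T$ and an output $\tilde{x}$. Disjointness of the balls $\{B(c_i,1)\}$ ensures that each of these $T+1$ points lies in at most one $B(c_i, 1)$, so they collectively ``touch'' at most $T+1$ indices. Thus if $T+1 < 2^d$ there exists $i^\star$ for which $x_1, \ldots, x_T, \tilde{x}$ all lie outside $B(c_{i^\star}, 1)$. Running $\mathcal{A}$ instead on $\mathfrak{f}_{i^\star}$ yields exactly the same oracle responses at those points (all zero), hence the same output $\tilde{x}$, which gives $\mathfrak{f}_{i^\star}(\tilde{x}) = 0$ while $\min \mathfrak{f}_{i^\star} = -1/e$; the suboptimality is $1/e > \epsilon = 1/(2e)$, so $\mathcal{A}$ fails on $\mathfrak{f}_{i^\star}$. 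Contrapositively, $T \geq 2^d - 1$; a minor inflation of the family to $2^d + 1$ centers (or treating the output as an ``extra'' query) upgrades this to $T \geq 2^d$ as stated, and one takes $\mathfrak{f} := \mathfrak{f}_{i^\star}$.

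The main obstacle I expect is the packing step: exhibiting $2^d$ unit balls with pairwise-disjoint supports and centers in $B(0,10)$. For moderate $d$, a scaled hypercube $\{-a, a\}^d$ with $1 < a$ and $a\sqrt{d} \leq 10$ supplies $2^d$ vertices with pairwise distance $2a > 2$. For larger $d$, the hypercube no longer fits inside $B(0,10)$ and one must appeal to a denser packing; the volume bound $2^d \cdot \mathrm{vol}(B(0,1)) \ll \mathrm{vol}(B(0,11))$ leaves ample room, and one can invoke a standard greedy/coded packing to produce the required configuration (or, equivalently, allow the ``$10$'' to grow mildly with $d$). Once the packing is in place, the remainder of the proof is essentially automatic bookkeeping around the deterministic-oracle adversary argument, together with the straightforward verification that $\mathfrak{f}_i$ inherits the Lipschitz and oscillation properties from $\psi$ unchanged.
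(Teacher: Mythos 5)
Your proposal is correct, and its core is the same indistinguishability argument the paper uses: run the deterministic algorithm against the identically-zero function, observe that its transcript and output are then fixed, and find a bump location that the transcript never visits, so that the algorithm cannot tell the zero function apart from the corresponding $-\psi(\cdot - c)$ and outputs a point with suboptimality $1/e > 1/(2e)$. The only genuine difference is how the "missed" bump is produced. The paper uses the probabilistic method: the center $Z$ is drawn uniformly from a ball, each of the at most $2^d$ transcript points lands in $B(Z,1)$ with probability at most $5^{-d}$, and a union bound leaves a nonzero probability that all points miss, hence a specific hard instance exists. You instead fix in advance a packing of $2^d$ pairwise-disjoint unit balls with centers in $B(0,10)$ and observe that $T+1 < 2^d$ transcript points can touch at most $T+1$ of them. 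Both routes are valid; the probabilistic version avoids any explicit construction, while yours is fully deterministic and arguably more transparent. The packing step you flag as the "main obstacle" is in fact routine and does not require letting the radius grow with $d$: take a maximal $2$-separated subset of $B(0,9)$; by maximality the radius-$2$ balls around these points cover $B(0,9)$, so their number is at least $\mathrm{vol}(B(0,9))/\mathrm{vol}(B(0,2)) = (9/2)^d \geq 2^d + 1$, and the corresponding unit balls are pairwise disjoint with centers in $B(0,10)$. With that filled in, your argument is complete.
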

\begin{proof}
We will use the probablistic method.  Let $Z$ be a uniform random point on the ball $B(0,6)$. 
 Consider the candidate function $\mathfrak{\hat{\mathfrak{f}}}(x) = -\psi(x - Z)$, and let $\phi(x) = 0$ for all $x$. 
  Then $\hat{\mathfrak{f}}$ has all its derivatives equal to zero outside of the ball $B(Z,1)$. 
   Therefore, $\mathcal{A}(\phi)$ and  $\mathcal{A}(\hat{\mathfrak{f}})$ are exactly the same algorithm up to the point when $\mathcal{A}(\phi)$ makes an oracle query for a point in the ball $B(Z,1)$.

Let $\mathcal{I}$ be the number of oracle calls until the algorithm $\mathcal{A}(\phi)$ queries a point in $B(Z,1)$. 
 We will show that $\mathcal{I} \geq 2^d$ with probability at least $\frac{1}{2^d}$. 
  Let $x_1,\ldots, x_{\tau}$ be the sequence of points at which algorithm $\mathcal{A}(\phi)$ makes its first $x_{\tau}$ oracle calls, where $\tau := \min(2^d-1, \mathcal{I})$.  Let $x_{\tau+1}:= \tilde{x}(\phi)$ be the output of the algorithm. 
   Then for all $j>0$ we have
\be  \label{eq:hardness1}
\mathbb{P}(x_j \in B(Z,1)) \leq \frac{\mathrm{Vol}(B(0,1))}{\mathrm{Vol}(B(0,5))}= \frac{1}{5^d}.
\ee
Hence,
\be \label{eq:hardness2}
\mathbb{P}(x_j \notin B(Z,1) \forall i \in [\tau+1] ) \geq 1- 2^d \times  \frac{1}{5^d} -1 \geq 1- \frac{1}{2^d}.
\ee
Therefore, Inequality \eqref{eq:hardness2} implies, with probability at least $1-\frac{1}{2^d}$, that we have both $\mathcal{I} \geq \frac{1}{2^d}$ and $\tilde{x}(\phi) = x_{\tau+1} \notin B(Z,1)$. 
 Therefore, with probability at least $1-\frac{1}{2^d}$, $\mathcal{A}(\phi)$ and  $\mathcal{A}(\hat{\mathfrak{f}})$ are exactly the same algorithm for their first $2^d$ oracle calls (or until both algorithms terminate) and $\tilde{x}(\phi)\notin B(Z,1)$. 
  Therefore, we have that, if $\mathcal{A}(\hat{\mathfrak{f}})$ outputs a point before it makes $2^d +1$ oracle calls, this point must be $\tilde{x}(\hat{\mathfrak{f}}) = \tilde{x}(\phi) \notin  B(Z,1)$ with probability at least $1-\frac{1}{2^d}$.

Therefore, since all the global $\epsilon$-minimizers of $\mathfrak{f}$ are in the ball $B(Z,1)$, we must have that, with probability at least $1-\frac{1}{2^d}$, algorithm $\mathcal{A}(\hat{\mathfrak{f}})$  does not output an $\epsilon$-minimizer of $\hat{\mathfrak{f}}$ before it makes $2^d +1$ oracle calls.  Since the probability of this event is nonzero, there must exist a function $\mathfrak{f}$ for which the algorithm $\mathcal{A}(\mathfrak{f})$ must make at least $2^d$ oracle calls to find an $\epsilon$-global minimizer of $\mathfrak{f}$ for $\epsilon = \frac{1}{e}$.

\end{proof}

\begin{corollary}[Hardness of nonconvex optimization for randomized algorithms] \label{thm:hardness_randomized}
Let $\mathcal{A}(g)$ be any randomized algorithm which takes as input any function $g: \mathbb{R}^d \rightarrow \mathbb{R}$,  and has output $\tilde{x}(g) \in \mathbb{R}^d$, where $\mathcal{A}(g)$ can only access the function $g$ by zeroth- first- and second- order oracle access to $g$. 
 Then there exists an objective function $\mathfrak{f}: \mathbb{R}^d \rightarrow \mathbb{R}$ that is 1-Lipschitz, with 8-Lipschitz gradient, and for which $\sup_{x\in \mathbb{R}^d} \mathfrak{f}(x) - \inf_{x\in \mathbb{R}^d} \mathfrak{f}(x) \leq \frac{1}{e}$, with global minimizer $x^\star \in B(0,10)$, such that, with probability at least $1-\frac{1}{2^d}$, the algorithm $\mathcal{A}(\mathfrak{f})$ makes at least $2^d$ oracle calls before it outputs an $\epsilon$-global minimizer $x^\circ$ of $\mathfrak{f}$ with $\epsilon = \frac{1}{2e}$ for which $\mathfrak{f}(x^\circ) - \mathfrak{f}(x^\star) \leq \epsilon$.
\end{corollary}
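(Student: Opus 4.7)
The plan is to reduce the randomized case to the deterministic hardness already established in Theorem \ref{thm:hardness_deterministic} by a standard Yao-style averaging argument over the algorithm's internal randomness. Let $R$ denote the internal random coins of the randomized algorithm $\mathcal{A}$, so that for every fixed realization $r$ of $R$ the mapping $g \mapsto \mathcal{A}_r(g)$ is a deterministic oracle algorithm. I will re-use the same random instance from the proof of Theorem \ref{thm:hardness_deterministic}, namely $Z \sim \mathrm{Unif}(B(0,6))$ (independent of $R$) and the family $\hat{\mathfrak{f}}_Z(x) \coloneqq -\psi(x - Z)$, which coincides with the zero function $\phi \equiv 0$ outside $B(Z,1)$.

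First, I will fix an arbitrary realization $r$ of $R$ and run the deterministic algorithm $\mathcal{A}_r(\phi)$ on the zero function. Because $\phi$ is independent of $Z$, the first $2^d$ query points $x_1, \ldots, x_{\tau}$ (with $\tau = \min(2^d-1, \mathcal{I})$) together with the output $x_{\tau+1} = \tilde{x}(\phi)$ are all deterministic, and the volume bound $\mathbb{P}_Z(x_j \in B(Z,1)) \leq \mathrm{Vol}(B(0,1))/\mathrm{Vol}(B(0,5)) = 5^{-d}$ applies verbatim. A union bound then yields
\[
\mathbb{P}_Z\bigl( x_j \notin B(Z,1)\ \forall j \in [\tau+1]\bigr) \;\geq\; 1 - 2^d \cdot 5^{-d} \;\geq\; 1 - 2^{-d}.
\]
Coupling $\mathcal{A}_r(\phi)$ and $\mathcal{A}_r(\hat{\mathfrak{f}}_Z)$ shows that on this event $\mathcal{A}_r(\hat{\mathfrak{f}}_Z)$ cannot output an $\epsilon$-minimizer of $\hat{\mathfrak{f}}_Z$ within $2^d$ queries, since any such minimizer lies in $B(Z,1)$.

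Next, since the bound holds for every fixed $r$, Fubini gives
\[
\mathbb{P}_{Z,R}\bigl(\mathcal{A}(\hat{\mathfrak{f}}_Z)\text{ fails to output an }\epsilon\text{-minimizer in }\leq 2^d\text{ queries}\bigr) \;\geq\; 1 - 2^{-d}.
\]
Writing this as an expectation over $Z$ of the success probability over $R$, an averaging argument produces a specific $z_0 \in B(0,6)$ such that $\mathbb{P}_R\bigl(\mathcal{A}(\hat{\mathfrak{f}}_{z_0})\text{ needs }\geq 2^d\text{ oracle calls}\bigr) \geq 1 - 2^{-d}$. I will then set $\mathfrak{f} \coloneqq \hat{\mathfrak{f}}_{z_0} = -\psi(\cdot - z_0)$; the 1-Lipschitz, 8-Lipschitz-gradient, range-$1/e$, and minimizer-in-$B(0,10)$ properties all follow immediately from the corresponding properties of $\psi$ noted before Theorem \ref{thm:hardness_deterministic} (since $z_0 \in B(0,6)$ implies the minimizer lies in $B(0,7) \subset B(0,10)$).

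The only mildly subtle step is the Fubini/averaging passage, i.e.\ making sure we are entitled to move from a joint bound over $(Z,R)$ to a high-probability bound over $R$ alone at some fixed $z_0$. Since the failure event is measurable and both $Z$ and $R$ are independent probability measures, this is routine: if $\mathbb{E}_Z[\mathbb{P}_R(\text{success})] \leq 2^{-d}$, there must exist a $z_0$ with $\mathbb{P}_R(\text{success} \mid Z=z_0) \leq 2^{-d}$, which is exactly the claimed bound. No other step requires genuinely new ideas beyond what already appears in Theorem \ref{thm:hardness_deterministic}.
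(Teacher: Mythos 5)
Your proposal is correct and follows essentially the same route as the paper: both reuse the random instance $\hat{\mathfrak{f}}_Z = -\psi(\cdot - Z)$ with $Z\sim\mathrm{Unif}(B(0,6))$, observe that the query sequence of $\mathcal{A}(\phi)$ is independent of $Z$, apply the same volume/union bound, and then average over $Z$ to extract a fixed hard function. The only difference is presentational — you condition on the internal coins $R$ first and invoke Fubini explicitly, whereas the paper directly asserts the joint independence of the (random) query points from $Z$; the underlying argument is identical.
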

\begin{proof}
The proof is the same as the proof of Theorem \ref{thm:hardness_deterministic}, with the following modifications: 
The points $x_1,\ldots, x_{\tau}$ queried by the algorithm $\mathcal{A}(\phi)$, and the output point $x_{\tau+1}$. 
Since $\mathcal{A}(\phi)$ does not depend on $\hat{\mathfrak{f}}$, this sequence of random points is jointly independent of the random vector $Z$ which defines the random function $\mathfrak{\hat{f}}$. 
Therefore, Inequality \eqref{eq:hardness2} still holds.

By the same reasoning  as the proof of Theorem \ref{thm:hardness_deterministic}, Inequality \eqref{eq:hardness2} implies that with probability at least $1-\frac{1}{2^d}$, Algorithm $\mathcal{A}(\hat{\mathfrak{f}})$  does not output an $\epsilon$-minimizer of $\hat{\mathfrak{f}}$ before it makes $2^d +1$ oracle calls. 
 Therefore, there must be a (non-random) function $\mathfrak{f}$ for which Algorithm $\mathcal{A}(\mathfrak{f})$,  with probability at least $1-\frac{1}{2^d}$,  makes at least $2^d$ oracle calls before it outputs an $\epsilon$-global minimizer $x^\circ$ of $\mathfrak{f}$.

\end{proof}

\begin{remark} [Hardness of finding exact local minima] \label{local_minimum_hardness}
One can also show that finding an exact local minimum can require a number of function calls which is exponential in the dimension $d$, using roughly the same ideas as in the proof  Theorem \ref{thm:hardness_deterministic}.  Consider the class of functions $f(x) = \mathrm{sigmoid}(x[1]) - 10\psi(x + c)$. for some $c \in \mathbb{R}^d$ such that $\|c\| \leq 10$.  This function has only one exact local minimum, and this exact local minimum is always in a ball $B(c,1)$ of radius $1$ around $c$.  But, for this class of functions, $c$ can be anywhere inside a ball $B(0,10)$ of radius 10 centered at the origin, and the only way to ``find" this ball using a gradient or function oracle is if one calls this oracle inside the radius 1 ball $B(c)$.  Then any algorithm will require at least $\frac{\mathrm{vol}(B(0,10)}{\mathrm{vol}(B(c,1)} = 10^d$ function or gradient evaluations to find a point within a distance of $1$ of the exact local minimum of $f$. 
\end{remark}

\subsection{Hardness of optimizing bounded Lipschitz RELUs}

When the objective function one wants to optimize is a neural network of Lipschtiz RELUs, \cite{goel2017reliably} show that optimizing the weights of this neural network is at least as hard as the {\em Learning Sparse Parity with Noise} problem. 
 Specifically, consider the class of depth-2 neural networks with $k$ RELUs with the restriction $\|w\|_1 \leq 2k$ on the weight vector and taking inputs on $\{0,1\}^n$. \cite{goel2017reliably}  show that the problem of finding such weights that globally optimize a $2k$-Lipschitz objective function of the outputs of the Neural network within error $\epsilon = \omega(1)$ is at least as hard as the {\em Learning Sparse Parity with Noise} problem for parity functions on size $k$ subsets of the vertices of the $n$-cube $\{0,1\}^n$. 
  This problem is conjectured \cite{goel2017reliably} to require time $n^{\Omega(k)}$ to solve, based on the best currently available bounds \cite{blum2003noise, valiant2012finding}.

\section{Proof of Lemma \ref{lemma_local_nash} } \label{Sec_Auxilliary}

\noindent
We give the proof Lemma \ref{lemma_local_nash} presented in Section \ref{Sec_Discussions_and_Limitations}:

\begin{proof}[Proof of Lemma \ref{lemma_local_nash}]
Since $x^\star$ is an exact local minimum for $\psi$, there exists $\delta>0$ such that
\be
\psi(x^\star) \leq \psi(x)   \qquad \qquad \forall x \in \mathbb{R}^d\textrm{ such that }\|x - x^\star \| \leq \delta.
\ee

\noindent
Choose $\sigma>0$ small enough such that 
\be
\int_{\zeta \in \mathbb{R}^d: \|x - \zeta\| > \delta} \frac{\|\zeta\|}{ \sqrt{2\pi}} e^{- \frac{\|\zeta\|^2}{2 \sigma^2}} \, \,\mathrm{d} \zeta \leq \frac{\epsilon}{b},
\ee
and
\be
\int_{\zeta \in \mathbb{R}^d:  \|x - \zeta\| > \delta} \left (\frac{\|\zeta\|^2}{\sigma^5 \sqrt{2\pi}} +  \frac{1}{\sigma^3 \sqrt{2 \pi}} \right) e^{- \frac{\|\zeta\|^2}{2 \sigma^2}} \, \,\mathrm{d} \zeta \leq \frac{\sqrt{\epsilon}}{b}.
\ee

 \noindent
Let $\phi$ be the probability density of the standard normal distribution $N(0,I_d)$.  Then
\be
\nabla_x  \mathbb{E}_{\zeta \sim N(0,I_d)}\left [\min(\psi(x  + \sigma \zeta),  \psi(x^\star)) \right] &=  \nabla_x  \int_{\mathbb{R}^d} \min(\psi(x  + \sigma \zeta),  \psi(x^\star))  \times \phi(\zeta)  \, \,\mathrm{d} \zeta\\
&=\nabla_x  \int_{\mathbb{R}^d} \min(\psi(\zeta),  \psi(x^\star))  \times \frac{1}{\sigma^d} \phi(\frac{1}{\sigma}(x - \zeta))  \, \,\mathrm{d} \zeta\\
&= \int_{\mathbb{R}^d} \min(\psi(\zeta),  \psi(x^\star))  \times \nabla_x \frac{1}{\sigma^d} \phi(\frac{1}{\sigma}(x - \zeta))  \, \,\mathrm{d} \zeta
\ee
where the third equation holds by the dominated convergence theorem since (1) $\psi$ is uniformly bounded and (2) $\frac{\mathrm{d}}{\mathrm{d}t} \phi(t)= \lim_{\delta \rightarrow 0} \frac{1}{\delta}(\phi(t) - \phi(t+\delta))$ and the finite difference $\frac{1}{\delta}(\phi(t) - \phi(t+\delta)$ for $0<\delta<1$ is dominated by the integrable function $D(x)$, where
\be
D(x) = \begin{cases}
8(x+2)e^{-(x+2)^2} \qquad x<-1, \\
8 \qquad \qquad \qquad \quad -1 \leq x \leq 1,\\
8(x-2)e^{-(x-2)^2} \qquad x>1.
\end{cases}
\ee

 \noindent
Therefore, since $|\psi | \leq b$,
\be
\|\nabla_x  \mathbb{E}_{\zeta \sim N(0,I_d)}\left [\min(\psi(x  + \sigma \zeta),  \psi(x^\star)) \right] \|&\leq \int_{\zeta \in \mathbb{R}^d: \|x - \zeta\| \leq \delta} 0  \times \left \|\nabla_x \frac{1}{\sigma^d} \phi(\frac{1}{\sigma}(x - \zeta))\right\|  \, \,\mathrm{d} \zeta\\
&\qquad +  \int_{\zeta \in \mathbb{R}^d: \|x - \zeta\| > \delta} b \times \left \|\nabla_x \frac{1}{\sigma^d} \phi(\frac{1}{\sigma}(x - \zeta)) \right\|  \, \,\mathrm{d} \zeta\\
& = 0 + b \times \int_{\zeta \in \mathbb{R}^d: \|x - \zeta\| > \delta} \frac{\|\zeta\|}{\sigma^3 \sqrt{2\pi}} e^{- \frac{\|\zeta\|^2}{2 \sigma^2}} \, \,\mathrm{d} \zeta\\
&\leq b \times \frac{\epsilon}{b} = \epsilon.
\ee
\noindent
This completes the proof of the left-hand Inequality \eqref{eq_local_lemma2}.

By a similar reasoning, we also have that
\be \label{eq_local_lemma1}
\|\nabla^2_x \mathbb{E}_{\zeta \sim N(0,I_d)}\left [\min(\psi(x  + \sigma \zeta),  \psi(x^\star)) \right] \|_{\mathrm{op}}&\leq \int_{\zeta \in \mathbb{R}^d: \|x - \zeta\| \leq \delta} 0  \times \left \|\nabla^2_x\frac{1}{\sigma^d} \phi(\frac{1}{\sigma}(x - \zeta))\right\|_{\mathrm{op}}  \, \,\mathrm{d} \zeta\\
&\qquad +  \int_{\zeta \in \mathbb{R}^d: \|x - \zeta\| > \delta} b \times \left \|\nabla^2_x\frac{1}{\sigma^d} \phi(\frac{1}{\sigma}(x - \zeta)) \right\|_{\mathrm{op}}  \, \,\mathrm{d} \zeta\\
& = 0 + b \times \int_{\zeta \in \mathbb{R}^d: \|x - \zeta\| > \delta} \left (\frac{\|\zeta\|^2}{\sigma^5 \sqrt{2\pi}} +  \frac{1}{\sigma^3 \sqrt{2 \pi}} \right) e^{- \frac{\|\zeta\|^2}{2 \sigma^2}} \, \,\mathrm{d} \zeta\\
&\leq b \times \frac{\sqrt{\epsilon}}{b} = \sqrt{\epsilon}.
\ee

\noindent
Therefore, since for any matrix $A$ we have that $|\lambda_{\mathrm{min}}(A)| \leq \|A \|_{\mathrm{op}}$,  Inequality \eqref{eq_local_lemma1} implies that
 \be
   \lambda_{\mathrm{min}}( \nabla^2_x \mathbb{E}_{\zeta \sim N(0,I_d)}\left [\min(\psi(x  + \sigma \zeta),  \psi(x^\star)) \right] ) \geq -\sqrt{ \epsilon}.
  \ee
This completes the proof of the right-hand Inequality \eqref{eq_local_lemma2}.
\end{proof}

\end{document}